\newtheorem{theorem}{Theorem}
\newtheorem{corollary}{Corollary}
\newtheorem{problem}{Problem}
\newtheorem{lemma}{Lemma}
\newtheorem{myclaim}{Claim}
\newtheorem{conjecture}{Conjecture}
\newtheorem{remark}{Remark}
\newcommand{\qedclaim}{\hfill $\diamond$ \medskip}
\newenvironment{proofclaim}{\noindent\ignorespaces{\em Proof.}}{\hfill\qedclaim\par\noindent} 
\newsavebox{\mybox}
\newcounter{subfloat}
\renewcommand{\thesubfloat}{\alph{subfloat}}
\newcommand{\image}[2]{%
  \stepcounter{subfloat}%
  \begin{tabular}[t]{@{}c@{}}
  #2 \\
  (\thesubfloat) #1
  \end{tabular}%
}
\begin{document}

\title{A story of diameter, radius and Helly property}
\author[1,2]{Guillaume Ducoffe}
\author[3]{Feodor F. Dragan}
\affil[1]{\small National Institute for Research and Development in Informatics, Romania}
\affil[2]{\small University of Bucharest, Romania}
\affil[3]{\small Computer Science Department, Kent State University, Kent, Ohio, USA}
\date{}

\maketitle

\begin{abstract}
A graph is Helly if every family of pairwise intersecting balls has a nonempty common intersection.
Motivated by previous work on dually chordal graphs and graphs of bounded distance VC-dimension (with the former being a subclass of Helly graphs and the latter being a particular case of graphs of bounded fractional Helly number, respectively) we prove several new results on the complexity of computing the diameter and the radius on Helly graphs and related graph classes.
\begin{itemize}
\item First, we present  algorithms which given an $n$-vertex $m$-edge Helly graph $G$ as input, compute w.h.p. its radius and its diameter in time $\tilde{\cal O}(m\sqrt{n})$. Our algorithms are based on the Helly property and on several implications of the unimodality of the eccentricity function in Helly graphs: every vertex of locally minimum eccentricity is a central vertex.
\item Then, we focus on $C_4$-free Helly graphs, which include, amongst other subclasses, bridged Helly graphs and so, chordal Helly graphs and hereditary Helly graphs. For the $C_4$-free Helly graphs, we present linear-time algorithms for computing the eccentricity of all vertices. Doing so, we generalize previous results on strongly chordal graphs to a much larger subclass.
\item Finally, we derive from our findings on chordal Helly graphs a more general one-to-many reduction from diameter computation on chordal graphs to either diameter computation on split graphs or the {\sc Disjoint Set} problem. Therefore, split graphs are in some sense the {\em only} hard instances for diameter computation on chordal graphs. As a byproduct of our reduction  the eccentricity of all vertices in a chordal graph can be approximated in ${\cal O}(m\log{n})$ time with an additive one-sided error of at most one. This answers an open question of [Dragan, IPL 2019]. We also get from our reduction that on any subclass of chordal graphs with constant VC-dimension the diameter can be computed in truly subquadratic time. 
\end{itemize}
These above results are a new step toward better understanding the role of abstract geometric properties in the fast computation of metric graph invariants.
\end{abstract}

\section{Introduction}\label{sec:introduction}

For any undefined graph terminology, see~\cite{BoM08,Die12}.
We study the fundamental problems of computing, for a given undirected unweighted graph, its {\em diameter} and its {\em radius}.
There is a textbook algorithm for solving both problems in ${\cal O}(nm)$ time on $n$-vertex $m$-edge graphs.
However, it is a direct reduction to All-Pairs Shortest-Paths (APSP), that is a seemingly more complex problem with a much larger (quadratic-size) output than for the diameter and radius problems.
On one hand, there is a long line of work presenting more efficient -- often linear-time -- algorithms for computing the diameter and/or the radius on some special graph classes~\cite{AVW16,BCD98,BHM18,Cab18,ChD94,CDV02,CDHP01,CDP18,Dam16,DrN00,Dra89,Duc19,DHV19+b,DHV19+a,Epp00,FaP80,GKHM+18,Ola90}.
On the other hand, under the Strong Exponential-Time Hypothesis (SETH) and the Hitting Set Conjecture (HS), respectively, we cannot solve either of these two problems in truly subquadratic-time~\cite{AVW16,RoV13}\footnote{
By truly subquadratic we mean a running-time in ${\cal O}(n^am^b)$, for some positive $a,b$ such that $a + b< 2$.}.

We aim at characterizing the graph classes for which these above (SETH- or HS-) ``hardness'' results do not hold.
Ideally we would like to derive a dichotomy theorem, not unlike those proved in~\cite{AVW16} but covering many more subquadratic-time solvable special cases from the literature.
Our work is part of a recent series of papers, with co-authors, where we try to reach this objective based on tools and concepts from Computational Geometry~\cite{Duc19,DHV19+b,DHV19+a}.
-- See also~\cite{AVW16,Cab18} for some pioneering works in this line of research. --
Specifically, a class ${\cal H}$ of hypergraphs has {\em fractional Helly number} at most $k$ if for any positive $\alpha$ there is some positive $\beta$ such that, in any subfamily of hyperedges in a hypergraph of ${\cal H}$, if there is at least a fraction $\alpha$ of all the $k$-tuples of hyperedges with a non-empty common intersection,  then there exists an element that is contained in a fraction at least $\beta$ of all hyperedges in this subfamily. 
Then, the fractional Helly number of a graph class ${\cal G}$ is the fractional Helly number of the family of the ball hypergraphs of all graphs in ${\cal G}$.
For instance Matousek proved that every hypergraph of {\em VC-dimension} $d$ has its fractional Helly number that is upper bounded by a function of $d$~\cite{Mat04}.
It implies that the graphs of bounded distance VC-dimension, studied in~\cite{BoC14,BoT15,CEV07,DHV19+a}, have a bounded fractional Helly number.
Note that the latter graphs comprise planar graphs~\cite{CEV07}, bounded clique-width graphs~\cite{BoT15} and interval graphs~\cite{DHV19+a} amongst other subclasses of interest.
Motivated by the results from~\cite{DHV19+a} on diameter and radius computation in these graphs, we ask whether we can compute the diameter and the radius in truly subquadratic time within graph classes of constant fractional Helly number.
As a first step toward resolving this question, our current research focuses on the simpler class of {\em Helly graphs}.
Recall that a graph is Helly if any family of pairwise intersecting balls has a non-empty common intersection.
So, in particular, Helly graphs have fractional Helly number two.
Furthermore, we stress that the Helly graphs are one of the most studied classes in Metric Graph Theory ({\it e.g.}, see the survey~\cite{BaC08} and the papers cited therein). 
 Indeed, this is partly because of the very nice property that every graph is an isometric subgraph of some Helly graph, thereby making of the latter the discrete equivalent of hyperconvex metric spaces~\cite{Dre84,Isb64}.

\begin{conjecture}\label{conj:helly}
There is a positive $\varepsilon$ such that, for every $n$-vertex $m$-edge Helly graph, we can compute its diameter and its radius in time ${\cal O}(mn^{1-\varepsilon})$.
\end{conjecture}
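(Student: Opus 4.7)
The plan is to leverage the two structural features the abstract already signals: the \emph{Helly property} on balls, and the \emph{unimodality} of the eccentricity function, i.e.\ every vertex of locally minimum eccentricity is a central vertex. My target would be $\tilde{\cal O}(m\sqrt{n})$ (so $\varepsilon = 1/2$ up to logs), matching the first bullet of the abstract.

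First I would tackle the radius. Starting from an arbitrary vertex $v_0$, run one BFS, locate a vertex $u$ with $d(v_0,u) = \mathrm{ecc}(v_0)$, and then perform gradient descent on the eccentricity function: at each step move from the current vertex to a neighbor of strictly smaller eccentricity, which by unimodality must exist unless we are already at a central vertex. Each descent step costs one BFS (${\cal O}(m)$ time), so the bottleneck is bounding the number of steps. To cut this to $\tilde{\cal O}(\sqrt{n})$, I would draw a random sample $S$ of $\Theta(\sqrt{n})$ vertices, run BFS from each once and for all, and use the resulting distance vectors as coarse eccentricity estimates $\max_{s \in S} d(\cdot, s)$. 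A Helly argument should then guarantee that, for a shortest descent path from a random start to the center, with high probability one of the sampled sources lies close enough to let us ``jump'' to a near-central vertex in a single phase, after which only $\tilde{\cal O}(\sqrt{n})$ further descent BFS calls are needed.

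For the diameter, let $c$ be a central vertex (from the radius step) and $r = \mathrm{rad}(G)$. Standard double sweep from $c$ already gives $r \leq \mathrm{diam}(G) \leq 2r$ together with an explicit lower bound. Using the Helly property, I would argue that for any candidate value $D$, the balls $B(u, D - \mathrm{ecc}_S(u))$ over the vertices ranked highest by their sample estimate $\mathrm{ecc}_S(u) := \max_{s \in S} d(u,s)$ are pairwise intersecting, so a common witness vertex exists whose single BFS simultaneously certifies or refutes many eccentricity lower bounds. Combined with the random sample used for the radius, this should let us with high probability identify a diametral pair after $\tilde{\cal O}(\sqrt{n})$ additional BFS calls.

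The main obstacle I foresee is turning the qualitative unimodality and Helly statements into a quantitative $\sqrt{n}$-step guarantee. Unimodality ensures that descent terminates at the center, but an adversarial graph could string out a long path of slowly decreasing eccentricities; the random sample must provably shortcut such paths. Proving this likely requires a finer metric feature of Helly graphs, for instance the gatedness of balls or the existence of median-like vertices for every triple, to show that a uniformly sampled BFS source hits the descent trajectory at a point that roughly halves the remaining distance to the center with high probability. This balanced-split lemma is where I expect most of the technical work to lie, and its analogue for the high-eccentricity region is what would carry the argument over to the diameter bound.
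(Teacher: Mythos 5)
The radius half of your plan has two unresolved gaps that the paper's argument avoids entirely. First, a single step of eccentricity descent is not one BFS: to find a neighbour of strictly smaller eccentricity you must evaluate $e(\cdot)$ on up to $\deg(v)$ neighbours, and even granting one BFS per step the number of steps is only bounded by $d(v_0,C(G))$, which can be $\Theta(n)$ (a long path is Helly). Second, the ``jump'' that is supposed to rescue this is precisely the missing lemma: your sampled estimate $\max_{s\in S}d(u,s)$ is only a lower bound on $e(u)$ and carries no quantitative information about $d(u,C(G))$, so it cannot certify that any vertex is near-central; the ``balanced-split lemma'' you defer to is the entire proof, and no Helly-based reason is given for it to hold. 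The paper sidesteps descent altogether: it samples $\tilde{\cal O}(\sqrt{n})$ vertices $U$ and declares $v$ a candidate iff $U\subseteq N^r[v]$ (Lemma~\ref{lem:set-cover}); any accepted $v$ has $|N^r[v]|\geq (1-n^{-1/2})n$ w.h.p., the remaining $\sqrt{n}$ uncovered vertices are checked by explicit BFS, and the only Helly input needed is $rad(G)=\left\lceil diam(G)/2\right\rceil$ (Lemma~\ref{lem:unimodal}), which upgrades a ``$diam\leq 2r$'' certificate to the exact radius via binary search on $r$.

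For the diameter the gap is more basic: everything you describe would at best yield a $+1$-approximation, and the paper stresses that this is not enough (chordal graphs admit a linear-time $+1$-approximation of the diameter, yet computing it exactly is SETH-hard there). The assertion that the balls $B(u,D-\mathrm{ecc}_S(u))$ pairwise intersect is not justified, and since $\mathrm{ecc}_S(u)\leq e(u)$ these radii are too generous for a common point to certify or refute any exact eccentricity. The paper's exact algorithm rests on two ingredients absent from your sketch: an ${\cal O}(km)$ Helly-specific routine listing all vertices of eccentricity at most $k$ (Theorem~\ref{thm:all-ecc-param}, a partition-refinement argument applying the Helly property directly to families of balls), which settles the case $diam(G)\leq 6\sqrt{n}$ by binary search; and, for large diameter, the exact identity $e(v)=dist(v,C(G))+rad(G)$ (Lemma~\ref{lem:ecc-formula}), which guarantees that every vertex $u$ on a shortest path from a diametral end $v$ towards the centre satisfies $e(v)=dist(u,v)+e(u)$ with equality, so that a random sample of density $\tilde{\cal O}(1/\sqrt{n})$ hitting the first $\sqrt{n}$ vertices of that path recovers $e(v)$ with no error. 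Your instinct that sampling must hit a trajectory is the right one for this second half, but without that exact identity the recovered estimate is only approximate and the argument does not close.
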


To the best of our knowledge, until this work the only known result  toward proving Conjecture~\ref{conj:helly} was a positive answer to this conjecture for the class of dually chordal graphs~\cite{BCD98}.
In fact, the diameter and the radius of a dually chordal graph can be computed in {\em linear time}, that is optimal. We note that obtaining such a strong complexity result for the general Helly graphs looks more challenging. Hence, we also studied whether stronger versions of Conjecture~\ref{conj:helly} could hold true
on more restricted subclasses, such as {\em chordal} Helly  graphs or more generally $C_4$-free Helly graphs.
-- This latter choice was also partly motivated by a nice characterization of {\em hereditary} Helly graphs: indeed, they are exactly the $3$-sun-free chordal graphs~\cite{Dra89}. --
We stress that it is already SETH-hard to compute the diameter on chordal graphs in truly subquadratic time~\cite{BCH16}.
On the positive side, there exist linear-time algorithms for computing the radius on general chordal graphs~\cite{ChD94}, and the diameter on various subclasses of chordal graphs, {\it e.g.} interval graphs, directed path graphs and strongly chordal graphs~\cite{BCD98,CDHP01,DNB97}.
Most of these special cases, including the three aforementioned examples, are strict subclasses of chordal Helly graphs.
As a result, our work pushes forward the tractability border for diameter computation on chordal graphs and beyond.

\subsection*{Our Contributions}
Our first main result in the paper is a  positive resolution of Conjecture~\ref{conj:helly} in the general case. Specifically, we present truly subquadratic-time algorithms for computing both the radius and the diameter of Helly graphs (Theorem~\ref{thm:helly-rad} and Theorem~\ref{thm:helly-diam}). Although for a Helly graph we can compute its diameter from its radius~\cite{Dra89} -- a property which is not known to hold for general graphs -- we present separate algorithms for diameter and radius computations. Indeed, our approach for computing the radius can be applied to a broader class than the Helly graphs, both as an exact and approximation algorithm. 
 Our algorithms run in time $\tilde{\cal O}(m\sqrt{n})$ w.h.p., and they use as their main ingredients several consequences of the unimodality of the eccentricity function in Helly graphs~\cite{Dra89}: every local minimum of the eccentricity function in a Helly graph is a global minimum.

\smallskip
Next we focus on the class of $C_4$-free Helly graphs, which have been studied on their own and have more interesting convexity properties than general Helly graphs~\cite{Dra89,Dra93}. In particular, the center $C(G)$ of a $C_4$-free Helly graph $G$ is convex and it has diameter at most 3 and radius at most 2~\cite{Dra89,Dra93}.  In contrast, the center $C(G)$ of a general Helly graph $G$ is isometric but it can have arbitrarily large diameter; in fact, any Helly graph $H$ is the center of some other Helly graph $G$~\cite{Dra89}. 
We stress that $C_4$-free Helly graphs encompass the bridged Helly  graphs and all Helly graphs of hyperbolicity $1/2$, amongst other examples.
By restricting ourselves to this subclass we can use the well-known multi-sweep heuristic of Corneil et al.~\cite{CDHP01}, in order to compute vertices of provably large eccentricity, as a brick-basis for {\em exact} linear-time algorithms for computing both a central vertex and the diameter.
Our general approach for these graphs is also partly inspired by the algorithms of Chepoi and Dragan~\cite{ChD94} and Dragan and Nicolai~\cite{DrN00}, in order to compute a central vertex in chordal graphs and a diametral pair in distance-hereditary graphs, respectively.
We stress that in contrast to this positive result on $C_4$-free Helly graphs, and as notified to us by Chepoi (private communication), a similar method cannot apply to general Helly graphs.
Indeed, the values obtained for Helly graphs with the multi-sweep heuristic can be arbitrarily far from the diameter, which comes from the property that any graph can be isometrically embedded into a Helly graph~\cite{Dre84,Isb64}.

Furthermore, our results for $C_4$-free Helly graphs go beyond the mere calculation of the diameter and the radius. Indeed, we are able to compute the eccentricity of all vertices, which for Helly graphs can be reduced to computing the graph center. For that, we first need to solve the related problem of computing a {\em diametral pair} ({\it i.e.}, a pair of vertices of which the distance in the graph equals the diameter), which surprisingly requires a more intricate approach than for just computing the diameter. -- We note that the algorithm of Theorem~\ref{thm:helly-diam} also computes a diametral pair, but in super-linear time. -- This intermediate result has interesting consequences on its own. For instance, if we apply our algorithms on an {\em arbitrary} chordal graph, then we can use a (supposedly) diametral pair in order to decide, in linear time, if either we computed the diameter correctly or the input graph is not Helly. See Remark~\ref{rk:check-result} for more details.
Note that in comparison, the best-known recognition algorithms for chordal Helly graphs run in ${\cal O}(m^2)$ time~\cite{LiS07}. 
Our two main ingredients in order to solve these above problems are \texttt{(i)} a ``pseudo-gatedness'' property of the subsets of weak diameter at most two in $C_4$-free Helly graphs, and \texttt{(ii)} a reduction from finding a diametral pair under some technical assumptions to the same problem on a related {\em split} Helly graph.
We find the latter result all the more interesting that split graphs are amongst the hardest instances for diameter computation~\cite{BCH16}.
%

\smallskip
Finally, our above investigations on $C_4$-free Helly graphs lead us to the following natural research question: what are the other graph classes where the diameter can be efficiently computed from a subfamily of split graphs? In particular, can we reduce diameter computation on {\em general} chordal graphs to the same problem on split graphs?
This would imply that the subclass of split graphs is, in some sense, the sole hard case for diameter computation on chordal graphs.
Furthermore, this could help in finding new subclasses of chordal graphs for which we can compute the diameter faster than in ${\cal O}(nm)$. 
We answer positively to this open question, but in a more restricted setting (Theorem~\ref{thm:chordal-reduction}).
Specifically, our reduction is indeed from diameter computation on chordal graphs to the same problem on split graphs, but the computational results which we obtain are better if then we reduce to the well-known {\sc Disjoint Set} problem -- {\it a.k.a.} the monochromatic {\sc Orthogonal Vector} problem.
We stress that there is a trivial linear-time reduction from diameter computation on split graphs to {\sc Disjoint Set}, but the converse reduction from this problem to computing the diameter of a related split graph runs in time {\em quadratic} in the number of elements in the ground set of the input family.
This is evidence that {\sc Disjoint Set} might be a harder problem than diameter computation on split graphs -- at least in some density regimes.

As a byproduct of our reduction, we prove that the diameter can be computed in truly subquadratic time on any subclass of chordal graphs with constant {\em VC-dimension} (Theorem~\ref{thm:vc-dim}).
This nicely complements the results from~\cite{DHV19+a}, which mostly apply to sparse graph classes of constant distance VC-dimension or assuming a bounded (sublinear) diameter.
 Another application of our reduction is the approximate computation in quasi linear time of the eccentricity of all vertices in a chordal graph with an additive one-sided error of at most $1$. The latter result answers an open question from~\cite{Dra19}.

\paragraph{Notations.} 
Throughout the remainder of the paper, we denote by $dist_G(u,v)$ the distance between vertices $u$ and $v$.
The metric interval $I_G(u,v)$ between $u$ and $v$ is defined as $\{ w \in V \mid dist_G(u,w) + dist_G(w,v) = dist_G(u,v)\}$.
For any $k \leq dist_G(u,v)$, we can also define the slice $L(u,k,v) := \{ w \in I_G(u,v) \mid dist_G(u,w) = k\}$.
The ball of radius $r$ and center $v$ is defined as $\{ u \in V \mid dist_G(u,v) \leq r \}$, and denoted $N_G^r[v]$.
In particular, $N_G[v] := N_G^1[v]$ and $N_G(v) := N_G[v] \setminus \{v\}$ denote the closed and open neighbourhoods of a vertex $v$, respectively.
More generally, for any vertex-subset $S$ we define $dist_G(u,S) := \min_{v \in S} dist_G(u,v), \ N_G^r[S] := \bigcup_{v \in S}N_G^r[v], \ N_G[S] := N_G^1[S] \ \text{and} \ N_G(S) := N_G[S] \setminus S$.
The metric projection of a vertex $u$ on $S$, denoted $Pr_G(u,S)$, is defined as $\{ v \in S \mid dist_G(u,v) = dist_G(u,S) \}$.
The eccentricity of a vertex $u$ is defined as $\max_{v \in V} dist_G(u,v)$ and denoted by $e_G(u)$.
We also define the set $F_G(u) := \{ v \in V \mid dist_G(u,v) = e_G(u) \}$ of all the farthest vertices from vertex $u$.
-- Note that we will omit the subscript if the graph $G$ is clear from the context. -- 
The radius and the diameter of a graph $G$ are denoted $rad(G)$ and $diam(G)$, respectively.
Finally, $C(G) := \{ v \in V \mid e(v) = rad(G) \}$ is the center of $G$, {\it a.k.a.} the set of all the central vertices of $G$.

\section{Fast Computations within Helly graphs}\label{sec:gal-helly}

 In this section, we answer positively to Conjecture~\ref{conj:helly}. Section~\ref{sec:rad-helly} is devoted to a subquadratic-time randomized algorithm for radius computation, that can be turned to an exact or approximation algorithm for larger classes than the Helly graphs (namely, in every class where the diameter equals twice the radius, up to some additive constant). Then, we combine this approach with several other technical arguments in order to compute the diameter of Helly graphs in truly subquadratic time (Section~\ref{sec:diam-helly}).

 \subsection{Radius computation}\label{sec:rad-helly}

We start this section with a simple randomized test, which is inspired from previous works on adaptive greedy set cover algorithms~\cite{SeM10}.

\begin{lemma}\label{lem:set-cover}
Let $G=(V,E)$ be a graph, let $r$ be a positive integer and let $\varepsilon \in (0;1)$.
There is an algorithm that w.h.p. computes a set $D\langle G; r; \varepsilon \rangle$ in $\tilde{\cal O}(m/\varepsilon)$ time with the following two properties:
\begin{itemize}
\item if $e(v) \leq r$ then $v \in D\langle G; r; \varepsilon \rangle$;
\item conversely, if $v \in D\langle G; r; \varepsilon \rangle$ then  $|N^r[v]| \geq (1-\varepsilon) \cdot n$.
\end{itemize}
\end{lemma}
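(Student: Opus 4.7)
The plan is to sample a small random set of ``witnesses'' and use BFS from each to cheaply test the covering property. Concretely, I would draw a uniformly random multiset $S \subseteq V$ of size $|S| = \lceil c \ln n / \varepsilon \rceil$ for a sufficiently large absolute constant $c$ (say $c=3$), run a BFS from every $s \in S$ in $\mathcal{O}(m)$ time each, and then define
\[
 D\langle G; r; \varepsilon \rangle := \bigl\{ v \in V \ \bigl|\ dist_G(v,s) \leq r \ \text{for every } s \in S \bigr\}.
\]
Given the distance arrays from each BFS, constructing $D$ takes $\mathcal{O}(n \cdot |S|) = \tilde{\mathcal{O}}(n/\varepsilon)$ time, so the total running time is $\tilde{\mathcal{O}}(m/\varepsilon)$ as required.

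The first property is immediate: if $e(v) \leq r$, then every vertex (and in particular every sampled $s \in S$) lies in $N^r[v]$, hence $v \in D\langle G; r; \varepsilon \rangle$ deterministically. For the second property I would argue by contrapositive. Fix any $v \in V$ with $|N^r[v]| < (1-\varepsilon)\,n$, so that a uniformly random vertex avoids $N^r[v]$ with probability strictly more than $\varepsilon$. Since the $|S|$ samples are independent, the probability that all of them land inside $N^r[v]$ -- which is exactly the event $v \in D\langle G; r; \varepsilon\rangle$ -- is at most
\[
 (1-\varepsilon)^{|S|} \leq e^{-\varepsilon |S|} \leq n^{-c}.
\]
A union bound over the (at most $n$) vertices $v$ for which the covering condition fails shows that, with probability at least $1 - n^{1-c}$, every $v \in D\langle G; r; \varepsilon\rangle$ satisfies $|N^r[v]| \geq (1-\varepsilon)\,n$. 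Choosing $c \geq 2$ gives the desired ``with high probability'' guarantee.

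The only subtlety I foresee is cosmetic: ensuring the constant inside the $\tilde{\mathcal{O}}$ absorbs the logarithmic factor coming from $|S|$, and making the definition of $D$ insensitive to whether $S$ is sampled with or without replacement (both work; with replacement makes the independence argument cleanest). No structural property of Helly graphs is used here, which matches the fact that the lemma is stated for arbitrary graphs and will be combined with Helly-specific ingredients (unimodality of the eccentricity function) in the subsequent radius algorithm.
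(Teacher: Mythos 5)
Your proof is correct and takes essentially the same approach as the paper: sample $\tilde{\mathcal{O}}(1/\varepsilon)$ random witnesses, BFS from each, declare $v \in D$ iff all witnesses lie in $N^r[v]$, and finish with the $(1-\varepsilon)^{|S|} \leq n^{-c}$ bound plus a union bound. The only (immaterial) difference is that you sample a fixed-size multiset with replacement whereas the paper includes each vertex independently with probability $p = c\log n/(\varepsilon n)$ and invokes a Chernoff bound to control $|U(p)|$; your variant even avoids the paper's separate edge case for $p \geq 1$.
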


\begin{proof}
Let $p = c \cdot \frac{\log{n}}{\varepsilon n} $ for some arbitrary large constant $c$. If $p \geq 1$ then $n \leq c  \frac{\log{n}}{\varepsilon}$, and so we can compute the set of all the vertices of eccentricity at most $r$ in time $\tilde{\cal O}(m/\varepsilon)$ by running a BFS from every vertex.
From now on we assume that $p < 1$.
By $U(p)$ we mean a subset in which every vertex was added independently at random with probability $p$. 
Observe that we have $\mathbb{E}[|U(p)|] = c \frac{\log{n}}{\varepsilon} > c \cdot \log{n}$.
By Chernoff bounds we get $|U(p)| = \tilde{\cal O}(\varepsilon^{-1})$ with probability $\geq 1 - n^{-c}$. 
Then, for every $v \in V$, we compute $N^r[v] \cap U(p)$, which takes total time $\tilde{\cal O}(m/\varepsilon)$. 
We divide our analysis in two cases.
First let us assume that $e(v) \leq r$.
Then, with probability $1$ we have $U(p) \subseteq N^r[v]$. 
Second, let us assume that $|N^r[v]| < (1-\varepsilon) \cdot n$.
We get $Prob[ U(p) \subseteq N^r[v] ] < (1-p)^{\varepsilon n} = (1-p)^{\frac 1 p \cdot c \log{n}} \leq n^{-c}$.
Overall, let $D\langle G; r; \varepsilon \rangle$ contain all the vertices $v$ such that $U(p) \subseteq N^r[v]$.
By a union bound over $\leq n$ vertices, the set $D\langle G; r; \varepsilon \rangle$ satisfies our two above-stated properties with probability $\geq 1 - n^{-(c-1)}$.
\end{proof}

We derive from this simple test above an approximation algorithm for computing the radius and the diameter, namely:

\begin{lemma}\label{lem:approx-diam-rad}
Let $G=(V,E)$ be a graph and $r$ be a positive integer.
There is an algorithm that w.h.p. runs in $\tilde{\cal O}(m\sqrt{n})$ time and such that:
\begin{itemize}
\item If the algorithm accepts then $diam(G) \leq 2r$;
\item If the algorithm rejects then $rad(G) > r$.
\end{itemize}
\end{lemma}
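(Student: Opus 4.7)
The plan is to combine Lemma~\ref{lem:set-cover}, applied with a well-chosen $\varepsilon$, with a direct BFS-based check on a small residual set. The key balancing trick is to take $\varepsilon := 1/\sqrt{n}$, so that the randomized test runs in time $\tilde{\cal O}(m\sqrt{n})$ and at the same time leaves at most $\sqrt{n}$ ``uncovered'' vertices for which we can afford to compute all distances explicitly.

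More precisely, I would first call the algorithm of Lemma~\ref{lem:set-cover} with parameters $G$, $r$ and $\varepsilon = 1/\sqrt{n}$ in order to produce the set $D := D\langle G;r;1/\sqrt{n}\rangle$. If $D = \emptyset$, then by the first property of Lemma~\ref{lem:set-cover} no vertex of eccentricity at most $r$ exists w.h.p., so $rad(G) > r$ and we safely reject. Otherwise, pick any $v \in D$ and run a BFS from $v$; let $W := V \setminus N^r[v]$, which by the second property of Lemma~\ref{lem:set-cover} has size at most $\varepsilon n = \sqrt{n}$ w.h.p. Then run a BFS from every vertex $w \in W$ and compute $d^* := \max_{w \in W,\, u \in V} dist_G(w,u)$. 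Finally, accept if $d^* \leq 2r$ and reject otherwise.

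For correctness, note that every pair of vertices in $N^r[v]$ is at distance at most $2r$ from each other by the triangle inequality, so $diam(G) = \max\bigl(2r,\, d^*\bigr)$ whenever $d^* \leq 2r$ (hence acceptance implies $diam(G) \leq 2r$), while $d^* > 2r$ forces $diam(G) > 2r$ and in particular $rad(G) \geq \lceil diam(G)/2 \rceil > r$, so rejection in that branch is also consistent with the statement. The running time is $\tilde{\cal O}(m/\varepsilon) = \tilde{\cal O}(m\sqrt{n})$ for the construction of $D$, plus $O(m)$ for the BFS from $v$, plus $|W|\cdot O(m) = O(m\sqrt{n})$ for the BFSs from the vertices of $W$, matching the claimed bound. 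The high-probability guarantees propagate from Lemma~\ref{lem:set-cover} via a union bound.

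The only delicate point I see is conceptual rather than technical: one must realize that we do not need $v$ to be a true center; it is enough that $|N^r[v]| \geq (1-\varepsilon)n$, which is precisely what Lemma~\ref{lem:set-cover} guarantees, because the remaining vertices form a set small enough to afford one BFS per vertex. Once this balancing is identified, the rest of the argument is essentially a triangle-inequality computation.
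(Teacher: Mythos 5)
Your proposal is correct and follows essentially the same route as the paper: invoke Lemma~\ref{lem:set-cover} with $\varepsilon = n^{-1/2}$, reject if the returned set is empty, and otherwise run a BFS from each of the at most $\varepsilon n$ vertices outside $N^r[v]$, using the triangle inequality on $N^r[v]$. The only (harmless) slip is the claim $diam(G) = \max(2r, d^*)$, which should be an inequality $diam(G) \leq \max(2r, d^*)$ since the diameter may be smaller than $2r$; the conclusion you draw from it is the correct one.
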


Note that since $diam(G) \leq 2rad(G)$, this algorithm rejects any graph $G$ with $diam(G) > 2r$. However, it might also reject some graphs $G$ such that $diam(G) \leq 2r$ but $rad(G) > r$.

\begin{proof}
For some $\varepsilon$ to be defined later, we construct a set $D\langle G; r; \varepsilon \rangle$ as in Lemma~\ref{lem:set-cover}.
W.h.p. it takes time $\tilde{\cal O}(m/\varepsilon)$.
There are two cases.
If $D\langle G; r; \varepsilon \rangle = \emptyset$ then we know that $rad(G) > r$ and we stop.
Otherwise, we pick any vertex $c \in D\langle G; r; \varepsilon \rangle$ and we compute $N^r[c]$. 
Here it is important to observe that all the vertices of $N^r[c]$ are pairwise at a distance $\leq 2r$. 
Furthermore, w.h.p. we have $|V \setminus N^r[c]| \leq \varepsilon \cdot n$.
We end up computing a BFS from every vertex of $V \setminus N^r[c]$, accepting in the end if and only if all these vertices have eccentricity $\leq 2r$.
By setting $\varepsilon = n^{-1/2}$, the total running time is w.h.p. in $\tilde{\cal O}(m\sqrt{n})$. 
\end{proof}

An important consequence of Lemma~\ref{lem:approx-diam-rad} is that the hard instances for diameter and radius approximations are those for which the difference $2rad(G) - diam(G)$ is large, namely:

\begin{corollary}\label{cor:approx-diam-rad}
If $2rad(G) - diam(G) \leq k$ then, w.h.p., we can compute an additive $+\left\lfloor k/2 \right\rfloor$-approximation of $rad(G)$ and an additive $+k$-approximation of $diam(G)$ in total $\tilde{\cal O}(m\sqrt{n})$ time.
\end{corollary}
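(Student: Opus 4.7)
The plan is to use Lemma~\ref{lem:approx-diam-rad} as a one-sided decision oracle and run a binary search over its threshold parameter $r$ in order to bracket $rad(G)$ tightly; once a good bracket is found, the assumption $2rad(G) - diam(G) \leq k$ will immediately convert it into the two stated additive approximations.

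Concretely, I would maintain two integers $L$ and $U$ subject to the invariants $L \leq rad(G)$ and $diam(G) \leq 2U$, initialized to $L := 0$ and $U := n-1$ (both trivially valid). While $L < U$, I would invoke Lemma~\ref{lem:approx-diam-rad} with parameter $r := \lfloor (L+U)/2 \rfloor$: a rejection certifies $rad(G) > r$ and justifies advancing $L \leftarrow r+1$, while an acceptance certifies $diam(G) \leq 2r$ and justifies lowering $U \leftarrow r$. Since $U - L$ at least halves per iteration, the loop terminates after $O(\log n)$ oracle calls, and a union bound over these calls gives a total running time of $\tilde{\cal O}(m\sqrt{n})$ w.h.p. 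Write $r^* := L = U$ at termination.

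From the terminal invariants $r^* \leq rad(G)$ and $diam(G) \leq 2r^*$, combined with the hypothesis $2rad(G) \leq diam(G)+k \leq 2r^*+k$, I get $r^* \leq rad(G) \leq r^* + k/2$, which integrality tightens to $rad(G) \leq r^* + \lfloor k/2 \rfloor$. Outputting $r^* + \lfloor k/2 \rfloor$ therefore gives a value in $[rad(G), rad(G) + \lfloor k/2 \rfloor]$, and outputting $2r^*$ gives a value in $[diam(G), diam(G)+k]$ via $diam(G) \leq 2r^* \leq 2rad(G) \leq diam(G)+k$. The main subtle point---and the reason for phrasing the search via two decoupled invariants $L, U$ rather than searching for a single ``acceptance threshold''---is that the randomized oracle of Lemma~\ref{lem:approx-diam-rad} is not a priori monotone in $r$: it could in principle accept at some $r_0$ and reject at some larger $r_1$, because acceptance at $r_0$ only yields $diam(G) \leq 2r_0$ and not $rad(G) \leq r_0$. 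The invariant formulation dodges this because each update uses the oracle only in the direction where Lemma~\ref{lem:approx-diam-rad} has deterministic one-sided correctness.
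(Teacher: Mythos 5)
Your proof is correct and follows essentially the same route as the paper, which also performs a dichotomic search over $r$ using Lemma~\ref{lem:approx-diam-rad} as a one-sided oracle and then outputs $r$ and $2r$ (your two-invariant formulation is a slightly more careful way of justifying that search given the oracle's possible non-monotonicity, and your radius estimate $r^*+\left\lfloor k/2\right\rfloor$ merely shifts the paper's under-estimate $r\in[rad(G)-\left\lfloor k/2\right\rfloor, rad(G)]$ into an over-estimate). Both variants satisfy the stated approximation guarantees.
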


\begin{proof}
We compute by dichotomic search the smallest $r$ such that the algorithm of Lemma~\ref{lem:approx-diam-rad} accepts. Note that w.h.p. $r \geq \left\lceil diam(G)/2 \right\rceil$, and so $r \geq \left\lceil \frac{2rad(G)-k}2\right\rceil = rad(G) - \left\lfloor k/2 \right\rfloor$. Furthermore, we have w.h.p. $r \leq rad(G)$, and so $2r \leq diam(G) + k$. We output $r$ and $2r$ as approximations of $rad(G)$ and $diam(G)$, respectively. 
\end{proof}

\paragraph{Application to Helly graphs.}
For Helly graphs, the diameter and the radius are closely related.
This is a consequence of the unimodality property of the eccentricity function of Helly graphs~\cite{Dra89}, a property that will be further discussed in the next section.
In particular, the following relations hold between the two:

\begin{lemma}[\cite{Dra89}]\label{lem:unimodal}
If $G$ is a Helly graph then $2 rad(G) \geq diam(G) \geq 2 rad(G) - 1$.
In particular, $rad(G) = \left\lceil diam(G)/2 \right\rceil$.
\end{lemma}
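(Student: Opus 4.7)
The upper bound $diam(G) \leq 2\,rad(G)$ is the standard triangle-inequality fact that holds for every connected graph: pick any central vertex $c$, and then for arbitrary $u,v$ we have $dist_G(u,v) \leq dist_G(u,c) + dist_G(c,v) \leq 2\,e(c) = 2\,rad(G)$. So the only thing that really needs the Helly hypothesis is the lower bound $diam(G) \geq 2\,rad(G) - 1$.

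The plan is to prove this by contradiction, using the Helly property on a natural family of balls. Suppose $diam(G) \leq 2\,rad(G) - 2$, and set $r := rad(G) - 1$. Consider the family of balls $\{ N_G^{r}[v] \mid v \in V \}$. For any two vertices $u,v$ we have $dist_G(u,v) \leq 2\,rad(G) - 2 = 2r$, so picking any vertex $w$ on a $u$-$v$ geodesic at distance $\lfloor dist_G(u,v)/2 \rfloor$ from $u$ gives $\max\{dist_G(u,w),dist_G(v,w)\} \leq r$, hence $w \in N_G^{r}[u] \cap N_G^{r}[v]$. Thus the balls are pairwise intersecting.

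By the Helly property applied to this family, there exists a common vertex $c^\star \in \bigcap_{v \in V} N_G^{r}[v]$. But then $dist_G(c^\star,v) \leq r = rad(G)-1$ for every $v \in V$, which means $e(c^\star) \leq rad(G)-1$, contradicting the definition of the radius. So $diam(G) \geq 2\,rad(G) - 1$, as wanted.

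For the ``in particular'' part, combining $2\,rad(G)-1 \leq diam(G) \leq 2\,rad(G)$ gives $rad(G) - \tfrac12 \leq diam(G)/2 \leq rad(G)$, and since $rad(G)$ is an integer this forces $\lceil diam(G)/2 \rceil = rad(G)$. I do not anticipate any real obstacle here; the only delicate step is verifying the pairwise intersection of the radius-$(rad(G)-1)$ balls, which is immediate once one realises that the geodesic midpoint lands inside both balls as soon as $dist_G(u,v) \leq 2(rad(G)-1)$.
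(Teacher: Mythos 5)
Your proof is correct. The paper states this lemma as a cited result from [Dra89] and gives no proof of its own, so there is nothing to compare against; your argument -- applying the Helly property to the family of all balls of radius $rad(G)-1$, which pairwise intersect at geodesic midpoints whenever $diam(G)\leq 2\,rad(G)-2$, to manufacture a vertex of eccentricity below the radius -- is exactly the classical derivation of this bound, and all steps (including the midpoint verification and the ceiling computation) check out.
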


By combining Lemma~\ref{lem:unimodal} with Corollary~\ref{cor:approx-diam-rad},  we obtain the main result of this section, namely:

\begin{theorem}\label{thm:helly-rad}
If $G$ is a Helly graph then, w.h.p., we can compute $rad(G)$ and an additive $+1$-approximation of $diam(G)$ in time $\tilde{\cal O}(m\sqrt{n})$.
\end{theorem}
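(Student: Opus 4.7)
The plan is to derive this theorem as a direct corollary of the approximation framework built in the section, specialized to the tight radius/diameter relation available for Helly graphs. The hypothesis of Corollary~\ref{cor:approx-diam-rad} asks for an upper bound $k$ on the gap $2\,rad(G)-diam(G)$, and Lemma~\ref{lem:unimodal} furnishes exactly such a bound in the Helly case, namely $k=1$. So the entire argument is: instantiate Corollary~\ref{cor:approx-diam-rad} with $k=1$.

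More concretely, first I would invoke Lemma~\ref{lem:unimodal} to note that $2\,rad(G)-diam(G)\in\{0,1\}$ whenever $G$ is Helly. Then I would run the dichotomic-search procedure from the proof of Corollary~\ref{cor:approx-diam-rad}: for a candidate value $r$, call the tester of Lemma~\ref{lem:approx-diam-rad}, which in $\tilde{\cal O}(m\sqrt{n})$ time w.h.p. either certifies $diam(G)\leq 2r$ (accept) or $rad(G)>r$ (reject); binary-search over $r\in\{1,\dots,n\}$ to locate the smallest accepting value $r^\star$. This invokes the tester ${\cal O}(\log n)$ times, preserving the $\tilde{\cal O}(m\sqrt{n})$ running time with high probability after a union bound over the polylogarithmically many calls.

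It remains to read off the approximation guarantees with $k=1$. The rejection branch at $r^\star-1$ gives $rad(G)\geq r^\star$ w.h.p., while the acceptance branch at $r^\star$ together with Lemma~\ref{lem:unimodal} yields
\[
2r^\star \;\geq\; diam(G) \;\geq\; 2\,rad(G)-1,
\]
so $r^\star\geq rad(G)-\lfloor 1/2\rfloor = rad(G)$. Combined, $r^\star = rad(G)$ exactly, and $2r^\star$ is an additive $+1$-approximation of $diam(G)$ since $2r^\star \leq diam(G)+1$. Return $r^\star$ as the radius and $2r^\star$ as the approximate diameter.

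There is essentially no obstacle: all the heavy lifting has already been done in Lemma~\ref{lem:set-cover}, Lemma~\ref{lem:approx-diam-rad}, and Corollary~\ref{cor:approx-diam-rad}, and the Helly-specific ingredient is the single inequality from Lemma~\ref{lem:unimodal}. The only detail worth being careful about is that the success probability of the tester is amplified appropriately so that the union bound over the $\lceil\log_2 n\rceil$ binary-search queries still gives a high-probability guarantee; this is handled by choosing the constant $c$ in Lemma~\ref{lem:set-cover} large enough.
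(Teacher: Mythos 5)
Your proof is correct and is exactly the paper's argument: the paper derives Theorem~\ref{thm:helly-rad} in one line by combining Lemma~\ref{lem:unimodal} (which gives $2\,rad(G)-diam(G)\leq 1$) with Corollary~\ref{cor:approx-diam-rad} instantiated at $k=1$, which is precisely what you do. Your additional unpacking of the dichotomic search and the identification $r^\star=rad(G)$ just makes explicit what is already inside the proof of that corollary.
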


\subsection{Diameter computation}\label{sec:diam-helly}

Equipped with Theorem~\ref{thm:helly-rad}, we already know how to compute, for a Helly graph, an additive $+1$-approximation of its diameter.
However, this is not enough yet in order to prove Conjecture~\ref{conj:helly}.
For instance, consider the case of chordal graphs: we can compute their radius~\cite{ChD94} and an additive $+1$-approximation of their diameter~\cite{CDHP01,DNB97} in linear time, however it is SETH-hard to compute their diameter exactly in truly subquadratic time~\cite{BCH16}.
Our main result in this section is that the exact diameter of Helly graphs can be computed in truly subquadratic time (Theorem~\ref{thm:helly-diam}), that is in sharp contrast with the chordal graphs.

\subsubsection*{An intermediate problem}

We start with a parameterized algorithm for computing all eccentricities up to some threshold value in a Helly graph $G$. Our results for the following more general problem are also used in Section~\ref{sec:all-ecc}.

\begin{center}
	\fbox{
		\begin{minipage}{.95\linewidth}
			\begin{problem}[\textsc{Small Eccentricities}]\
				\label{prob:small-ecc} 
					\begin{description}
					\item[Input:] a graph $G=(V,E)$; a vertex-subset $A$; a positive integer $k$.
					\item[Output:] the set $B_k := \{ b \in V \mid A \subseteq N^k[b] \}$.
				\end{description}
			\end{problem}     
		\end{minipage}
	}
\end{center}

We note that already for $k=2$, Problem~\ref{prob:small-ecc} is unlikely to be solvable in truly subquadratic time.
Indeed, this special case is somewhat related to the {\sc Hitting Set} problem~\cite{AVW16}. 
We explain next how to solve this problem in parameterized linear time when $G$ is a Helly graph. 

\begin{theorem}\label{thm:all-ecc-param}
If $G$ is Helly then, for every subset $A$ and every positive integer $k$, we can solve {\sc Small Eccentricities} in ${\cal O}(km)$ time.
\end{theorem}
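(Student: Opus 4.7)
The plan is to derive and iteratively apply a Helly-specific recurrence on the sets $B_j(A)$.

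I would first prove the key recurrence: in a Helly graph $G$, whenever $B_j(A) \neq \emptyset$, one has $B_{j+1}(A) = N[B_j(A)]$. The inclusion $N[B_j(A)] \subseteq B_{j+1}(A)$ is immediate from the triangle inequality. For the converse, fix $u \in B_{j+1}(A)$ and consider the ball family $\mathcal{F} = \{N^1[u]\} \cup \{N^j[a] : a \in A\}$. Its balls pairwise intersect: $N^1[u] \cap N^j[a]$ is non-empty since $dist(u,a) \leq j+1$, and $N^j[a] \cap N^j[a']$ is non-empty since any $w \in B_j(A)$ witnesses $dist(a,a') \leq dist(a,w) + dist(w,a') \leq 2j$. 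The Helly property then yields a vertex $u' \in N[u] \cap B_j(A)$, proving $u \in N[B_j(A)]$.

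Given the recurrence, the algorithmic plan is: (a) find the smallest $j^* \leq k$ for which $B_{j^*}(A) \neq \emptyset$ and compute $B_{j^*}(A)$ (reporting $B_k(A) = \emptyset$ if no such $j^*$ exists), and (b) starting from $B_{j^*}(A)$, obtain $B_k(A)$ via a single multi-source BFS of depth $k - j^*$, setting layer $j+1$ equal to $N[B_j(A)]$. Part (b) costs $O(m)$ overall since it amounts to one layered BFS (each edge is traversed at most twice), so the $O(km)$ budget reduces to establishing part (a) within the same bound.

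For part (a), I would first locate a single vertex $v^* \in B_{j^*}(A)$ by a Helly-driven descent on the partial eccentricity $e_A(v) = \max_{a \in A} dist(v, a)$: starting from an arbitrary $v_0 \in A$, each step moves to a strictly better neighbor, whose existence follows from the Helly analogue of the unimodality of the eccentricity function (the property mentioned in the introduction and used throughout Section~\ref{sec:gal-helly}). A suitable descent neighbor can itself be certified by Helly, applied to the family $\{N^1[v]\} \cup \{N^{e_A(v)-1}[a] : a \in F_A(v)\}$ where $F_A(v) \subseteq A$ collects the $A$-vertices that are farthest from $v$. Once $v^*$ is known, the entire set $B_{j^*}(A)$ is contained in $N^{2j^*}[v^*]$, and I would recover it with a bounded-depth BFS from $v^*$ coupled with Helly-based local checks.

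The hard part will be bounding the total cost of part (a) by $O(km)$, i.e.\ avoiding any dependence on $|A|$ in either the descent to $v^*$ or the subsequent enumeration of $B_{j^*}(A)$. Since the paper notes that Helly graphs can exhibit arbitrarily large center-like structures (any Helly graph arises as the center of some other Helly graph), the enumeration step is particularly delicate and would probably require replacing naive per-vertex BFSes from elements of $A$ by a Helly-based batching that piggy-backs on a single BFS from $v^*$, so that all of $A$ is processed jointly rather than one vertex at a time.
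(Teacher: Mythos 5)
Your recurrence $B_{j+1}(A)=N[B_j(A)]$ whenever $B_j(A)\neq\emptyset$ is correct, and it is in fact the single-group case of the first claim in the paper's proof; part (b) of your plan (growing $B_{j^*}$ to $B_k$ by a layered BFS) is fine. The genuine gap is part (a), which is where all the difficulty of the problem lives, and your sketch for it is circular. The Helly property only asserts the \emph{existence} of a vertex in $N[v]\cap\bigcap_a N^{e_A(v)-1}[a]$; actually \emph{finding} such a vertex is itself an instance of {\sc Small Eccentricities} (with the sources being the relevant subset of $A$ and radius $e_A(v)-1$, restricted to $N[v]$), so each descent step presupposes a solution to the problem you are trying to solve. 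There is no evident way to identify the good neighbour in ${\cal O}(m)$ time without running a BFS from every $a\in A$, and the same objection applies to your final enumeration of $B_{j^*}(A)$ from a single witness $v^*$: testing membership of a candidate $u$ in $B_{j^*}$ again requires knowing $\max_{a\in A}dist(u,a)$. A secondary flaw: applying Helly to $\{N^1[v]\}\cup\{N^{e_A(v)-1}[a]: a\in F_A(v)\}$ only controls the distances to the \emph{farthest} elements of $A$; a vertex $a$ with $dist(v,a)=e_A(v)-1$ may end up at distance $e_A(v)$ from the chosen neighbour, so $e_A$ need not strictly decrease. You would need the full family over all of $A$, which makes the circularity above unavoidable.

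The paper sidesteps both issues by never searching for the minimizing level $j^*$ at all. It maintains, for each $j=0,1,\dots,k$, a partition ${\cal P}_A^j=(A_1^j,\dots,A_{p_j}^j)$ of $A$ into groups whose ball-intersections $V_i^j=\bigcap\{N^j[a]\mid a\in A_i^j\}$ are nonempty and \emph{pairwise disjoint}, starting from the trivial partition of singletons at $j=0$. Passing from level $j-1$ to level $j$ uses exactly your recurrence applied group-wise ($W_i^j=N[V_i^{j-1}]$), followed by a greedy merging step (pick a vertex lying in the maximum number of sets $W_i^j$, merge those groups) whose correctness rests on a second Helly-type claim and whose disjointness guarantee is what makes each level computable in ${\cal O}(n+m)$ time. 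In other words, in the regime where $B_j=\emptyset$ you do not get ``nothing''; you get many groups that progressively coalesce, and $B_k\neq\emptyset$ exactly when one group remains. That partition-refinement device is the missing idea that turns the existential Helly statement into a linear-time step.
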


\begin{proof}
We reduce the problem to the construction of a partition ${\cal P}_A^k = (A_1^{k},A_2^{k},\ldots,A_{p_k}^{k})$ of $A$ such that, for every $1 \leq j \leq p_k$, we have $\cap \{ N^{k}[a] \mid a \in A_j^{k} \} = V_j^{k} \neq \emptyset$, and furthermore the sets $V_1^{k},V_2^{k},\ldots,V_{p_k}^{k}$ are pairwise disjoint. Indeed, observe that we have $B_k \neq \emptyset$ if and only if $p_k = 1$, and in such a case $B_k = V_1^k$.
It now remains to prove that we can construct the partition ${\cal P}_A^{k}$, and the associated sets $V_j^{k}$, in ${\cal O}(km)$ time. If $k=0$, then we set ${\cal P}_A^0 = A$ and we are done (notice that for every $a_j \in A$, the corresponding set $V_j^0$ is exactly the singleton $\{a_j\}$). Otherwise, we show how to construct ${\cal P}_A^{k}$ and its associated sets from ${\cal P}_A^{k-1}$ and the $V_j^{k-1}$'s, in linear time. 

For that, let us define for every $j$ the new subset $W_j^{k} := N[V_j^{k-1}]$. Notice that constructing the sets $W_j^{k}$ takes total linear time as by the hypothesis, the sets $V_j^{k-1}$ are pairwise disjoint. However, we may have $W_j^k \cap W_{j'}^k \neq \emptyset$ for some $j \neq j'$.
\begin{myclaim}
$W_j^{k} = \bigcap \{ N^{k}[a] \mid a \in A_j^{k-1} \}$.
\end{myclaim}
\begin{proofclaim}
Since we have $V_j^{k-1} = \bigcap \{ N^{k-1}[a] \mid a \in A_j^{k-1} \}$, we get $W_j^{k} \subseteq \bigcap \{ N^{k}[a] \mid a \in A_j^{k-1} \}$ by construction. Conversely, let $v \in \bigcap \{ N^{k}[a] \mid a \in A_j^{k-1} \}$ be arbitrary. Since $N[v]$ and $N^{k-1}[a], \forall a \in A_j^{k-1}$ pairwise intersect, by the Helly property, $N[v] \cap V_j^{k-1} \neq \emptyset$, proving that $v  \in W_j^{k}$.
\end{proofclaim}
We are left with computing the ${V_j^k}'s$ as the union of pairwise disjoint subfamilies of the $W_{j'}^k$'s.
For that, we need the following additional result:
\begin{myclaim}
Let $v \in V$ be such that $\# \{ j \mid v \in W_j^{k} \}$ is maximized.
For any $j'$ such that $v \notin W_{j'}^{k}$, we have $W_{j'}^{k} \cap \bigcap \{ W_j^k \mid v \in W_j^k \} = \emptyset$.
\end{myclaim}
\begin{proofclaim}
Suppose for the sake of contradiction that there exists a vertex $u \in W_{j'}^{k} \cap \bigcap \{ W_j^k \mid v \in W_j^k \}$.
Then, we get $\{ j \mid v \in W_j^{k} \} \subset \{ j' \mid u \in W_{j'}^{k} \}$.
However, the latter contradicts the maximality of vertex $v$.
\end{proofclaim}
We now proceed as follows in order to compute ${\cal P}_A^{k}$ and the $V_j^k$'s.
Let ${\cal F}_k$ be an empty family of sets (at the end of this sub-procedure below, we shall get ${\cal F}_k = (V_1^k,V_2^k,\ldots,V_{p_k}^k)$).
While ${\cal P}_A^{k-1} \neq \emptyset$ we pick a vertex $v \in V$ such that $\# \{ j \mid v \in W_j^{k} \}$ is maximized.
We add the new sets $A_v := \bigcup \{ A_j^{k-1} \mid v \in W_j^{k} \}$ and $\bigcap \{ W_j^{k} \mid v \in  W_j^{k} \}$, in the families ${\cal P}_A^k$ and ${\cal F}_k$, respectively.
Indeed, $\bigcap \{ N^{k}[a] \mid a \in A_v \} = \bigcap \{ W_j^{k} \mid v \in  W_j^{k} \}$.
Then, we remove from ${\cal P}_A^{k-1}$ every $A_j^{k-1}$ such that $v \in W_j^{k}$. 
By the above claim, the sets in ${\cal F}_k$ are pairwise disjoint.

Finally, in order to construct ${\cal F}_k$, during a pre-processing step we compute $\# \{ j \mid v \in W_j^{k} \}$ for every vertex $v$. 
Since the $W_j^k$'s can be constructed in total linear time, this pre-processing also takes linear time.
We create an array of $|{\cal P}_A^{k-1}|$ lists, where $\forall i$ the $i^{\text{th}}$ list contains all the vertices that are in exactly $i$ subsets $W_j^k$.
Then, starting from $i = |{\cal P}_A^{k-1}|$, if the $i^{\text{th}}$ list is empty then $i := i-1$. Otherwise, we can pick any vertex $v$ of this list as it maximizes $\# \{ j \mid v \in W_j^{k} \}$. In this latter case the total running time of the step is in ${\cal O}(\sum_{j \mid v \in W_j^k} |A_j^{k-1}| + |W_j^k|)$. Since all the subsets $A_j^{k-1}$, such that $v \in W_j^k$, are subsequently removed from ${\cal P}_A^{k-1}$, after this step $v$ is no more contained in a group $W_j^k$, for any $A_j^{k-1} \in {\cal P}_A^{k-1}$ and so it will never be used again during the sub-procedure. Overall, the running time is in ${\cal O}(\sum_j |A_j^{k-1}| + |W_j^k|) = {\cal O}(n+m)$.
\end{proof}

\begin{corollary}\label{cor:all-ecc}
For any Helly graph $G$ and positive integer $k$, we can compute the set of all the vertices of eccentricity at most $k$ in ${\cal O}(km)$ time.
\end{corollary}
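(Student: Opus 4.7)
The plan is to observe that Corollary~\ref{cor:all-ecc} is essentially a restatement of a special case of Theorem~\ref{thm:all-ecc-param}, and so it follows by a direct reduction to the \textsc{Small Eccentricities} problem. Specifically, I would instantiate Theorem~\ref{thm:all-ecc-param} with the vertex-subset $A := V$ and the same integer $k$. With this choice, the output set is
\[
B_k = \{ b \in V \mid V \subseteq N^k[b] \}.
\]
Now for any vertex $b \in V$ we have $V \subseteq N^k[b]$ if and only if $dist_G(b,v) \leq k$ for every $v \in V$, which by definition of the eccentricity amounts to $e_G(b) \leq k$. Hence $B_k$ is precisely the set of vertices of $G$ whose eccentricity is at most $k$.

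The running time follows immediately from Theorem~\ref{thm:all-ecc-param}, which gives ${\cal O}(km)$ time to compute $B_k$ on a Helly graph, independently of the choice of $A$. There is no real obstacle here since the work has already been done in the theorem; the only thing to verify is the rephrasing of the membership condition $V \subseteq N^k[b]$ as $e_G(b) \leq k$, which is immediate from the definitions given in the Notations paragraph. Thus a one-line proof suffices, and no additional lemma about Helly graphs is needed beyond what the main theorem already encodes.
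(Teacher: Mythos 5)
Your proof is correct and is exactly the paper's argument: the paper likewise proves the corollary by applying Theorem~\ref{thm:all-ecc-param} with $A = V$ and observing that $B_k$ is then the set of vertices of eccentricity at most $k$. Nothing is missing.
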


\begin{proof}
It suffices to apply Theorem~\ref{thm:all-ecc-param} with $A = V$.
\end{proof}

\subsubsection*{Using the unimodality of the eccentricity function}

We stress that using our previous Corollary~\ref{cor:all-ecc}, if the diameter of a Helly graph is sublinear in the number of nodes, then we can compute the eccentricity of all vertices in truly subquadratic time. However, there exist very simple Helly graphs, such as paths, for which the diameter is linear in the number of nodes. For such ``giant-diameter'' Helly graphs, we next adapt a well-known sampling technique for distance oracles~\cite{BCE05}.

Recall that a function is called {\em unimodal} if every its local minimum is global. 
It was proved in~\cite{Dra89} that the eccentricity function in Helly graphs is unimodal, and that the latter implies the following interesting property:

\begin{lemma}[\cite{Dra89}]\label{lem:ecc-formula}
If $G$ is Helly then, for every vertex $v$, $e(v) = dist(v,C(G)) + rad(G)$.
\end{lemma}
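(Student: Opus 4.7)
My plan is to prove the two inequalities $e(v) \leq dist(v,C(G)) + rad(G)$ and $e(v) \geq dist(v,C(G)) + rad(G)$ separately, the second being the substantive direction that uses unimodality.

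The upper bound is a direct consequence of the triangle inequality. Let $c \in C(G)$ minimize $dist(v,c)$, so that $dist(v,c) = dist(v,C(G))$. Then for any $u \in V$, we have $dist(v,u) \leq dist(v,c) + dist(c,u) \leq dist(v,C(G)) + e(c) = dist(v,C(G)) + rad(G)$. Taking the maximum over $u$ yields $e(v) \leq dist(v,C(G)) + rad(G)$.

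For the lower bound, I would proceed by induction on the quantity $e(v) - rad(G) \geq 0$. The base case is $e(v) = rad(G)$, equivalently $v \in C(G)$, in which case $dist(v,C(G)) = 0$ and the formula holds with equality. For the inductive step, assume $e(v) > rad(G)$, i.e.\ $v \notin C(G)$. Then $v$ is not a global minimum of the eccentricity function, so by the unimodality of the eccentricity function in Helly graphs (from \cite{Dra89}, as recalled just before the statement), $v$ is not a local minimum either: there exists a neighbor $u \in N(v)$ with $e(u) < e(v)$. Since eccentricities of adjacent vertices differ by at most one, this forces $e(u) = e(v) - 1$. The induction hypothesis applied to $u$ gives $e(u) = dist(u,C(G)) + rad(G)$, hence $dist(u,C(G)) = e(v) - 1 - rad(G)$. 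Then $dist(v,C(G)) \leq 1 + dist(u,C(G)) = e(v) - rad(G)$, which rearranges to $e(v) \geq dist(v,C(G)) + rad(G)$.

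Combining both bounds yields the claimed equality. The only nontrivial ingredient is the unimodality of the eccentricity function, which is invoked as a black-box from \cite{Dra89}; the rest is a short induction plus the triangle inequality, so I do not expect a real obstacle beyond phrasing the induction cleanly.
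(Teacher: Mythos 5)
Your proof is correct. The paper does not prove this lemma itself --- it cites it from [Dra89] as a consequence of the unimodality of the eccentricity function --- and your argument is exactly the standard derivation from that property: the triangle inequality gives $e(v) \le dist(v,C(G)) + rad(G)$, and unimodality (a non-central vertex is not a local minimum, hence has a neighbour of eccentricity exactly one less) lets you descend to the center and establish the reverse inequality by induction on $e(v)-rad(G)$. No gaps.
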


\begin{theorem}\label{thm:giant-diam}
Let $G$ be a Helly graph such that $rad(G) > 3k = \omega(\log{n})$. Then w.h.p. in $\tilde{\cal O}(mn/k)$ time, we can compute a diametral pair for $G$.
\end{theorem}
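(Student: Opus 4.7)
The plan is to adapt the classical sampling trick from distance oracles and combine it with Helly's eccentricity formula (Lemma~\ref{lem:ecc-formula}). First I would form a random subset $S\subseteq V$ by including each vertex independently with probability $p=c\log n/k$ for a sufficiently large constant $c$, so that $|S|=\tilde{\cal O}(n/k)$ w.h.p. Running a BFS from every $s\in S$ then costs $\tilde{\cal O}(mn/k)$ time and yields $dist(s,v)$ for every pair. Setting $\tilde e(v):=\max_{s\in S} dist(s,v)$, since $e(v)\geq rad(G)>3k$ for every $v$, any shortest $v$-to-$f(v)$ path contains at least $k$ vertices at distance $\geq e(v)-k+1$ from $v$; a union bound then shows that, simultaneously for all $v\in V$ w.h.p., $e(v)-k+1\leq \tilde e(v)\leq e(v)$.

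From this I would extract an approximate centre $c^*:=\arg\min_v \tilde e(v)$: because any true central vertex $c\in C(G)$ satisfies $\tilde e(c)\leq e(c)=rad(G)$, we get $\tilde e(c^*)\leq rad(G)$, hence $e(c^*)\leq rad(G)+k-1$, and Lemma~\ref{lem:ecc-formula} yields $dist(c^*,C(G))\leq k-1$, i.e. $c^*\in N^{k-1}[C(G)]$. A BFS from $c^*$ in $O(m)$ time then gives all distances $dist(c^*,v)$, and a further BFS from $u:=f(c^*)$ identifies a farthest vertex $v:=f(u)$. The plan is then to output $(u,v)$ as the diametral pair.

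Correctness boils down to proving $e(u)=diam(G)$. I would argue this using the eccentricity formula $e(u)=dist(u,C(G))+rad(G)$ combined with a careful triangle-inequality analysis pairing $c^*$ with a closest true central vertex. The slack $rad(G)>3k$ should force $dist(u,C(G))$ to attain the maximum $diam(G)-rad(G)$ permitted by Lemma~\ref{lem:unimodal}, and then $e(u)=diam(G)$. If a direct global argument is too weak, a clean fallback is to split into sub-cases on $rad(G)$: when $rad(G)\leq n/k$, invoke Corollary~\ref{cor:all-ecc} to compute $C(G)$ exactly in time $O(rad(G)\cdot m)\subseteq O(mn/k)$, then multi-source BFS from $C(G)$ returns every $dist(v,C(G))$ and hence every eccentricity by Lemma~\ref{lem:ecc-formula}, and one additional BFS from any peripheral vertex finishes the construction.

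The main obstacle is precisely the correctness argument for $e(u)=diam(G)$ in the large-radius sub-case $rad(G)>n/k$. In a general Helly graph $C(G)$ can have arbitrarily large diameter, and a naive triangle-inequality only yields $e(u)\geq 2\,rad(G)-k-diam(C(G))$, which is too loose when $diam(C(G))$ is close to $rad(G)$. I expect the correct resolution is to exploit the Helly property applied to the family of balls $\{N^{e(c^*)}[c^*]\}\cup\{N^{rad(G)}[c]: c\in C(G)\}$ together with the $\tilde e$-information from the sample, so that the slack introduced by $diam(C(G))$ is absorbed by the additive $+k$ error of the approximation. A complementary backup is to iterate a multi-sweep from $u$, stopping at a mutual farthest pair; although multi-sweep is unreliable on general Helly graphs, starting from a vertex with provably near-optimal eccentricity should stabilize within $O(k)$ iterations and -- thanks to unimodality and the specific slack $rad(G)>3k$ -- at the true diameter.
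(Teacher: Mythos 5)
Your sampling step and time budget match the paper's, and your observation that $\tilde e(v):=\max_{s\in S}dist(s,v)$ approximates $e(v)$ within $k-1$ is correct. But the heart of your argument --- that from an approximate centre $c^*$ with $dist(c^*,C(G))\leq k-1$ the vertex $u:=f(c^*)$ (or $f(f(c^*))$) is a diametral end --- is exactly the step you flag as unresolved, and it does not go through. By Lemma~\ref{lem:ecc-formula}, $e(u)=dist(u,C(G))+rad(G)$, and all you can extract from $dist(c^*,u)=e(c^*)$ is $dist(u,C(G))\geq rad(G)-diam(C(G))-O(k)$; since the centre of a general Helly graph can be an arbitrary Helly graph and hence have diameter comparable to $rad(G)$, this bound is vacuous. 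Even starting from an \emph{exact} central vertex, a farthest vertex may sit close to some other central vertex and thus have eccentricity far below $diam(G)$. Your fallbacks do not close this: the sub-case $rad(G)\leq n/k$ via Corollary~\ref{cor:all-ecc} is fine but leaves the genuinely hard regime $rad(G)>n/k$ untouched, and iterated multi-sweep is precisely what the paper warns can be arbitrarily far from the diameter on general Helly graphs.

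The missing idea is to use a \emph{different estimator} built from the exact eccentricities of the sampled vertices (which your BFS runs already give you for free): set $\bar e(v):=\min\{dist(u,v)+e(u) : u\in S,\ dist(u,v)\leq k\}$. For any $v$ with $e(v)<rad(G)+k$ this is bounded above by $rad(G)+3k\leq 2rad(G)-1\leq diam(G)$, with strict inequality; for any $v$ with $e(v)\geq rad(G)+k$ --- which includes every diametral end, since $dist(v,C(G))=e(v)-rad(G)\geq rad(G)-1>k$ there --- Lemma~\ref{lem:ecc-formula} makes the eccentricity decrease by exactly one along each step of a shortest path from $v$ to $C(G)$, so w.h.p.\ the sample hits one of the first $k$ vertices of that path and $\bar e(v)=e(v)$ exactly. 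Hence $\arg\max_v\bar e(v)$ is w.h.p.\ a diametral end, and one further BFS finishes. Your lower-bound estimator $\max_s dist(s,v)$ cannot be sharpened to exactness this way, which is why your route stalls; I recommend replacing the approximate-centre detour by this argument.
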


\begin{proof}
Let $p = c \log{n}/k$ for some sufficiently large constant $c$.
We construct a subset $U(p)$ where every vertex is included independently with probability $p$.
By Chernoff bounds we have $|U(p)| = \tilde{\cal O}(n/k)$ w.h.p., and we assume from now on that it is indeed the case.
In particular, we can perform a BFS from every vertex in $U(p)$ in $\tilde{\cal O}(mn/k)$ time.
Then, we define for every vertex $v \in V$: $$\bar{e}(v) := \min_{u \in U(p) \mid dist(u,v) \leq k} dist(u,v) + e(u)$$
(by convention, we set $\bar{e}(v) = 0$ if every vertex of $U(p)$ is at distance $> k$ from $v$).
Note that all the values $\bar{e}(v)$ can be computed in ${\cal O}(n|U(p)|) = \tilde{\cal O}(n^2/k)$ time.
We now divide our analysis in two cases:
\begin{itemize}
\item Case $e(v) < rad(G) + k$. Then, for any $u \in U(p)$ such that  $dist(u,v) \leq k$, we get $e(u) \leq e(v) + k < rad(G) + 2k$. Hence, $\bar{e}(v) < rad(G) + 3k \leq 2 \cdot rad(G) - 1$. By Lemma~\ref{lem:unimodal}, we get that $\bar{e}(v) < diam(G)$ with probability $1$.
\item Case $e(v) \geq rad(G) + k$. Note that in particular, we always fall in this case if $v$ is an end of a diametral path. Let us consider the set $S_v$ of the $k$ first vertices on a fixed shortest path between $v$ and a closest vertex of $C(G)$. By Lemma~\ref{lem:ecc-formula}, $e(v) = dist(v,C(G)) + rad(G)$. In particular, for every $u \in S_v$ we have $e(v) = dist(u,v) + e(u)$. If $v$ is an end of a diametral path, and furthermore $U(p) \cap S_v \neq \emptyset$, then this implies $\bar{e}(v) = e(v) = diam(G)$. Therefore, we are left proving that w.h.p., $U(p) \cap S_v \neq \emptyset$.
That is indeed the case since $Prob[ U(p) \cap S_v = \emptyset] = (1-p)^{|S_v|} = (1-p)^k = (1-p)^{\frac{c \log{n}}{p}} \leq n^{-c}$.
\end{itemize}
Finally, it follows from our above analysis that, w.h.p., a vertex $v$ which maximizes $\bar{e}(v)$ is an end of a diametral path. Once such an end is computed, we can compute a corresponding diametral pair in linear time, {\it e.g.,} using BFS.
\end{proof}

Combining Theorem~\ref{thm:all-ecc-param} and Theorem~\ref{thm:giant-diam}, we finally obtain our main result, namely:

\begin{theorem}\label{thm:helly-diam}
A diametral pair in a Helly graph can be computed w.h.p. in $\tilde{\cal O}(m\sqrt{n})$ time.
\end{theorem}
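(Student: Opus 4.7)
The plan is to combine the two main technical tools of this section, Theorem~\ref{thm:all-ecc-param} and Theorem~\ref{thm:giant-diam}, via a case split driven by the magnitude of $rad(G)$. First I would invoke Theorem~\ref{thm:helly-rad} to compute $rad(G)$ exactly w.h.p.\ in $\tilde{\cal O}(m\sqrt{n})$ time. The natural balancing parameter is $k = \sqrt{n}$, chosen precisely so that the two subroutines below each run in $\tilde{\cal O}(m\sqrt{n})$ time.

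If $rad(G) > 3\sqrt{n}$, then (since $\sqrt{n} = \omega(\log n)$) the hypothesis of Theorem~\ref{thm:giant-diam} is met with $k = \sqrt{n}$, and a direct application of that theorem produces a diametral pair w.h.p.\ in $\tilde{\cal O}(mn/\sqrt{n}) = \tilde{\cal O}(m\sqrt{n})$ time.

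Otherwise $rad(G) \leq 3\sqrt{n}$, so by Lemma~\ref{lem:unimodal} we have $diam(G) \leq 2\, rad(G) \leq 6\sqrt{n}$ and moreover $diam(G) \in \{2rad(G)-1,\,2rad(G)\}$. I would then call Corollary~\ref{cor:all-ecc} with $k := 2rad(G)-1$, which in ${\cal O}(rad(G)\cdot m) = {\cal O}(\sqrt{n}\, m)$ time returns the set $S_k$ of all vertices of eccentricity at most $k$. If $S_k \neq V$, then any $v \notin S_k$ satisfies $2rad(G) \leq e(v) \leq diam(G) \leq 2rad(G)$, so $e(v) = diam(G) = 2rad(G)$ and a single BFS from $v$ yields a diametral pair. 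If $S_k = V$, then $diam(G) = 2rad(G)-1$, and one extra call to Corollary~\ref{cor:all-ecc} with $k := 2rad(G)-2$ identifies some vertex of eccentricity exactly $diam(G)$, from which a BFS again produces a diametral pair.

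The main conceptual obstacle is pinpointing the balancing threshold: the choice $k = \sqrt{n}$ is forced by equating the two running times $\tilde{\cal O}(km)$ (from Corollary~\ref{cor:all-ecc}, applicable when $rad(G) = {\cal O}(k)$) and $\tilde{\cal O}(mn/k)$ (from Theorem~\ref{thm:giant-diam}, applicable when $rad(G) = \omega(k)$). Correctness in the low-radius branch crucially uses the exact value of $rad(G)$ computed in the first step, together with Lemma~\ref{lem:unimodal} to restrict $diam(G)$ to two candidate values. A union bound over the ${\cal O}(1)$ randomized subroutines preserves the high-probability guarantee throughout.
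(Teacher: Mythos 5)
Your proposal is correct and rests on the same two-regime strategy as the paper: handle small diameter with the parameterized algorithm of Corollary~\ref{cor:all-ecc} and large diameter with the sampling argument of Theorem~\ref{thm:giant-diam}, balanced at $k \approx \sqrt{n}$. The only real difference is how the case split and the small-diameter branch are resolved: the paper tests whether $B_k = V$ for $k = 6\lceil\sqrt{n}\rceil$ and, if so, finds the exact diameter by dichotomic search over $d \leq k$ (a fully deterministic branch), whereas you first compute $rad(G)$ via the randomized Theorem~\ref{thm:helly-rad} and then exploit Lemma~\ref{lem:unimodal} ($diam(G) \in \{2rad(G)-1, 2rad(G)\}$) to replace the binary search with at most two calls to Corollary~\ref{cor:all-ecc}. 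Both variants run in $\tilde{\cal O}(m\sqrt{n})$ w.h.p.; yours adds a harmless extra dependence on the randomized radius algorithm but saves the $\log$-factor of the dichotomic search, and the union bound over the constantly many randomized calls goes through as you state.
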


\begin{proof}
Let $k = 6\left\lceil\sqrt{n}\right\rceil$. By Corollary~\ref{cor:all-ecc} we can compute the set $B_k$ of all the vertices of eccentricity at most $k$ in ${\cal O}(mk) = {\cal O}(m\sqrt{n})$ time. There are now two cases. First assume that $B_k = V$. We can compute by dichotomic search the smallest $d \leq k$ such that $B_d = V$, which is exactly the diameter, in $\tilde{\cal O}(m\sqrt{n})$ time.
Then, a vertex is an end of a diametral path if and only if it is in $V \setminus B_{d-1}$, and by Corollary~\ref{cor:all-ecc} we can enumerate all such vertices in ${\cal O}(md) = {\cal O}(m\sqrt{n})$ time.
Otherwise, $B_k \neq V$, and so, $diam(G) > k$. Note that it implies $rad(G) > k/2 \geq 3\sqrt{n}$.
By Theorem~\ref{thm:giant-diam}, we can compute a diametral pair of $G$ w.h.p. in time $\tilde{\cal O}(mn/\sqrt{n}) = \tilde{\cal O}(m\sqrt{n})$.
\end{proof}

\section{Journey to the Center of $C_4$-free Helly graphs}\label{sec:helly-c4-free}

We now improve our results for the class of $C_4$-free Helly graphs.
Our results in this section are divided into three parts.
In Section~\ref{sec:central-vertex} we first explain how to compute a central vertex, and so the radius, in a $C_4$-free Helly graph.
We use this result and other properties in Section~\ref{sec:diametral-pairs}, in order to compute the diameter and a corresponding diametral pair. Finally, all the results in Section~\ref{sec:central-vertex} and Section~\ref{sec:diametral-pairs} are combined and enhanced in Section~\ref{sec:all-ecc} so as to compute the eccentricity of all vertices.

\subsection{Computing a central vertex}\label{sec:central-vertex}

We start with general properties of Helly graphs and $C_4$-free Helly graphs which we will then use in our analysis.
The first such property is a consequence of the unimodality of the eccentricity function in Helly graphs (see \cite{Dra89}). 

\begin{lemma}[\cite{Dra89}]\label{lem:intersect-center}
Let $G$ be a Helly graph. Then, for any vertex $v$ of $G$ and any farthest vertex $u \in F(v)$ we have $L(u,rad(G),v) \cap C(G) \neq \emptyset$.
\end{lemma}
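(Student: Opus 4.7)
The plan is to exploit the Helly property directly, by setting up a family of balls whose common intersection captures exactly the conclusion we want. Let $d = dist_G(u,v)$; since $u \in F(v)$ we have $d = e(v) \geq rad(G)$, and a vertex $w$ lies in $L(u,rad(G),v)$ precisely when $dist(u,w) \leq rad(G)$ and $dist(v,w) \leq d - rad(G)$ (the triangle inequality then forces both inequalities to be equalities and places $w$ on a shortest $u$-$v$ path). So it suffices to produce $w \in V$ with $e(w) \leq rad(G)$, $dist(w,u) \leq rad(G)$, and $dist(w,v) \leq d - rad(G)$.

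The natural family to look at is
\[
\mathcal{F} = \{ N^{rad(G)}[x] : x \in V \} \;\cup\; \{ N^{d-rad(G)}[v] \}.
\]
A common point of $\mathcal{F}$ is precisely a vertex $w$ of eccentricity $\leq rad(G)$ lying within distance $d-rad(G)$ of $v$, which (by the triangle-inequality argument above) belongs to $L(u,rad(G),v) \cap C(G)$. By the Helly property, it is enough to verify that $\mathcal{F}$ is pairwise intersecting.

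For the first kind of pair, $N^{rad(G)}[x] \cap N^{rad(G)}[y]$ with $x,y \in V$ contains any central vertex, since such a vertex is within distance $rad(G)$ of every vertex of $G$. The only substantive verification is the mixed case $N^{rad(G)}[x] \cap N^{d-rad(G)}[v]$ for an arbitrary $x \in V$. Here I would use that $dist(x,v) \leq e(v) = d$ and interpolate along a shortest $x$-$v$ path: if $dist(x,v) \leq rad(G)$ take $w' = v$; otherwise take $w'$ on the shortest path with $dist(x,w') = rad(G)$, so that $dist(w',v) = dist(x,v) - rad(G) \leq d - rad(G)$. In either case $w' \in N^{rad(G)}[x] \cap N^{d-rad(G)}[v]$, so pairwise intersection holds.

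The main obstacle is making sure the ``mixed'' pairwise intersection really does only need $dist(x,v) \leq d$, i.e., that the asymmetric radius $d-rad(G)$ on the $v$-side is large enough; this is exactly where the inequality $e(v) = d$ (coming from $u \in F(v)$) is used, together with $d \geq rad(G)$. Once pairwise intersection is established, invoking the Helly property yields $w \in \bigcap \mathcal{F}$, and the distance bounds on $w$ force $w \in L(u,rad(G),v) \cap C(G)$, completing the proof.
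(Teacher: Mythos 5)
Your proof is correct. The decomposition of membership in $L(u,rad(G),v)$ into the two inequalities $dist(u,w)\leq rad(G)$ and $dist(v,w)\leq d-rad(G)$ is sound (the triangle inequality upgrades both to equalities), the family $\mathcal{F}$ is well chosen, and both pairwise-intersection checks go through: the same-type pairs meet in any central vertex, and the mixed pair is handled by interpolating along a shortest $x$--$v$ path using only $dist(x,v)\leq e(v)=d$. The degenerate case $d=rad(G)$ (a radius-$0$ ball) is harmless.

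The paper does not prove this lemma itself; it imports it from Dragan's earlier work, where it is a consequence of the unimodality of the eccentricity function, specifically of the identity $e(v)=dist(v,C(G))+rad(G)$ (Lemma~\ref{lem:ecc-formula} in this paper). That route is shorter given the formula: a central vertex $c$ closest to $v$ satisfies $dist(v,c)=d-rad(G)$ and $dist(u,c)\leq rad(G)$, so $c$ lies on a shortest $u$--$v$ path at distance exactly $rad(G)$ from $u$. Your argument instead goes back to the Helly property directly and, in effect, reproves the needed instance of that identity on the fly by intersecting all balls $N^{rad(G)}[x]$ with the single ball $N^{d-rad(G)}[v]$. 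What you gain is a self-contained one-step proof from the definition of Helly graphs; what the unimodality route buys is reusability, since the same formula drives several other arguments in the paper (e.g., Theorem~\ref{thm:giant-diam}). Either way, the lemma is established.
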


Pseudo-modular graphs are exactly the graphs where each family of three pairwise intersecting balls has a common intersection \cite{BaM86}. Clearly, Helly graphs is a subclass of pseudo-modular graphs.

\begin{lemma}[\cite{BaM86}]\label{centroids}
   For every three vertices $x$, $y$, $z$ of a pseudo-modular graph $G$ there exist three shortest paths $P(x,y)$, $P(x,z)$,
   $P(y,z)$ connecting them such that either (1) there is a common vertex $v$ in $P(z,y) \cap P(x,z) \cap P(x,y)$
   or (2) there is a triangle $\bigtriangleup (x',y',z')$ in $G$ with edge $z'y'$ on $P(z,y)$,
   edge $x'z'$ on $P(x,z)$ and edge $x'y'$ on $P(x,y)$  (see Fig. \ref{fig:pseudomodular}). Furthermore, (1) is true if and only if $d(x,y)=p+q$, $d(x,z)=p+k$ and $d(y,z)=q+k$, for some $k,p,q\in \mathbb{N}$, and (2) is true if and only if $d(x,y)=p+q+1$, $d(x,z)=p+k+1$ and $d(y,z)=q+k+1$, for some $k,p,q\in \mathbb{N}$.
\end{lemma}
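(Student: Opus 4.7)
The plan is to run a case analysis on the parity of $d(x,y)+d(x,z)+d(y,z)$ and, in each case, to apply the pseudo-modular property (any three pairwise intersecting balls share a common point) to a carefully chosen triple of balls centred at $x, y, z$.

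Setting $a := d(x,y)$, $b := d(x,z)$, $c := d(y,z)$, the triangle inequality ensures that $a+b-c, a+c-b, b+c-a \geq 0$. If $a+b+c$ is even, I would define $p := (a+b-c)/2$, $q := (a+c-b)/2$, $k := (b+c-a)/2$, which yields $a=p+q$, $b=p+k$, $c=q+k$ (case 1); otherwise $a-1, b-1, c-1$ still satisfy the triangle inequality, and the analogous definitions produce $a = p+q+1$, $b = p+k+1$, $c = q+k+1$ (case 2). Since the parity of $a+b+c$ is rigidly determined by which configuration can hold, this single setup simultaneously provides the forward direction of both implications and justifies the ``if and only if'' clauses stated at the end.

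For case (1) I would apply pseudo-modularity to the balls $N^p[x]$, $N^q[y]$, $N^k[z]$. Any vertex at distance exactly $p$ from $x$ on a shortest $(x,y)$-path certifies $N^p[x] \cap N^q[y] \neq \emptyset$, and the other two pairwise intersections are handled in the same way. Pseudo-modularity then yields a common vertex $v$, and the pairwise triangle inequalities of the form $d(v,x)+d(v,y) \geq d(x,y)=p+q$ force equality $d(v,x)=p$, $d(v,y)=q$, $d(v,z)=k$. Hence $v$ lies on shortest paths between each pair in $\{x,y,z\}$, and concatenating geodesics through $v$ produces the required triple $P(x,y), P(x,z), P(y,z)$.

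Case (2) is the main obstacle, since by a parity argument no common vertex $v$ as above can exist and one must instead extract a triangle. My plan is, by three symmetric applications of pseudo-modularity, to obtain three ``near-median'' vertices: feeding the balls $N^p[x]$, $N^{q+1}[y]$, $N^{k+1}[z]$ (which pairwise intersect, again by choosing suitable points on shortest paths) into the pseudo-modular axiom produces $x'$ with $d(x,x')=p$, $d(y,x')=q+1$, $d(z,x')=k+1$, and symmetric choices yield $y'$ and $z'$ with distance profiles shifted by one in the corresponding coordinate. The delicate step, and the hardest part of the proof, is to ensure that $x', y', z'$ may be selected so as to be pairwise adjacent. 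The lower bounds $d(x',y'), d(x',z'), d(y',z') \geq 1$ are automatic from the reverse triangle inequality (for instance $d(x',y') \geq d(x,y')-d(x,x') = 1$), so one need only force the corresponding upper bounds of $1$. I would try to close this by a fourth application of pseudo-modularity, this time to the unit balls $N[x'], N[y'], N[z']$, combined with a parity-squeeze argument ruling out distances equal to $2$ between the candidates; once the triangle is in hand, concatenating geodesics through the edges $x'y', y'z', x'z'$ yields the desired $P(x,y), P(x,z), P(y,z)$.
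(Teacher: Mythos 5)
The paper does not prove this lemma at all: it is imported from Bandelt and Mulder [BaM86] as a black box, so your proposal has to stand on its own. Your arithmetic setup is fine (the parity of $d(x,y)+d(x,z)+d(y,z)$ determines which decomposition into $p,q,k$ exists, and each configuration conversely forces the corresponding decomposition, which settles the two ``only if'' directions), and your Case (1) is complete and correct: $N^p[x]$, $N^q[y]$, $N^k[z]$ pairwise intersect, and the triangle inequality pins any common point to distances exactly $p,q,k$, making it a median.

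The genuine gap is in Case (2), exactly where you flagged it, and the repair you sketch does not close it. Picking $x'$, $y'$, $z'$ by three \emph{independent} applications of the ball condition only controls their distances to $x,y,z$; it gives no upper bound of $2$ on, say, $d(x',y')$ (the best generic bound from your distance profiles is $\min\{2p+1,\,2q+1,\,2k+2\}$), so the fourth application to $N[x']$, $N[y']$, $N[z']$ is not licensed because these balls need not pairwise intersect. The parity squeeze also fails: in Case (2) the perimeter is odd, so the closed walk $x\to y\to z\to x$ along shortest paths is odd and the graph is not bipartite; hence $d(x,x')=p$ and $d(x,y')=p+1$ are perfectly compatible with $d(x',y')=2$. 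And even if the three unit balls did meet, their common point would be a common \emph{neighbour} of $x',y',z'$, not a certificate that $x',y',z'$ are pairwise adjacent. The strategy can be saved by sequencing rather than symmetrizing: first intersect $N^{p+1}[x]$, $N^{q+1}[y]$, $N^{k}[z]$ to get $z'$ with $d(x,z')=p+1$, $d(y,z')=q+1$, $d(z,z')=k$; then intersect $N^{p}[x]$, $N^{q+1}[y]$, $N^{1}[z']$ to get $x''$ on a shortest $xy$-path at distance $p$ from $x$ and adjacent to $z'$; finally intersect $N^{1}[x'']$, $N^{q}[y]$, $N^{1}[z']$ to get $w$ adjacent to both $x''$ and $z'$ with $d(x,w)=p+1$ and $d(y,w)=q$. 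Each of these triples of balls pairwise intersects by the distance bookkeeping already in place, the triangle inequality forces every stated distance to be exact, and $\bigtriangleup(x'',w,z')$ is then the required triangle with $x''w$ on $P(x,y)$, $x''z'$ on $P(x,z)$ and $wz'$ on $P(y,z)$.
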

\begin{center}
 \centering\image{}{
   \includegraphics[scale=.55]{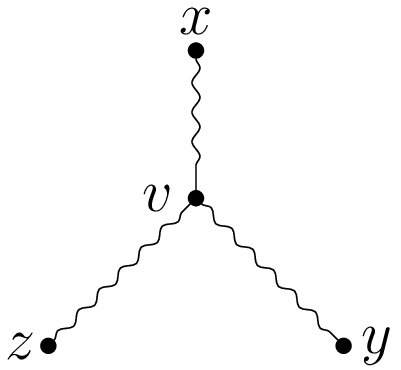}
 }\quad
 \image{}{
    \includegraphics[scale=.55]{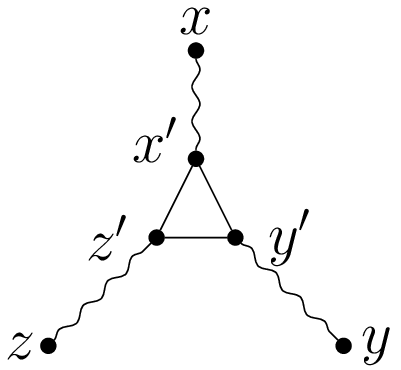}
 }
 \captionof{figure}{Vertices $x,y,z$ and three shortest paths connecting them in pseudo-modular graphs.}
 \label{fig:pseudomodular}
 \end{center}

The next properties are specific to $C_4$-free Helly graphs.
A set $S\subseteq V$ of a graph $G=(V,E)$ is called {\emph convex} if for every $x,y\in S$, $I(x,y)\subseteq S$ holds.

\begin{lemma}[\cite{Dra93}]\label{lm:ball-convex}  
Every ball of a $C_4$-free Helly graph is convex.
\end{lemma}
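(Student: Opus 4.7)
Take $x, y \in N^r[v]$ and $z \in I(x,y)$, with $a := dist(x,z)$ and $b := dist(z,y)$, so that $dist(x,y) = a+b$. The goal is to prove $dist(z, v) \leq r$. The natural opening move is to apply the Helly property to the three balls $N^a[x]$, $N^b[y]$, and $N^r[v]$, which pairwise intersect at $z$, $x$, and $y$ respectively. This yields a common vertex $w$; since $dist(x,w) \leq a$ and $dist(w,y) \leq b$ sum to $a+b = dist(x,y)$, equality must hold throughout, so $w \in L(x,a,y) \cap N^r[v]$. If $z = w$, we are done.

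To handle the case $z \neq w$, my plan is to use induction on $dist(x,y)$. The key base case is $dist(x,y) = 2$: here $a = b = 1$, so $z$ and $w$ are both common neighbors of the non-adjacent pair $x, y$. The four vertices $\{x, z, y, w\}$ would then induce a $4$-cycle unless $z$ and $w$ are themselves adjacent, so $C_4$-freeness forces $z \sim w$. One then sharpens this: by taking the common vertex produced via Helly from $N^1[x]$, $N^1[y]$, and $N^{r-1}[v]$ instead -- which pairwise intersect provided $dist(x,v) = dist(y,v) = r$, a situation we may assume after observing that if either $x$ or $y$ lies in $N^{r-1}[v]$ then $z$ is automatically at distance at most $r$ from $v$ via a direct triangle-inequality argument -- the same $C_4$-free argument produces a common neighbor $u$ of $x, y$ in $N^{r-1}[v]$ that must be adjacent to $z$, giving $dist(z, v) \leq 1 + (r-1) = r$.

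For the inductive step with $dist(x,y) \geq 3$, I would aim to produce a neighbor $x' \in I(x,y) \cap N^r[v]$ such that $z \in I(x', y)$, thereby reducing to a strictly shorter interval. Such an $x'$ can be extracted by another Helly application, this time to $N^1[x]$, $N^{a-1}[w]$, and $N^r[v]$: these pairwise intersect at a neighbor of $x$ on a shortest $x$-$w$ path, at $x$, and at $w$ respectively, and the common vertex furnished by the Helly property is a neighbor of $x$ at distance $a-1$ from $w$ and still inside $N^r[v]$. The main obstacle is ensuring that this $x'$ actually lies on a shortest $x$-$y$ path through $z$, not merely on a shortest $x$-$w$ path; I expect this to require pseudo-modularity (Lemma~\ref{centroids}) applied to the triple $(x, z, w)$, combined with $C_4$-freeness to rule out the ``median triangle'' branch of that lemma from producing a configuration inconsistent with $z \in I(x', y)$.
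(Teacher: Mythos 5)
First, a point of reference: the paper does not prove this lemma at all -- it is imported from \cite{Dra93} -- so your proposal can only be judged on its own terms. Your opening Helly application and your base case $dist(x,y)=2$ are correct: the reduction to $dist(x,v)=dist(y,v)=r$, the application of the Helly property to $N^1[x],N^1[y],N^{r-1}[v]$, and the use of $C_4$-freeness to force the resulting common neighbour $u$ to be adjacent to $z$ all go through, giving $dist(z,v)\le 1+(r-1)=r$.

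The inductive step, however, has a genuine gap, and it is exactly the one you flag. The vertex $x'$ you extract from $N^1[x]\cap N^{a-1}[w]\cap N^r[v]$ satisfies $dist(x',w)=a-1$, but $w$ was itself produced by a Helly application that never saw $z$; nothing bounds $dist(x',z)$ below $a$, so $z\in I(x',y)$ can fail and the induction hypothesis never reaches $z$. Your proposed repair -- pseudo-modularity on $(x,z,w)$ -- is not carried out, and it is unclear how it could control a vertex $x'$ that was constructed independently of $z$. The step can be fixed, but by anchoring the Helly application at $z$ and shrinking the interval from the \emph{far} end rather than the near one. Concretely: first reduce to the case $a=dist(x,z)=1$ (if $a\ge 2$, the second vertex $x_1$ of a shortest $x$--$y$ path through $z$ lies in $N^r[v]$ by the $a=1$ case, and then $z\in I(x_1,y)$ with $dist(x_1,y)=d-1$, so the induction on $d$ applies). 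Once $a=1$ and $b=dist(z,y)\ge 2$, apply the Helly property to $N^{b-1}[z]$, $N^1[y]$ and $N^r[v]$: the crucial third pairwise intersection $N^{b-1}[z]\cap N^r[v]$ is now witnessed by $x$ itself, since $dist(z,x)=1\le b-1$. The resulting vertex $y'$ satisfies $dist(z,y')=b-1$, $y'\sim y$ and $dist(v,y')\le r$, whence $dist(x,y')=d-1$ and $z\in I(x,y')$, and the induction on $dist(x,y)$ closes. With that modification your overall plan (induction on $dist(x,y)$, Helly plus $C_4$-freeness in the base case) yields a complete and self-contained proof of the cited lemma.
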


\begin{lemma}\label{lm:clique-slice}  For every vertices $v$ and $u$ of a $C_4$-free Helly graph $G$ and any integer $k\le dist(u,v)$, the set $L(u,k,v)$ is a clique.
\end{lemma}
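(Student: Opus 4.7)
The plan is to show that any two distinct vertices $x,y \in L(u,k,v)$ are adjacent. The trivial cases $k=0$ and $k=dist(u,v)$ reduce $L(u,k,v)$ to a singleton, so I assume $1 \le k \le dist(u,v)-1$; writing $d = dist(u,v)$, I will induct on $dist(x,y)$ to reduce to the case $dist(x,y) = 2$, and then exhibit an induced $C_4$ to contradict the hypothesis.

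For the inductive reduction I would use Lemma~\ref{lm:ball-convex}. Since $x$ and $y$ both belong to the balls $N^k[u]$ and $N^{d-k}[v]$, convexity gives $I(x,y) \subseteq N^k[u] \cap N^{d-k}[v]$, and the triangle inequality $dist(u,z) + dist(z,v) \ge d$ then forces equality in both coordinates for every $z \in I(x,y)$, hence $I(x,y) \subseteq L(u,k,v)$. If $dist(x,y) \ge 3$, any internal vertex of a shortest $x$-$y$ path lies in $L(u,k,v)$ and is strictly closer to both $x$ and $y$, so by the induction hypothesis it is adjacent to both of them, giving $dist(x,y) \le 2$, a contradiction.

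The remaining case $dist(x,y) = 2$ is the crux. Applying Lemma~\ref{centroids} to the triple $(u,x,y)$ with mutual distances $k,k,2$ (which falls under case~(1), since the sum is even) yields a common vertex $w$ of the three shortest paths with $dist(u,w) = k-1$; being on a shortest path of length $2$ from $x$ to $y$, $w$ is a common neighbor of $x$ and $y$. Applying the same lemma symmetrically to $(v,x,y)$ yields a common neighbor $w'$ of $x$ and $y$ with $dist(v,w') = d-k-1$, hence $dist(u,w') = k+1$. In particular $w \ne w'$, and if $w$ were adjacent to $w'$ then $dist(u,w') \le k$, a contradiction; so $w \not\sim w'$, and together with the standing assumption $x \not\sim y$ this makes $x,w,y,w'$ an induced $C_4$. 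I expect the main obstacle to be precisely this last step: producing two genuinely distinct, non-adjacent common neighbors of $x$ and $y$ to anchor the forbidden cycle, which is exactly what the mismatched distances to $u$ obtained from the two applications of Lemma~\ref{centroids} deliver.
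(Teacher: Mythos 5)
Your proof is correct, but it follows a genuinely different route from the one in the paper. The paper disposes of an arbitrary non-adjacent pair $x,y\in L(u,k,v)$ with $\ell=dist(x,y)\ge 2$ in a single step: the three balls $N^1[x]$, $N^{\ell-1}[y]$ and $N^{k-1}[u]$ pairwise intersect, so the Helly property produces a vertex $z\in I(x,y)$ with $dist(u,z)\le k-1$; convexity of $N^{dist(v,x)}[v]$ (Lemma~\ref{lm:ball-convex}) then forces $z$ into that ball, and $dist(u,v)\le dist(u,z)+dist(z,v)\le k-1+dist(v,x)=dist(u,v)-1$ is the contradiction. You instead use ball convexity only to show $I(x,y)\subseteq L(u,k,v)$ and reduce to a closest non-adjacent pair with $dist(x,y)=2$, and then invoke pseudo-modularity (Lemma~\ref{centroids}) twice — once for the triple $(u,x,y)$ and once for $(v,x,y)$ — to produce two common neighbours of $x$ and $y$ at distances $k-1$ and $k+1$ from $u$, which are therefore distinct and non-adjacent and close an induced $C_4$. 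Both arguments are sound (your parity check correctly places both triples in case (1) of Lemma~\ref{centroids}, and the distance bookkeeping for $w$ and $w'$ is right). The paper's version is shorter and uses $C_4$-freeness only implicitly, through the convexity of balls; yours makes the forbidden induced $C_4$ explicit and isolates where each hypothesis is used, at the cost of an extra inductive reduction and a second structural lemma. One stylistic caveat: your inductive step is really a minimal-counterexample argument (no non-adjacent pair at distance $2\le \ell'<\ell$ exists), since for internal vertices at distance $1$ from $x$ or $y$ adjacency is trivial rather than given by the induction hypothesis; this is easily repaired by phrasing the induction as ``$L(u,k,v)$ contains no pair at distance exactly $\ell$.''
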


\begin{proof} Consider any two vertices $x,y\in L(u,k,v)$ and assume that they are not adjacent. Let $\ell=dist(x,y)\ge 2$. Consider balls $N^1[x], N^{\ell-1}[y]$ and $N^{k-1}[u]$ in $G$. These balls pairwise intersect. By the Helly property, there must exist a vertex $z$ on a shortest path from $x$ to $y$ which is at distance at most $k-1$ from $u$. As by Lemma~\ref{lm:ball-convex} the ball $N^{dist(v,x)}[v]$ is convex, $z$ must belong to $N^{dist(v,x)}[v]$.
Thus, $dist(u,v)\leq dist(u,z)+dist(z,v)\leq k-1+dist(v,x)=dist(u,v)-1$, and a contradiction arises.
\end{proof}

We now introduce an important brick-basis of our approach.
The {\em multi-sweep} heuristic of Corneil et al. consists in performing a BFS~\cite{CDK03}, or a variant of it~\cite{CDHP01}, from an arbitrary vertex $v$, then from a farthest vertex $u \in F(v)$ (usually the last one visited), and finally to output $e(u)$ as an estimate of $diam(G)$. On general graphs, there may be an arbitrary gap between $diam(G)$ and the output of this heuristic~\cite{CDHP01}. However, on many graph classes it gives us a {\em constant additive approximation} of the diameter~\cite{CDEHV08,CDHP01,CDK03,Dra89}. We now prove that in particular, it is the case for $C_4$-free Helly graphs.

\begin{lemma}\label{lm:main}   Let $G$ be a $C_4$-free Helly graph with diameter $d$ and radius $r$. Let $s$ be an arbitrary vertex, $v$ be a vertex most distant from $s$, and $(x,y)$ be a diametral pair of $G$. Then, $e(v)\ge d-2$. 

Furthermore, if $e(v)= d-2$, then $e(v)=2r-3=dist(v,x)=dist(v,y)$ and $d=2r-1$. 
So, in particular, if $e(v)$ is even, then  $e(v)\ge d-1$. 
\end{lemma}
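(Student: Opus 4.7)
The plan is to prove $e(v) \geq d - 2$ by contradiction and then read off the equality statement (and the even-parity corollary) from the same derivation. Assume for contradiction that $e(v) \leq d - 3$; then $dist(v, x), dist(v, y) \leq d - 3$ while $dist(v, x) + dist(v, y) \geq d$. I apply Lemma~\ref{centroids} simultaneously to the three triples $(s, x, y)$, $(s, x, v)$, and $(s, y, v)$; each returns either a common meeting vertex (case 1) or a meeting triangle (case 2) on the three respective shortest paths. For exposition record the all-case-1 configuration: from $(s, x, y)$ obtain a meeting vertex $w$ with $dist(s, w) = p$, $dist(w, x) = q$, $dist(w, y) = k$, so that $d = q + k$, $dist(s, x) = p + q$, and $dist(s, y) = p + k$; let $\alpha, \beta$ be the meeting vertices on $P(s, v)$ produced by the second and third triples, lying at respective distances $p'$ and $p''$ from $s$, and set $T := dist(s, v)$. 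Since $v \in F(s)$, $T \geq p + \max(q, k)$; without loss of generality $k \geq q$. Plugging the Lemma~\ref{centroids} identities into $dist(v, x), dist(v, y) \leq d - 3$ and simplifying yields the lower bounds $p' \geq p + 2$ and $p'' \geq p + 2$.

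The decisive step is a clique-slice synchronization using the $C_4$-free Helly property. Since $\alpha, \beta \in I(s, v)$, assume without loss of generality $p' \leq p''$; a prefix of a shortest $s$-$v$ path through $\beta$ produces a vertex $\beta' \in L(s, p', v)$ at distance $p'' - p'$ from $\beta$. By Lemma~\ref{lm:clique-slice}, the slice $L(s, p', v)$ is a clique in $G$, so $dist(\alpha, \beta') \leq 1$ and hence $dist(\alpha, \beta) \leq (p'' - p') + 1$. Combining the chain $dist(\alpha, y) \leq dist(\alpha, \beta) + dist(\beta, y)$ with the triangle inequality $dist(\alpha, x) + dist(\alpha, y) \geq dist(x, y) = d$ and the identities $dist(\alpha, x) = dist(s, x) - p'$ and $dist(\beta, y) = dist(s, y) - p''$ reduces after substitution to $p' \leq p$, contradicting $p' \geq p + 2$. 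The seven remaining combinations of Lemma~\ref{centroids} cases are handled identically up to $\pm 1$ offsets on the derived bounds, and each still yields a contradiction under $e(v) \leq d - 3$, establishing $e(v) \geq d - 2$.

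For the equality case $e(v) = d - 2$ I rerun the same derivation, allowing every inequality to be tight. Tracing which case combination this forces, exactly one survives: Lemma~\ref{centroids} must return the triangle case for all three triples, with every intermediate inequality saturated. This pins down $q = k$, so $d = 2q + 1$ is odd, and since $d \in \{2r - 1, 2r\}$ we conclude $d = 2r - 1$ and $dist(v, x) = dist(v, y) = e(v) = d - 2 = 2r - 3$. The corollary is then immediate: $2r - 3$ is odd, so $e(v) = d - 2$ cannot hold when $e(v)$ is even, and combined with the main inequality this gives $e(v) \geq d - 1$. The main obstacle I anticipate is the bookkeeping across the eight combinations of Lemma~\ref{centroids} outcomes, each producing a slightly different linear system; the clique-slice synchronization is the uniform content of the proof, and I would state it as a short preliminary claim so that it slots cleanly into every sub-case.
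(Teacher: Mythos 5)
Your proof of the inequality $e(v)\ge d-2$ is essentially correct and takes a genuinely different route from the paper. Where you apply Lemma~\ref{centroids} to the three triples $(s,x,y)$, $(s,x,v)$, $(s,y,v)$ and then synchronize the two meeting points on $I(s,v)$ through the clique slice $L(s,p',v)$, the paper instead applies the Helly property directly to the three balls $N^{\ell-k}[s]$, $N^k[v]$, $N^k[y]$ (and the analogous triple with $x$), placing two vertices $a,b$ in the single clique slice $L(v,k,s)$ and concluding $d\le dist(x,b)+1+dist(a,y)\le 2k+1$ whenever $\max\{dist(v,x),dist(v,y)\}\le 2k$. The paper's route is shorter and case-free; yours trades that for the eight-way case analysis over the outcomes of Lemma~\ref{centroids}, but the uniform bookkeeping does close: writing $\epsilon_1,\epsilon_2\in\{0,1\}$ for the triangle-case offsets, your lower bounds read $2p'\ge dist(s,x)+T-(d-3)-\epsilon_1$ and $2p''\ge dist(s,y)+T-(d-3)-\epsilon_2$, your synchronization gives $2\min(p',p'')\le dist(s,x)+dist(s,y)-d+1$, and $T\ge\max\{dist(s,x),dist(s,y)\}$ makes these incompatible in every combination. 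Note also that the parity consequence ``$e(v)$ even $\Rightarrow e(v)\ge d-1$'' falls out of the paper's intermediate claim immediately (take $2k=e(v)$), without routing through the equality case as you do.

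The genuine gap is in the equality case $e(v)=d-2$. You assert that ``rerunning the derivation with every inequality tight'' forces the all-triangle configuration, $q=k$, and $dist(v,x)=dist(v,y)=d-2$; it does not. Replacing $d-3$ by $d-2$ in your linear system only forces tightness along the chain for $\min(p',p'')$: if, say, $p'\le p''$, you obtain $T=dist(s,y)$, $\epsilon_1=1$ and $dist(v,x)=d-2$, but no constraint in your system then pins down $dist(v,y)$ (a configuration with $dist(v,y)\le d-3$ and $p''>p'$ remains arithmetically consistent), and nothing forces $q=k$. Ruling these out requires new geometric input beyond the linear bookkeeping: the paper first gets $d$ odd by pure parity (if $e(v)=2k$ then $d\le 2k+1$, so $e(v)=d-2$ forces $e(v)$ odd, hence $d=2k+3=2r-1$), and then, assuming $dist(v,x)\le 2k$ for contradiction, it constructs \emph{additional} vertices -- a vertex $b\in L(v,k,s)$ near $x$ via the Helly property, and either a triangle vertex $v'\in L(v,k,s)$ (via Lemma~\ref{centroids}, when $dist(s,y)=\ell$) or a Helly vertex $a\in L(v,k+1,s)$ (when $dist(s,y)\le\ell-1$) near $y$ -- and uses Lemma~\ref{lm:clique-slice} on these slices to force $d\le 2k+2$, a contradiction. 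You would need to supply an argument of this kind; the tightness trace alone does not deliver the ``furthermore'' part of the lemma.
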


\begin{proof} By Lemma~\ref{lem:unimodal}, $d$ is either $2r$ or $2r-1$. Let $\ell=e(s)=dist(s,v)$. For vertices $s,v,x,y$ of $G$, we have $dist(x,y)=d$, $dist(s,x)\leq dist(s,v)=\ell$, $dist(s,y)\leq dist(s,v)=\ell$.
Furthermore, the three of $dist(v,x)$, $dist(v,y)$ and $dist(s,v)=\ell$ are at most $e(v)$.

First we show that, if $\max\{dist(v,y),dist(v,x)\}\le 2k$ for some integer $k$, then $d\leq 2k+1$.
By the triangular inequality, we have $2k \leq 2e(s) = 2\ell$.
Consider balls $N^{\ell-k}[s]$, $N^{k}[v]$, $N^{k}[y]$ in $G$. As $dist(v,y)\le 2k$ and $dist(s,y)\le \ell$, those balls pairwise intersect. By the Helly property, there is a vertex $a$ in $G$ belonging to all three balls. Necessarily, $dist(a,s)=\ell-k$, $dist(a,v)=k$ and $dist(a,y)\leq k$. Similarly, we can get a vertex $b$ in $G$ such that $dist(b,s)=\ell-k$, $dist(b,v)=k$ and $dist(b,x)\leq k$. As both $a$ and $b$ are in $L(v,k,s)$, by Lemma \ref{lm:clique-slice}, $dist(a,b)\le 1$. Thus, $d=dist(x,y)\leq dist(x,b)+dist(b,a)+dist(a,y)\leq 2k+1$.

Now, if $e(v)=2k$ for some integer $k$, then $\max\{dist(v,y),dist(v,x)\}\le e(v)=2k$ and, therefore,
$d\leq 2k+1=e(v)+1$. If $e(v)=2k+1$ for some integer $k$, then either $\max\{dist(v,y),dist(v,x)\}< e(v)=2k+1$ and hence $d\leq 2k+1=e(v)$ or
$\max\{dist(v,y),dist(v,x)\}= e(v)=2k+1$. As in the latter case $\max\{dist(v,y),dist(v,x)\}< 2k+2$, we also get $d\leq 2k+3=e(v)+2$.

\smallskip
In what follows, we consider this case, when $e(v)=2k+1=\max\{dist(v,y),dist(v,x)\}$, in more details. If $d$ is even (i.e., $d=2r$), then $d\le 2k+2$ and therefore $d\leq e(v)+1$. Assume now that $d$ is odd (i.e., $d=2r-1$) and that $d=e(v)+2=2k+3=2r-1$. That is, $r=k+2$. We will show that, under these conditions, $dist(v,y)=dist(v,x)=2k+1$ 
must hold.
For that assume w.l.o.g. that $dist(v,y)=\max\{dist(v,y),dist(v,x)\}=2k+1$. Since $v \in F(s)$, we have that $dist(s,y)\le \ell=dist(s,v)$. Furthermore, by the triangular inequality, we have $2k+1 \leq 2e(s) = 2\ell$, and so $\ell \geq k+1$. We shall use the following intermediate results:
\begin{itemize}
\item 
If $dist(s,y)= \ell$ then, by Lemma \ref{centroids}, there is a triangle $\bigtriangleup (v',s',y')$ in $G$ such that $dist(s,s')=\ell-k-1$, $dist(v',v)=dist(y,y')=k$.  Necessarily, $v'\in L(v,k,s)$ and $s'\in L(v,k+1,s)$.
\item
If $dist(s,y)\le \ell-1$, consider balls $N^{\ell-k-1}[s]$, $N^{k+1}[v]$, $N^{k}[y]$ in $G$. As these balls pairwise intersect, by the Helly property, there is a vertex $a$ in $G$ with $dist(a,s)=\ell-k-1$, $dist(a,v)=k+1$ and $dist(a,y)= k$. That is, $a\in L(v,k+1,s)$.
\item
If $dist(v,x)\le 2k$ then, as before, we can get a vertex $b$ in $G$ with $dist(b,s)=\ell-k$, $dist(b,v)=k$ and $dist(b,x)\leq k$. Necessarily, $b\in L(v,k,s)$.
\end{itemize}
Summarizing, we get the following combinations. If $dist(v,y)=2k+1$, $dist(v,x)\le 2k$ and $dist(s,y)= \ell$, then $d=dist(x,y)\leq dist(x,b)+dist(b,v')+dist(v',y')+dist(y',y)\le k+1+1+k=2k+2$ (notice that, by Lemma \ref{lm:clique-slice}, $dist(b,v')\le 1$), contradicting with  $d=2k+3$.
If $dist(v,y)=2k+1$, $dist(v,x)\le 2k$ and $dist(s,y)\le \ell-1$, then  $d=dist(x,y)\leq dist(x,b)+dist(b,a)+dist(a,y)\le k+2+k=2k+2$ (notice that, by Lemma \ref{lm:clique-slice}, $dist(b,a)\le 2$ as $a\in L(v,k+1,s)$ and $b\in L(v,k,s)$), contradicting with  $d=2k+3$.

Hence, $dist(v,x)=2k+1=dist(v,y)$ must hold. 
\end{proof}

We left open whether the lower-bound of Lemma~\ref{lm:main} can be refined to $e(v) \geq d-1$.
Note that this would be best possible.
Indeed although in some cases of interest, {\it e.g.} interval graphs, the output of the multi-sweep heuristic always {\em equals} the diameter~\cite{DNB97}, this nice property does not hold for strongly chordal graphs and so neither for $C_4$-free Helly graphs~\cite{CDHP01}.
Therefore, in Section~\ref{sec:diametral-pairs} we shall need additional tests in order to decide whether the output of this heuristic equals the diameter (and to compute a diametral pair when it is not the case).  

\smallskip
Before finally proving the main result of this subsection, we need the following gated property of Helly graphs.
The (weak) diameter of a set $S$ is equal to $diam(S) := \max_{x,y \in S} dist(x,y)$. 

\begin{lemma}\label{lem:gated}
Let $G$ be a Helly graph and $S$ be a subset of weak diameter at most two. Then, for any $v \notin S$ there exists a vertex $g_S(v) \in N^{dist(v,S)-1}[v] \cap \bigcap \{ N(x) \mid x \in Pr(v,S) \}$.
\end{lemma}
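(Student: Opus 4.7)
The plan is to apply the Helly property to a carefully chosen family of balls. Let $k := dist(v,S) \geq 1$ (which is positive since $v \notin S$), and consider the family $\mathcal{F}$ consisting of the ball $N^{k-1}[v]$ together with the unit balls $N[x]$ for each $x \in Pr(v,S)$. Any vertex in the common intersection of $\mathcal{F}$ lies in $N^{k-1}[v]$ and is adjacent to or equal to every $x \in Pr(v,S)$; since such a vertex is at distance at most $k-1 < k = dist(v,x)$ from $v$, it cannot coincide with any $x \in Pr(v,S)$, so it gives precisely the desired vertex $g_S(v)$.

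The first step is therefore to verify that the balls in $\mathcal{F}$ pairwise intersect. For two unit balls $N[x]$ and $N[y]$ with $x,y \in Pr(v,S)$: since $S$ has weak diameter at most two, $dist(x,y) \leq 2$, and any middle vertex of a shortest $x$-$y$ path witnesses $N[x] \cap N[y] \neq \emptyset$ (and if $dist(x,y) \leq 1$, then $x \in N[x] \cap N[y]$). For the ball $N^{k-1}[v]$ and a unit ball $N[x]$ with $x \in Pr(v,S)$: taking the vertex at distance $k-1$ from $v$ on a shortest $v$-$x$ path (or $v$ itself when $k=1$) yields a common element, since $dist(v,x) = k$ means such a vertex is adjacent to $x$.

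The second step is to invoke the Helly property on this pairwise intersecting family of balls, which directly produces the sought vertex $g_S(v)$. There is no real obstacle here: the only subtlety is to check that $g_S(v) \in N(x)$ rather than merely $N[x]$, but as noted above, the distance bound $dist(v, g_S(v)) \leq k-1 < dist(v,x)$ forces $g_S(v) \neq x$ for every $x \in Pr(v,S)$.
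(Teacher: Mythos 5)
Your proof is correct and follows exactly the same route as the paper: apply the Helly property to the pairwise intersecting family consisting of $N^{dist(v,S)-1}[v]$ and the unit balls centered at the vertices of $Pr(v,S)$. The extra observation that the resulting vertex lies in the open neighbourhood $N(x)$ (since it is strictly closer to $v$ than any $x \in Pr(v,S)$) is a detail the paper leaves implicit, and you have verified it correctly.
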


As it is standard~\cite{ChD94} we call such a vertex a {\em gate} of $v$, and we denote it by $g_S(v)$ -- we will omit the subscript if the set $S$ is clear from the context.

\begin{proof}
Since $S$ has weak diameter at most two the balls $N^{dist(v,S)-1}[v]$ and $N[x], \forall x \in Pr(v,S)$ pairwise intersect. Therefore, the result follows from the Helly property.
\end{proof}

\begin{remark}\label{rk:compute-gate}
For any fixed $S$ as above, we can compute a gate for every vertex $v \notin S$ in linear time.
 For that we first run a breadth-first search from $S$. Then, we recursively assign a gate to every vertex of $V \setminus S$, as follows. If $v \in N(S)$, then $v$ is its own gate and we set $p(v) = |N(v) \cap S|$. Otherwise, we choose for every vertex $v$ a father $u$, one step closer to $S$, that maximizes $p(u)$. Indeed, by induction, $p(u) = |Pr(u,S)|$. Then, we choose for $v$ the same gate as for its father $u$, and we set $p(v) = p(u)$. 

We observe that, more generally, if $S$ is an {\em arbitrary} vertex-subset (possibly, of weak diameter larger than two), then for every vertex $v$ with a gate in $N(S)$ this above procedure correctly computes such a gate. Indeed, to every vertex $v$ it associates a vertex $v^* \in N(S) \cap I(v,S)$ which maximizes $|N(v^*) \cap S|$. This observation is crucial in our proof of Theorem~\ref{thm:all-ecc}. 
\end{remark}

We are now ready to improve the result of Theorem~\ref{thm:helly-rad} for {\em $C_4$-free} Helly graphs, as follows:

\begin{theorem}\label{thm:compute-rad}
If $G$ is a $C_4$-free Helly graph then we can compute a central vertex and so $rad(G)$ in linear time.
\end{theorem}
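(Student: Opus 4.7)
The idea is that a constant number of BFS sweeps, together with linear-time set manipulations, suffice to locate a central vertex. First, run the multi-sweep heuristic: pick any $s$, BFS from $s$ to find $v \in F(s)$, and BFS from $v$ to find $u \in F(v)$ together with $e(v)$. By Lemma~\ref{lm:main}, $e(v) \ge diam(G) - 2$, and by Lemma~\ref{lem:unimodal}, $r = \lceil diam(G)/2 \rceil$, so the radius $r$ is one of at most two explicit candidates determined by $e(v)$ (the parity refinement in Lemma~\ref{lm:main} sometimes pins it down exactly). A third BFS from $u$ lets us read off the slice $S := L(u,r,v)$ in linear time for each candidate: Lemma~\ref{lem:intersect-center} guarantees $S \cap C(G) \ne \emptyset$ for the correct one, and Lemma~\ref{lm:clique-slice} tells us that $S$ is a clique.

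Next, pick any $c \in S$ and run BFS from $c$. Since $c$ is adjacent to every other vertex of $S$ and some vertex of $S$ is central, we have $e(c) \le r + 1$; if $e(c) = r$, output $c$. Otherwise $e(c) = r + 1$ and we must identify a central neighbor $c^* \in S \cap C(G) \cap N(c)$. Taking $w \in F(c)$ from the BFS and applying Lemma~\ref{lem:intersect-center} to the pair $(c, w)$, the sub-slice $L(w, r, c)$ also contains a central vertex and, since $dist(c,w) - r = 1$, lies entirely inside $N(c)$; by Lemma~\ref{lm:clique-slice} it is again a clique. I would then combine Lemma~\ref{lem:gated} (a ``gate'' in the weak-diameter-$1$ set $S$ adjacent to every vertex of $S$, computable in linear time via Remark~\ref{rk:compute-gate}), the pseudo-modular median of the triple $(v, u, w)$ from Lemma~\ref{centroids}, and Lemma~\ref{lm:ball-convex} (ball convexity) to isolate a canonical vertex of the sub-clique $L(w,r,c)$ and certify it as central with one final BFS.

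The overall running time is a constant number of BFS passes plus ${\cal O}(n+m)$ set work, hence linear. The main obstacle I expect is the extraction step inside the clique-slice $S$: any naive approach would require a BFS from every vertex of $S$, which is already too expensive. Showing that a specific vertex obtained from gates, pseudo-modular medians, and ball convexity is provably central is where the full power of $C_4$-freeness (as opposed to mere Helly-ness) must be invoked---as the Chepoi remark earlier in the text emphasizes, the multi-sweep heuristic already fails for diameter approximation in the general Helly case, so some strictly stronger structure (here ball convexity and clique-slices) is essential for the linear-time extraction of a central vertex to go through.
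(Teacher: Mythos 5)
Your setup matches the paper's: the double sweep gives a vertex whose eccentricity is within $2$ of the diameter, Lemma~\ref{lem:unimodal} then leaves at most two candidate values for $r$, and for the correct candidate Lemma~\ref{lem:intersect-center} and Lemma~\ref{lm:clique-slice} show that the slice between the two sweep endpoints is a clique meeting $C(G)$. But the heart of the proof is exactly the step you defer --- extracting a central vertex from that clique in linear time --- and your proposal does not actually supply it. Saying you ``would combine'' gates, the pseudo-modular median of $(v,u,w)$, and ball convexity ``to isolate a canonical vertex \ldots and certify it as central with one final BFS'' is not an argument: you give no criterion that provably selects a central vertex, and a single verification BFS only helps if the selection is already guaranteed to succeed. (The pseudo-modular median and ball convexity play no role in the paper's proof of this theorem; they are used to establish Lemma~\ref{lm:clique-slice}, which you may simply cite.) Your intermediate observation that any $c$ in the clique has $e(c)\le r+1$ is true but does not close the gap either, since distances from $c$ do not determine distances from a neighbour $c^*$ of $c$.

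The missing idea is a local, gate-based characterization of the central vertices inside the clique $C$. Run a multi-source BFS from $C$ and compute, for every $x\notin C$, the value $dist(x,C)$ and a gate $g(x)\in N^{dist(x,C)-1}[x]$ adjacent to all of $Pr(x,C)$; this exists by Lemma~\ref{lem:gated} (a clique has weak diameter $1$) and is computable in linear time by Remark~\ref{rk:compute-gate}. Since $C$ is a clique, every $x$ with $dist(x,C)\le r-1$ is automatically within distance $r$ of \emph{every} vertex of $C$, so only the vertices at distance exactly $r$ from $C$ matter; and for such an $x$, a vertex $c\in C$ satisfies $dist(c,x)\le r$ if and only if $c\in Pr(x,C)$, which holds if and only if $c$ is adjacent to $g(x)$ (one direction because $g(x)$ dominates $Pr(x,C)$, the other because $c\in N(g(x))$ gives $dist(c,x)\le 1+(r-1)=r$). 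Hence $c\in C$ has eccentricity $r$ exactly when it is adjacent to the gates of all vertices at distance $r$ from $C$, a condition testable for all of $C$ simultaneously in linear time by counting, for each $c\in C$, how many distinct gates it is adjacent to. Running this test for each of the (at most two) candidate values of $r$ and returning the smallest one that succeeds completes the proof. Without this (or an equivalent) extraction criterion, your proof is incomplete.
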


\begin{proof}
Let $v$ be an arbitrary vertex, let $u \in F(v)$ and let $w \in F(u)$.
By Lemma~\ref{lm:main}, $e(u) = dist(u,w) \geq diam(G) - 2$.
Therefore, by Lemma~\ref{lem:unimodal}, $rad(G) \in \{ \left\lceil e(u)/2 \right\rceil, \left\lceil (e(u)+1)/2 \right\rceil, 1+ \left\lceil e(u)/2 \right\rceil\}$ (two of these numbers being equal, it gives us two possibilities).
In order to decide in which case we are, we use Lemma~\ref{lem:intersect-center}.
Indeed, if $rad(G) = r$ then $L(w,r,u) \cap C(G) \neq \emptyset$.
Furthermore, by Lemma~\ref{lm:clique-slice}, this set $C = L(w,r,u)$ is a clique.
We compute, for every $x \notin C$, its distance $dist(x,C)$ and a corresponding gate $g(x)$ -- which exists by Lemma~\ref{lem:gated}.
As observed in Remark~\ref{rk:compute-gate}, it takes linear time.
Then, $rad(G) = r$ implies $\max_{x\in V} dist(x,C) = r$.
If so then note that a vertex of $C$ has eccentricity $r$ if and only if it is adjacent to the gate of every vertex at a distance exactly $r$ from $C$.
Overall, in order to compute $rad(G)$ we pick the smallest $r$ such that a vertex of eccentricity $r$ can be extracted from $L(w,r,u)$.
\end{proof}

\subsection{Computing a diametral pair}\label{sec:diametral-pairs}

We base on the results from Section~\ref{sec:central-vertex} so as to prove the following theorem:

\begin{theorem}\label{thm:compute-diam}
If $G$ is a $C_4$-free Helly graph then we can compute a diametral pair and so $diam(G)$ in linear time.
\end{theorem}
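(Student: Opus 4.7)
The plan is to combine the linear-time radius computation of Theorem~\ref{thm:compute-rad} with a two-sweep argument based on Lemma~\ref{lm:main}, then resolve a single residual ambiguity by exploiting the $C_4$-free Helly structure.

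First, invoke Theorem~\ref{thm:compute-rad} to compute in linear time the radius $r = rad(G)$ and a central vertex $c$. By Lemma~\ref{lem:unimodal}, the diameter $d$ satisfies $d \in \{2r-1, 2r\}$, so the task reduces to deciding between these two values and producing a witnessing pair.

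Second, apply the two-sweep heuristic: pick an arbitrary $s$, let $v \in F(s)$ and $w \in F(v)$, and set $e := dist(v, w) = e(v)$. By Lemma~\ref{lm:main} we have $e \geq d - 2$, and whenever $e$ is even we even get $e \geq d - 1$. This gives a short case split. If $e = 2r$ then $(v,w)$ is a diametral pair. If $e \in \{2r - 3, 2r - 2\}$, then the parity refinement of Lemma~\ref{lm:main} forces $d = 2r - 1$, and the remaining task is only to locate a vertex of eccentricity $2r - 1$, which can be handled by the same structural tools used in the ambiguous case below. The genuinely ambiguous case is $e = 2r - 1$: here the diameter could be either $2r - 1$ (in which case $(v,w)$ already witnesses it) or $2r$ (in which case a distance-$2r$ pair is still to be found).

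The heart of the proof is the ambiguous case. The key observation is that if $(x, y)$ is any pair with $dist(x,y) = 2r$, then $dist(c,x) + dist(c,y) \geq 2r$ combined with $dist(c,x), dist(c,y) \leq e(c) = r$ forces $dist(c,x) = dist(c,y) = r$ and $c \in I(x,y)$. So both endpoints lie in the sphere $F(c) := \{u : dist(u,c) = r\}$, which is computed in linear time by a single BFS from $c$. To decide whether $F(c)$ contains a pair at distance exactly $2r$, I would invoke the pseudo-gated Lemma~\ref{lem:gated} on the weak-diameter-two set $N[c]$: by Remark~\ref{rk:compute-gate}, a canonical gate in $N(c)$ can be assigned to every vertex of $F(c)$ in linear total time. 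The $C_4$-free ingredients (ball convexity from Lemma~\ref{lm:ball-convex} and clique slices from Lemma~\ref{lm:clique-slice}) then let one argue that two vertices of $F(c)$ lie at distance $2r$ if and only if their gates sit in an appropriate antipodal configuration in $N[c]$. This recasts detection as a linear-time predicate test on an auxiliary split (Helly) graph of total size $O(n + m)$ whose clique side is derived from $N(c)$ and whose independent side is $F(c)$.

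The main obstacle will be this last step: constructing the auxiliary split graph, proving that its antipodality predicate on $N(c)$ faithfully encodes ``distance exactly $2r$ in $G$'', and exhibiting an actual diametral pair of $G$ whenever the predicate holds. The interplay between $C_4$-freeness (which both constrains the sphere $F(c)$ via clique slices and forces balls to be convex) and the Helly property (which yields gates on small-diameter sets) is what should make this encoding both $O(n+m)$ in size and checkable in linear time.
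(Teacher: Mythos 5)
Your skeleton matches the paper's: compute $r=rad(G)$ via Theorem~\ref{thm:compute-rad}, run the two-sweep heuristic, and split on the value of $e(v)$ relative to $2r$ using Lemma~\ref{lm:main} and Lemma~\ref{lem:unimodal}. However, there are two genuine gaps, and they sit exactly where the paper does its real technical work.

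First, you dismiss the cases $e(v)\in\{2r-3,2r-2\}$ by saying they are ``handled by the same structural tools used in the ambiguous case below,'' but they are not. In those cases $d=2r-1$ is known, yet one must still \emph{exhibit} a pair at distance $2r-1$; for such a pair the endpoints are not both at distance exactly $r$ from a central vertex $c$, $c$ need not lie on a shortest path between them, and your whole ``antipodal configuration in $N[c]$'' framing collapses. The paper instead anchors the search at a mutually far apart pair $u,w$ and proves dedicated projection characterizations (Lemmas~\ref{lem:projection} and~\ref{lem:projection-odd}): $(x,y)$ is diametral iff $x$ and $y$ project onto the clique slice $L(u,r-1,w)$ at distance $r-1$ with \emph{disjoint} projections. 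Second, and more seriously, once such a characterization is in hand the remaining task is to find two vertices whose projections on a clique are disjoint --- an instance of {\sc Disjoint Set}, which is SETH-hard in general. Calling this ``a linear-time predicate test on an auxiliary split (Helly) graph'' assumes away the hardest step: the paper needs Lemma~\ref{lem:inclusion-proj} to prune the gates to a stable set, Lemma~\ref{lem:clique-helly} to prove the resulting split graph is Helly, and the partition-refinement algorithm of Lemma~\ref{lem:split} to find a diametral pair of that split Helly graph in linear time. None of this is supplied or even sketched in your proposal. Finally, in the ambiguous case $e(v)=2r-1$ your clique side ``derived from $N(c)$'' is ill-founded ($N(c)$ is not a clique, only $N[c]$ has weak diameter two), and distance exactly $2r$ through $c$ requires the projections on $N(c)$ to be not merely disjoint but mutually non-adjacent; the paper takes a different route here, computing \emph{all} central vertices (using that $C(G)$ is a clique when $d=2r$, Lemma~\ref{lem:clique-center}, together with gates and pseudo-gates on $N[c]$) and then invoking Lemma~\ref{lem:intersect-center}. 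You correctly flagged the last step as the main obstacle, but that obstacle is the theorem.
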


\subsubsection*{Digression: an application to chordal Helly graphs}
Our results in the paper are proved valid assuming the input graph to be Helly.
However, the best-known recognition algorithms for this class of graphs run in quadratic time~\cite{LiS07}.
In what follows, we first explain an interesting application of Theorem~\ref{thm:compute-diam} to general chordal graphs.
We recall that it can be decided in linear time whether a given graph is chordal~\cite{RTL76}.

The Lexicographic Breadth-First-Search (LexBFS)~\cite{RTL76}, of which a description can be found in Fig.~\ref{fig:lexbfs}, is a standard algorithmic procedure that runs in linear time~\cite{HMPV00}.

\begin{figure}[h!]\centering
\includegraphics[width=.75\textwidth]{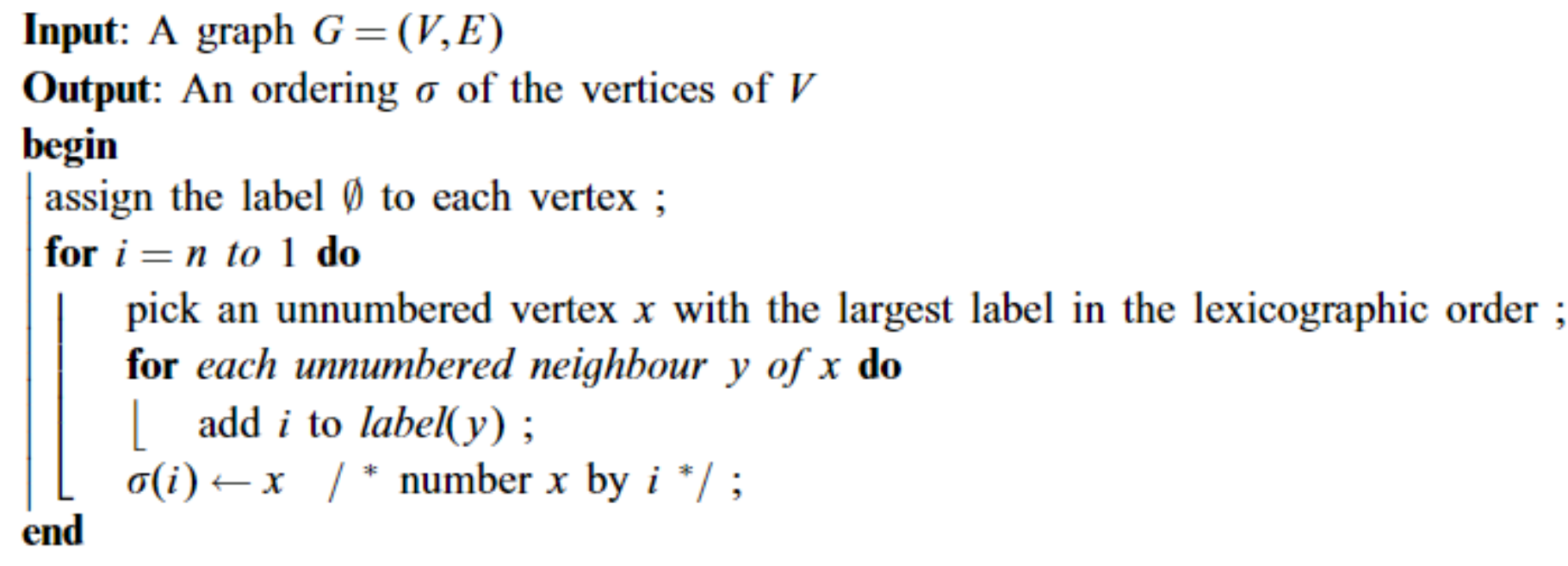}
\caption{Algorithm LexBFS~\cite{RTL76}.}
\label{fig:lexbfs}
\end{figure}

We use the following results on LexBFS in our analysis:

\begin{lemma}[\cite{DNB97}]\label{lem:additive-approx}
Let $v$ be the vertex visited last by an arbitrary LexBFS. If the graph is chordal, then the eccentricity of $v$ is within $1$ of the diameter.
\end{lemma}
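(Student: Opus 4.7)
The plan is to prove, by contradiction, that the last LexBFS vertex $v$ on a chordal graph $G$ satisfies $e(v) \geq \mathrm{diam}(G)-1$. I would build the argument on two classical structural facts for LexBFS: first, the Rose--Tarjan--Lueker theorem that the reverse of a LexBFS ordering $\sigma$ on a chordal graph is a perfect elimination ordering, which in particular makes $v$ simplicial (its closed neighborhood is a clique); second, the so-called \emph{four-point property} of LexBFS, namely that for any $a <_\sigma b <_\sigma c$ with $ac \in E(G)$ and $ab \notin E(G)$ there exists some $d <_\sigma a$ with $db \in E(G)$ and $dc \notin E(G)$.

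Suppose, for contradiction, that $e(v) \leq \mathrm{diam}(G)-2$, and fix a diametral pair $(x,y)$. Setting $k = d(v,x)$, $\ell = d(v,y)$, and taking shortest paths $P_x = (v = u_0, u_1, \dots, u_k = x)$ and $P_y = (v = w_0, w_1, \dots, w_\ell = y)$, I would argue level-by-level that $u_i$ and $w_i$ must be equal or adjacent. The base case $i=1$ follows from simpliciality of $v$. The inductive step considers the $4$-cycle on $(u_{i-1}, u_i, w_i, w_{i-1})$: by chordality it has a diagonal, and the shortest-path property of $P_x, P_y$ rules out the diagonals $u_{i-1}w_i$ and $u_iw_{i-1}$ (either would shorten $P_x$ or $P_y$), leaving only $u_iw_i$. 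Propagating this adjacency down to depth $\min(k,\ell)$ and then closing the remaining portion of the longer path, one obtains $d(x,y) \leq \max(k,\ell) + 1 \leq \mathrm{diam}(G) - 1$, contradicting the choice of $(x,y)$.

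The main obstacle I foresee is the asymmetric case $k \neq \ell$: the naive $4$-cycle induction stops at depth $\min(k,\ell)$, and the tail of the longer path still needs to be brought close to the other endpoint. This is exactly where the LexBFS four-point property should enter. Concretely, I would apply it along the tail of the longer path together with the starting vertex $s$ of LexBFS (which sits at maximum $\sigma$-value), producing earlier LexBFS-vertices whose adjacencies combine with the already-established diagonals to bridge the gap; any failure to bridge would exhibit a vertex of $N(v)$ missing an edge to another vertex of $N(v)$, contradicting simpliciality. Should this direct induction become unwieldy, my backup plan is the moplex-based approach of \cite{DNB97}: show that $v$ lies in a moplex of $G$ containing a vertex $v'$ for which the bound $e(v') \geq \mathrm{diam}(G)-1$ can be established in one shot, and then transfer the inequality from $v'$ to $v$ using the fact that moplex vertices have pairwise equal eccentricities up to an additive constant.
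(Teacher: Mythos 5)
There is a genuine gap in the central inductive step. You claim that from $u_{i-1}w_{i-1}\in E$ one gets a $4$-cycle on $(u_{i-1},u_i,w_i,w_{i-1})$ whose chord analysis forces $u_iw_i\in E$. But the only edges available at that point are $u_{i-1}u_i$, $w_{i-1}w_i$ (path edges) and $u_{i-1}w_{i-1}$ (induction hypothesis); these form an induced \emph{path} $u_i\!-\!u_{i-1}\!-\!w_{i-1}\!-\!w_i$, not a cycle, unless $u_iw_i\in E$ already holds -- which is exactly the conclusion you are trying to derive. Chordality says nothing about paths, so the step is circular. The claim it is meant to establish is in fact false under the hypotheses you actually invoke (chordality, simpliciality of $v$, shortest paths): take $v$ adjacent to $a$ and $b$ with $ab\in E$, and attach a long pendant path to each of $a$ and $b$; then $v$ is simplicial, the graph is chordal, and the second vertices $u_2,w_2$ of the two shortest paths are at distance $3$. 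Of course $v$ is not the last LexBFS vertex of that graph, but that is precisely the point: your induction never uses the contradiction hypothesis $e(v)\le diam(G)-2$ nor any LexBFS property beyond simpliciality, and no argument using only those ingredients can succeed.

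The second half of the proposal, where the LexBFS four-point property is supposed to rescue the asymmetric case $k\neq\ell$, is a plan rather than a proof: the assertion that ``any failure to bridge would exhibit a vertex of $N(v)$ missing an edge to another vertex of $N(v)$'' is not substantiated, and it is unclear how the witnesses $d<_\sigma a$ produced by the four-point property would be steered into $N(v)$. A correct argument (as in the cited source) has to exploit the LexBFS ordering globally -- e.g., via the property that the last visited vertex belongs to a moplex, or via repeated application of the four-point condition along a diametral path -- and must at some point use the assumption $e(v)\le diam(G)-2$ to manufacture the contradiction. As written, neither ingredient appears where it is needed, so the proof does not go through.
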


\begin{lemma}[\cite{CDHP01}]\label{lem:odd-diam}
If the vertex $u$ of a chordal graph $G$ last visited by a LexBFS has odd eccentricity, then $e(u) = diam(G)$.
\end{lemma}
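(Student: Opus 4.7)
The plan is to proceed by contradiction, leveraging Lemma~\ref{lem:additive-approx}. That lemma already guarantees $e(u) \geq diam(G) - 1$, so I only need to rule out $e(u) = diam(G) - 1$. Under the odd-eccentricity hypothesis, this would force $diam(G) = e(u) + 1$ to be even; write $diam(G) = 2k$ and $e(u) = 2k - 1$.

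Next I would invoke two classical properties of LexBFS on chordal graphs. First, the last vertex $u$ of any LexBFS on a chordal graph is simplicial, so $N[u]$ induces a clique (this is the Rose--Tarjan--Lueker result that LexBFS produces a perfect elimination ordering in reverse). Second, because LexBFS is a BFS, the vertices are visited in non-decreasing order of distance from the start vertex $s$, and in particular $dist(s,u) = e(s)$. I would then fix a diametral pair $(x,y)$ with $dist(x,y) = 2k$. Combining $dist(u,x) \le 2k-1$, $dist(u,y) \le 2k-1$, and $dist(x,y) = 2k$ via the triangle inequality, at least one of the two endpoints, say $x$, must lie in the BFS-layer $L_{2k-1}(u) = \{v : dist(u,v) = 2k-1\}$, while $y$ sits in a layer of index at least $1$.

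The technical heart, and the step I expect to be the main obstacle, is to combine these constraints with the LexBFS ``P4 property'' -- for any triple of vertices $a$ visited before $b$ visited before $c$, if $ac \in E$ and $ab \notin E$, then some vertex $d$ visited before $a$ satisfies $db \in E$ and $dc \notin E$ -- in order to expose a vertex $w \neq u$, visited strictly before $u$, whose eccentricity is already at least $2k-1$ and whose lexicographic label would dominate that of $u$ when $u$ is chosen, contradicting the maximality of $u$ in the LexBFS order. Parity of $e(u)$ enters precisely here: with $dist(x,y) = 2k$ even, there is a midpoint $m$ on a shortest $(x,y)$-path with $dist(m,x) = dist(m,y) = k$, and the interaction of the BFS layering of $u$ with this midpoint forces a short cycle that, absent a chord, would violate chordality -- unless $u$ is replaced in the LexBFS order by a later (hence lexicographically heavier) candidate.

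Should the direct ``P4'' manipulation prove too delicate, my fallback would be a two-sweep argument. I would run a second LexBFS, this time starting from $u$ itself, and let $u'$ be its last vertex, which is again simplicial. Then $dist(u,u') = e(u) = 2k-1$ by the BFS property, so Lemma~\ref{lem:additive-approx} applied to $u'$ gives $e(u') \geq 2k - 1$ as well; a short case analysis on whether $e(u') \in \{2k-1, 2k\}$, combined with simpliciality of both $u$ and $u'$ and the parity constraint, should force either an induced $C_4$ between the neighborhoods $N(u)$ and $N(u')$ along a shortest $u$--$u'$ path -- contradicting chordality -- or a diametral pair of length at most $2k-1$, contradicting $diam(G) = 2k$.
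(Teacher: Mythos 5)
First, note that the paper does not prove this statement at all: it is imported verbatim from~\cite{CDHP01}, so there is no in-paper argument to compare against; your proposal has to stand on its own, and as written it is a plan rather than a proof, with one concrete error early on. The opening reduction is fine: by Lemma~\ref{lem:additive-approx} it suffices to exclude $e(u)=diam(G)-1$, which under the odd-eccentricity hypothesis forces $diam(G)=2k$ and $e(u)=2k-1$. But the next deduction is wrong: from $dist(u,x)\le 2k-1$, $dist(u,y)\le 2k-1$ and $dist(x,y)=2k$, the triangle inequality only yields $dist(u,x)+dist(u,y)\ge 2k$, which is perfectly compatible with, say, $dist(u,x)=dist(u,y)=k$; nothing forces an endpoint of the diametral pair into the layer $L_{2k-1}(u)$. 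So the geometric configuration you intend to exploit is not established.

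More importantly, the heart of the argument is never carried out. Both the primary route (using the LexBFS four-point property to ``expose a vertex $w$ whose lexicographic label would dominate that of $u$'') and the fallback (a second sweep from $u$ whose case analysis ``should force either an induced $C_4$ \ldots or a diametral pair of length at most $2k-1$'') stop exactly where the work begins: no chordless cycle is exhibited, no label comparison is performed, and no contradiction is actually derived. Simpliciality of $u$ alone cannot suffice, since every chordal graph has simplicial vertices of non-maximal eccentricity; the proof must genuinely exploit the LexBFS ordering (for instance the known property that the last LexBFS vertex $v$ of a chordal graph satisfies $dist(x,y)\le\max\{dist(v,x),dist(v,y)\}+1$ for all $x,y$, together with an analysis of when equality holds and of how the parity of $e(v)$ enters that equality case), and that is precisely the part left blank. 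As it stands, the proposal identifies the right target but does not prove the lemma.
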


Altogether combined with Theorem~\ref{thm:compute-diam} we obtain that:

\begin{remark}\label{rk:check-result}
Consider an {\em arbitrary} chordal graph $G$.
If we assume $G$ to be Helly then, by Theorem~\ref{thm:compute-diam}, there exists a linear-time algorithm for computing a diametral pair of $G$.
Note that, we can apply this algorithm to $G$ without the knowledge that it is Helly, and either the algorithm will detect that $G$ is not Helly ({\it e.g.}, because some property of Helly graphs does not hold for $G$) or it will output some pair of vertices $(x,y)$.
Furthermore, if $G$ is chordal Helly, then $(x,y)$ is a diametral pair.
Let $d = dist(x,y)$.
We can check for a chordal graph $G$ whether $diam(G) = d$, or $G$ is not Helly, as follows:
\begin{itemize}
\item Let $u$ be the vertex visited last by a LexBFS. 
We may assume, by Lemma~\ref{lem:additive-approx}, that $e(u) \in \{d-1,d\}$ (otherwise, $d \neq diam(G)$, and so $G$ is not Helly) and, by Lemma~\ref{lem:odd-diam}, that $e(u)$ is even. Then, we compute $rad(G)$, which takes linear time~\cite{ChD94}. By Lemma~\ref{lem:additive-approx}, $diam(G) \in \{e(u),e(u)+1\}$, and so either $G$ is {\em not} Helly or, by Lemma~\ref{lem:unimodal}, $rad(G) \in \{ e(u)/2, e(u)/2 + 1 \}$.

\item If $e(u) = d$, $d$ is even and $rad(G) = d/2$, then this certifies that $diam(G) = e(u)$. Else, either $G$ is {\em not} Helly or we have $e(u) = d-1$, $d$ is odd and $rad(G) = e(u)/2 + 1$. Since $dist(x,y) = d = e(u) + 1$, we get $diam(G) = d$ by Lemma~\ref{lem:additive-approx}. 
\end{itemize}
\end{remark}

\subsubsection*{Proof of Theorem~\ref{thm:compute-diam}}

The remainder of this subsection is now devoted to the proof of Theorem~\ref{thm:compute-diam}.
For that, we first compute $r = rad(G)$, which by Theorem~\ref{thm:compute-rad} can be done in linear time.
We also apply the multi-sweep heuristic, {\it i.e.}, we pick an arbitrary vertex $v$ and we perform a BFS from a vertex $u \in F(v)$.
There are two main cases depending on the parity of $e(u)$.

\paragraph{Case $e(u)$ is even.} By Lemma~\ref{lm:main}, $e(u) \geq diam(G) - 1$. Since by Lemma~\ref{lem:unimodal} we have $diam(G) \geq 2r - 1$, it follows that $e(u) \in \{2r-2,2r\}$. Note that, in particular, if $e(u) = 2r$ then $diam(G) =e(u)$ and $u$ belongs to a diametral path. Otherwise, $diam(G) = e(u) + 1 = 2r-1$. We now explain how to compute a diametral pair in this latter subcase.

Let $w \in F(u)$. We may assume $e(w) = 2r-2$ (otherwise, $e(w) > dist(u,w) = e(u) = 2r-2$ and so, $w$ is an end of a diametral pair).
In this situation, $dist(u,w) = 2r-2$ and $u,w$ are mutually far apart.
The next result is a cornerstone of our algorithm:

\begin{lemma}\label{lem:projection}
Let $u,w$ be mutually far apart vertices in a $C_4$-free Helly graph $G$ such that $dist(u,w) = diam(G) - 1 = 2r-2$ is even, and let $C = L(u,r-1,w)$. Then, $(x,y)$ is a diametral pair of $G$ if and only if $dist(x,C) =dist(y,C) = r-1$ and $Pr(x,C) \cap Pr(y,C) = \emptyset$.
\end{lemma}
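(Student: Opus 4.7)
The plan is to use the Helly property together with the clique structure of $C$ (from Lemma~\ref{lm:clique-slice}) twice, once in each direction.

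First I would establish an auxiliary fact: for every $z \in V$, $dist(z,C) \leq r-1$. Since $u$ and $w$ are mutually far apart with $dist(u,w) = 2r-2$, we have $e(u) = e(w) = 2r-2$, so the three balls $N^{r-1}[u]$, $N^{r-1}[w]$, $N^{r-1}[z]$ pairwise intersect. The Helly property yields a common point $p$. Then $dist(p,u), dist(p,w) \leq r-1$, while $dist(p,u) + dist(p,w) \geq dist(u,w) = 2r-2$, so both distances are exactly $r-1$ and $p \in L(u,r-1,w) = C$.

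For the forward direction I would choose any $c_x \in Pr(x,C)$ and $c_y \in Pr(y,C)$ and use that $C$ is a clique to write
\[
2r-1 \;=\; dist(x,y) \;\leq\; dist(x,c_x) + dist(c_x,c_y) + dist(c_y,y) \;\leq\; dist(x,C) + 1 + dist(y,C).
\]
Combined with the auxiliary bound $dist(x,C), dist(y,C) \leq r-1$, this forces $dist(x,C) = dist(y,C) = r-1$. The disjointness of the two projections is then immediate: a common $c \in Pr(x,C) \cap Pr(y,C)$ would give $dist(x,y) \leq (r-1) + (r-1) = 2r-2$, contradicting $dist(x,y) = 2r-1$.

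For the reverse direction I would argue by contradiction. Assume $dist(x,C) = dist(y,C) = r-1$, $Pr(x,C) \cap Pr(y,C) = \emptyset$, but $dist(x,y) \leq 2r-2$. The crucial step is to apply the Helly property to the four balls $N^{r-1}[x]$, $N^{r-1}[y]$, $N^{r-1}[u]$, $N^{r-1}[w]$: the six pairwise intersections are all nonempty (from $e(u) = e(w) = dist(u,w) = 2r-2$ and the assumption $dist(x,y) \leq 2r-2$). The resulting common vertex $p$ lies in $C$ exactly as in the auxiliary fact, and satisfies $dist(p,x), dist(p,y) \leq r-1$, which equals $dist(x,C) = dist(y,C)$, so $p \in Pr(x,C) \cap Pr(y,C)$, contradicting disjointness. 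Hence $dist(x,y) \geq 2r-1$, and the bound $diam(G) = 2r-1$ gives equality.

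The main obstacle is identifying the right Helly family for the reverse direction — namely realizing that simultaneously anchoring the argument at $x$, $y$, $u$, $w$ lets us extract a witness $p$ that lands inside $C$ by the $u$-$w$ constraints and inside both projections by the $x$-$y$ constraints. Once the auxiliary fact is in place, the rest is a short combination of the triangle inequality with the clique property of $C$.
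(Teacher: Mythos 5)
Your proof is correct and follows essentially the same route as the paper's: the same auxiliary fact that every vertex is within distance $r-1$ of $C$ (via the Helly property applied to $N^{r-1}[u]$, $N^{r-1}[w]$, $N^{r-1}[z]$), the same triangle-inequality argument through the clique $C$ for the forward direction, and the same four-ball Helly application to $N^{r-1}[x]$, $N^{r-1}[y]$, $N^{r-1}[u]$, $N^{r-1}[w]$ for the converse. No gaps; nothing further to add.
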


\begin{proof}
Since $e(u) = e(w) = 2r-2$, for any $x \in V$, the balls of radius $r-1$ and with centers $u,w,x$, respectively, pairwise intersect. The Helly property implies the existence of a vertex $c \in V$ such that $\max\{ dist(u,c), dist(w,c), dist(x,c) \} \leq r-1$. Since we also have $dist(u,w) = 2r-2$, we conclude that $c \in L(u,r-1,w) = C$ and $dist(x,C) \leq dist(x,c) \leq r-1$. 
Now on one direction, let $(x,y)$ be a diametral pair. 
By Lemma~\ref{lm:clique-slice}, $C$ is a clique, implying $dist(x,y) \leq dist(x,C) + 1 + dist(y,C) \leq 2r-1 = diam(G)$. Therefore, $dist(x,C) = dist(y,C) = r-1$.
For similar reasons, we must have $Pr(x,C) \cap Pr(y,C) = \emptyset$ (otherwise, $dist(x,y) \leq dist(x,C) + dist(y,C) \leq 2r-2 < diam(G)$, a contradiction).
Conversely, let $(x,y)$ be such that $dist(x,C) = dist(y,C) = r-1$ and $Pr(x,C) \cap Pr(y,C) = \emptyset$.
Suppose by contradiction $dist(x,y) < diam(G)$.
In particular, the balls of radius $r-1$ and respective centers $u,v,x,y$ pairwise intersect.
By the Helly property, there exists a $c \in C$ such that $\min\{dist(x,c),dist(y,c)\} \leq r-1 = dist(x,C) = dist(y,C)$.
But then, $c \in Pr(x,C) \cap Pr(y,C) = \emptyset$, a contradiction.
Hence, we proved that $(x,y)$ is a diametral pair.
\end{proof}

Our strategy now consists in computing a pair $(x,y)$ that satisfies the condition of this above Lemma~\ref{lem:projection}. We do so by using the ``gated property'' of Lemma~\ref{lem:gated}.
Indeed, let $C = L(u,r-1,w)$ be as above defined, and let $S = \{ x^* \mid \exists x \in V \ \text{such  that}  \ g_C(x) = x^* \land dist(x,C) = r-1\}$.
Since by Lemma~\ref{lm:clique-slice} $C$ is a clique, this set $S$ is well-defined and, according to Remark~\ref{rk:compute-gate}, it can be computed in linear time.
In order to compute a diametral pair of $G$, by Lemma~\ref{lem:projection} it is sufficient to compute a pair $x^*,y^* \in S$ such that $N(x^*) \cap N(y^*) \cap C = \emptyset$.
At first glance this approach does not look that promising since it is a particular case of the {\sc Disjoint Set} problem (sometimes called the monochromatic {\sc Orthogonal Vector}), that cannot be solved in truly subquadratic time under SETH~\cite{Wil05}.
Before presenting our solution to this special {\sc Disjoint Set} problem ({\it i.e.}, Lemma~\ref{lem:split}) we introduce an -- optional -- pre-processing so as to simplify a little bit the structure of our problem. 
For that we need the following lemma:

\begin{lemma}\label{lem:inclusion-proj}
In a $C_4$-free Helly graph $G$, for any clique $C$ and adjacent vertices $s,t \in N(C)$, the metric projections $Pr(s,C)$ and $Pr(t,C)$ are comparable, {\it i.e.,} either $Pr(s,C) \subseteq Pr(t,C)$ or $Pr(t,C) \subseteq Pr(s,C)$.
\end{lemma}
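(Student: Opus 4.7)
\medskip
\noindent\textbf{Proof plan.} The strategy is a short proof by contradiction that directly exploits the forbidden subgraph. The starting observation is that, since $s,t \in N(C) = N[C]\setminus C$, we have $dist(s,C) = dist(t,C) = 1$; in particular, because $C$ is a clique, the projections take the very simple form $Pr(s,C) = N(s) \cap C$ and $Pr(t,C) = N(t) \cap C$. This reduces the claim to showing that $N(s) \cap C$ and $N(t) \cap C$ are comparable under inclusion.

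Suppose toward a contradiction that they are not. Then I can pick witnesses $c_s \in Pr(s,C)\setminus Pr(t,C)$ and $c_t \in Pr(t,C)\setminus Pr(s,C)$. I plan to argue that the four vertices $s,c_s,c_t,t$ are pairwise distinct and induce a $4$-cycle, contradicting the hypothesis that $G$ is $C_4$-free. The four required edges are immediate: $s \sim c_s$ and $t \sim c_t$ come from the definition of the projections, $s \sim t$ is an assumption, and $c_s \sim c_t$ holds because both lie in the clique $C$ (and they are distinct since they belong to one projection but not the other). The two non-edges are equally direct: $c_t \notin Pr(s,C) = N(s)\cap C$ forces $s \not\sim c_t$, and symmetrically $c_s \not\sim t$. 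Distinctness of the four vertices is ensured by $s \neq t$ (they are adjacent) together with $s,t \notin C$ (since $N(C)$ excludes $C$).

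There is essentially no technical obstacle here; the whole argument rests on identifying the clique $C$ as the ``obstruction set'' and then reading off the induced $C_4$ from a pair of incomparable neighborhoods inside it. The only small point worth double-checking is the correct interpretation of $N(C)$ (namely $N[C]\setminus C$), which guarantees that $s$ and $t$ themselves lie outside $C$ and that $dist(s,C)=dist(t,C)=1$; once this is in place, the $C_4$-freeness assumption closes the argument in one line.
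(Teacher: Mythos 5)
Your proof is correct and takes essentially the same route as the paper's: both argue by contradiction, pick witnesses in the symmetric difference of the two projections, and exhibit the induced $C_4$ on $s$, $t$, and the two witnesses. Your write-up just spells out the details (distinctness, the identification $Pr(s,C)=N(s)\cap C$) that the paper leaves implicit.
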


\begin{proof}
Let $s,t \in N(C)$ be  adjacent and suppose for the sake of contradiction that there exist $s^* \in Pr(s,C) \setminus Pr(t,C)$ and $t^* \in Pr(t,C) \setminus Pr(s,C)$. Then, $(s,t,t^*,s^*,s)$ induces a $C_4$.
\end{proof}

Let us compute $d^C(s) := |N(s) \cap C|$ for every $s \in S$.
It takes linear time.
We initialize $S_{red} := S$ and then we consider the vertices in $S$ sequentially.
At the time we consider a vertex $s \in S$, we check whether there exists a $t \in N(s) \cap S_{red}$ such that $d^C(t) \leq d^C(s)$. If it is the case then we remove $s$ from $S_{red}$.
Indeed, by Lemma~\ref{lem:inclusion-proj} it implies $N(t) \cap C = Pr(t,C) \subseteq Pr(s,C) = N(s) \cap C$. 
In particular, $N(s) \cap N(z) \cap C = \emptyset \Longrightarrow N(t) \cap N(z) \cap C = \emptyset$, and so we can safely discard vertex $s$.
Overall, the resulting subset $S_{red}$ is a stable set by construction.

A graph is split if its vertex-set can be bipartitioned in a clique and a stable set.
Note that by construction, the induced subgraph $H = G[C \cup S_{red}]$ is a split graph.
Computing the diameter of split graphs is already SETH-hard~\cite{BCH16}.
Fortunately, our split graph $H$ has some additional properties, namely we prove next that it is Helly.

\begin{lemma}\label{lem:clique-helly}
Let $u,w$ be two vertices in a $C_4$-free Helly graph $G$ such that $dist(u,w) = 2r-2$, let $C = L(u,r-1,v)$ and let $S \subseteq N(C)$ be a stable set.
Then, $H = G[C \cup S]$ is a split Helly graph.
\end{lemma}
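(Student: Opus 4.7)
First observe that $H$ is split by construction: $C$ is a clique by Lemma~\ref{lm:clique-slice}, and $S$ is stable by hypothesis. Only the Helly property then requires attention. The strategy is to classify each ball $N_H^{r_i}[v_i]$ of $H$ by whether it contains $C$. If $v_i \in C$ with $r_i \geq 1$, or if $r_i \geq 2$, then $N_H^{r_i}[v_i] \supseteq C$: in the first case because $C$ is a clique, and in the second because every $v_i \in S \subseteq N(C)$ has a neighbor $c_0 \in C$, so $dist_H(v_i,c) \leq dist_H(v_i,c_0) + dist_H(c_0,c) \leq 2$ for every $c \in C$. The only ``restrictive'' balls are then the trivial radius-$0$ ones and those of radius $1$ centered at some $v_i \in S$, which equal $\{v_i\} \cup (N_G(v_i) \cap C)$ since $S$ is stable in $G$.

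Given a family $\mathcal{F} = \{N_H^{r_i}[v_i]\}_{i \in I}$ of pairwise intersecting balls in $H$, the radius-$0$ case reduces immediately: pairwise intersection forces all such balls to coincide at a single vertex lying in every other ball. Otherwise, set $I^* := \{i \in I : v_i \in S,\, r_i = 1\}$. It suffices to produce a single $y \in C$ that is adjacent in $G$ to every $v_i$ with $i \in I^*$, for then $y$ lies in $C \subseteq N_H^{r_i}[v_i]$ for every non-restrictive ball and in $N_G(v_i) \cap C \subseteq N_H[v_i]$ for every restrictive one. The cornerstone observation is that $dist_G(u,w) = 2r-2$ forces equality in the triangle inequality on every vertex at distance at most $r-1$ from both $u$ and $w$, yielding $C = N_G^{r-1}[u] \cap N_G^{r-1}[w]$. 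The plan is to apply the Helly property of $G$ to
$$ \mathcal{G} := \bigl\{N_G^{r-1}[u],\; N_G^{r-1}[w]\bigr\} \cup \bigl\{N_G[v_i] : i \in I^*\bigr\},$$
since any common point of $\mathcal{G}$ must then lie in $C$.

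It remains to verify that $\mathcal{G}$ is pairwise intersecting in $G$. The pair $(N_G^{r-1}[u], N_G^{r-1}[w])$ intersects in $C$, which is nonempty as it contains any vertex at distance $r-1$ from $u$ on a shortest $u$-$w$ path. Each pair $(N_G^{r-1}[u], N_G[v_i])$ or $(N_G^{r-1}[w], N_G[v_i])$ contains any neighbor $c \in N_G(v_i) \cap C$, which exists since $v_i \in S \subseteq N(C)$. Finally, for $i, j \in I^*$, since $v_i, v_j \in S$ are non-adjacent in $G$, the pairwise intersection $N_H[v_i] \cap N_H[v_j]$ in $H$ reduces to $N_G(v_i) \cap N_G(v_j) \cap C$, which is nonempty by assumption and embeds into $N_G[v_i] \cap N_G[v_j]$. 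The main obstacle throughout is the distance distortion caused by $H$ being only an induced, not isometric, subgraph of $G$, so that pairwise intersections of $H$-balls do not directly lift to pairwise intersections of the corresponding $G$-balls; the identity $C = N_G^{r-1}[u] \cap N_G^{r-1}[w]$ is precisely what lets us confine the Helly witness of $\mathcal{G}$ to $C$, where $H$- and $G$-distances agree in the way we need, and thereby reduce the Helly property of $H$ to that of $G$.
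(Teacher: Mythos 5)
Your proof is correct and follows essentially the same route as the paper's: discard the balls that already contain all of $C$, reduce to the radius-one balls centered in $S$, and apply the Helly property of $G$ to $N_G^{r-1}[u]$, $N_G^{r-1}[w]$ and the neighborhoods $N_G[v_i]$, using $dist(u,w)=2r-2$ to force the common witness into $C=L(u,r-1,w)$. The only addition is your explicit handling of the radius-zero balls, a degenerate case the paper's proof leaves implicit.
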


\begin{proof}
By Lemma~\ref{lm:clique-slice}, the subset $C$ is a clique, hence $H$ is a split graph.
Furthermore, let us consider a family of pairwise intersecting balls in $H$.
We may assume that no such a ball is equal to $N_H[c], \forall c \in C$, or $N_H^2[s], \forall s \in S$, for all of these fully contain $C$.
In particular, there exists a subset $S' \subseteq S$ such that the subsets $N(z) \cap C, \ z \in S'$ pairwise intersect.
Then, we have that the balls of radius $r-1$ and with centers in $u,w$ and the balls of radius $1$ and with centers in the vertices of $S'$  pairwise intersect in $G$.
By the Helly property (applied to $G$), there exists a vertex $c$ at a distance $\leq r-1$ from both $u$ and $w$, and at a distance $\leq 1$ from all of $S'$.
Since $dist(u,w) = 2r-2$, we get $c \in L(u,r-1,w) = C$ and so, $c \in \bigcap \{ N(z) \cap C \mid z \in S' \}$.
Consequently, $H$ is Helly. 
\end{proof}

We are now left with computing a diametral pair for split Helly graphs.
 Actually since every split graph has constant diameter (at most three), then by Corollary~\ref{cor:all-ecc} the eccentricity of all vertices in a split Helly graph can be computed in total linear time. In what follows, we propose a different approach for computing the diameter of a split Helly graph than the one we presented in Corollary~\ref{cor:all-ecc}. Interestingly, this approach also works for other Helly-type properties, {\it e.g.} for split clique-Helly graphs and split open-neighbourhood-Helly graphs~\cite{LOS16}.

\begin{lemma}\label{lem:split}
A diametral pair in a split Helly graph can be computed in linear time.
\end{lemma}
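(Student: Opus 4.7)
My plan is to derive a linear-time algorithm that relies only on the 2-Helly property of the family $\mathcal{N} := \{N_H(s) : s \in S\}$ of open neighborhoods of stable vertices, ensuring the argument generalizes to split clique-Helly and split open-neighbourhood-Helly graphs as advertised. I start by computing a split partition $(C,S)$ of $H$ in linear time, with $C$ a clique and $S$ a stable set, and dispose of the cases where $H$ is disconnected or $|S| \leq 1$ -- in all of these, a diametral pair can be read off trivially. Henceforth assume $H$ is connected with $|S| \geq 2$, so every $s \in S$ has $N(s) \neq \emptyset$ and $diam(H) \in \{2,3\}$.

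The structural heart of the argument is the equivalence $diam(H) \leq 2 \iff \bigcap_{s \in S} N(s) \neq \emptyset$. The reverse direction is immediate. For the forward one, if $diam(H) \leq 2$ then every pair of distinct stable vertices shares a neighbor, so the balls $N_H[s]$ pairwise intersect; by the Helly property of $H$ they share a common vertex, which must lie in $C$ since stable vertices are pairwise non-adjacent, hence in $\bigcap_{s \in S} N(s)$. Equivalently, $\mathcal{N}$ enjoys the 2-Helly property. In $O(m)$ time I compute the counts $d(c) := |\{s \in S : c \in N(s)\}|$ for every $c \in C$ by scanning edges, and test whether $\max_c d(c) = |S|$. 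If so, $diam(H) = 2$ and any two vertices of $S$ form a diametral pair.

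Otherwise $diam(H) = 3$, and the task is to produce $s_1, s_2 \in S$ with $N(s_1) \cap N(s_2) = \emptyset$. I process $S$ incrementally in arbitrary order, maintaining the set $K$ of already-processed stable vertices together with the running counters $d_K(c) := |\{s' \in K : c \in N(s')\}|$, so that $c \in \bigcap_{s' \in K} N(s') \iff d_K(c) = |K|$. When handling a new $s$, scan $N(s)$ in $O(|N(s)|)$ time and check whether some $c \in N(s)$ has $d_K(c) = |K|$. If yes, the running intersection over $K \cup \{s\}$ remains non-empty, so add $s$ to $K$ and increment the affected counters. If no, then $\bigcap_{s' \in K} N(s')$ is non-empty but disjoint from $N(s)$: pairs within $K$ still share that common element, so the 2-Helly property of $\mathcal{N}$ applied to $K \cup \{s\}$ forces the existence of $s' \in K$ with $N(s') \cap N(s) = \emptyset$. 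I locate such an $s'$ by computing $N_H(N_H(s)) \cap K$ in $O(\sum_{c \in N(s)} d(c)) = O(m)$ time and returning $(s, s')$ for any $s' \in K \setminus N_H(N_H(s))$; such $s'$ is guaranteed to exist by the Helly argument.

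The main obstacle is finding the disjoint partner in linear time at the single step where the running intersection first collapses, since comparing $N(s)$ against each $N(s')$, $s' \in K$, individually would cost $\Omega(nm)$. The 2-Helly property of $\mathcal{N}$ is precisely what lets me short-circuit this search: at that critical moment any $s' \in K$ lying outside $N_H(N_H(s))$ is a valid disjoint partner, and at least one such $s'$ must exist. Aggregating, the main loop spends $O(|N(s)|)$ per stable vertex for a total of $O(m)$, the disjoint-partner search runs at most once and also costs $O(m)$, and the preliminary degree computation is linear, yielding the claimed $O(n+m)$ total.
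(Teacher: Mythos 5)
Your proof is correct and follows essentially the same route as the paper's: process the stable set incrementally, maintain the running common intersection of the neighborhoods $N(s)$, and use the Helly property to argue that the first vertex whose neighborhood misses this intersection must be disjoint from some earlier vertex, yielding a diametral pair. The only differences are cosmetic — you keep counters on the clique vertices and do a single $\mathcal{O}(m)$ sweep to locate the disjoint partner, whereas the paper uses partition refinement and decrements degrees of stable vertices so that the partner is detected when a degree hits zero — and your direct disjointness argument in $H$ replaces the paper's induction on the subgraphs $G_i$.
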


\begin{proof}
Let $G=(C \cup S,E)$ be a split Helly graph with clique $C$ and stable set $S$ (note that if $C$ and $S$ are not given then they can be computed in linear time~\cite{Gol04}).
Assume $G$ to be connected and $diam(G) > 1$ (otherwise, we are done).
By the Helly property, $diam(G) = 2$ if and only if $G$ contains a universal vertex.
Furthermore, if it is the case then any pair $x,y$ of non-adjacent vertices is diametral.
Hence, from now on we assume that $diam(G) = 3$.
Let $G_0 := G$ and let $(x_1,x_2,\ldots,x_{|S|})$ be an arbitrary total order of $S$.
For every $1 \leq i \leq |S|$, we define $G_i := G[ \bigcap\{ N_G(x_j) \mid 1 \leq j \leq i \} \cup S ]$.
Our algorithm proceeds the vertices $x_i \in S$ sequentially, for $i = 1 \ldots |S|$, and does the following: If $x_i$ has eccentricity $3$ in $G_{i-1}$, then we compute a diametral pair in this subgraph which contains $x_i$ and we stop. 

We claim that our algorithm above is correct.
For that we prove by finite induction that for any $i \geq 0$, if the algorithm did not stop in less than $i$ steps then: \texttt{(i)} $G_i$ is connected; and \texttt{(ii)} $(x_p,x_q)$ is a diametral pair of $G_i$ if and only if it is a diametral pair of $G$. Since $G_0 = G$, this is true for $i=0$.
From now on we assume $i > 0$.
If the algorithm did not stop at step $i$ then (since in addition $G_{i-1}$ is connected by the induction hypothesis), $x_i$ has eccentricity two in $G_{i-1}$.
In particular, every vertex has a common neighbour with $x_i$, implying that there can be no isolated vertex in $G_{i}$.
We so obtain that $G_i$ is connected.
Furthermore, if $(x_p,x_q)$ is a diametral pair of $G$ then,   necessarily, it is also a diametral pair of the connected subgraph $G_i$ ({\it i.e.}, because $x_p$ and $x_q$ have no common neighbour in this subgraph, and so they are at distance $3$ to each other).
Conversely, let $(x_p,x_q)$ be a diametral pair of $G_i$.
Suppose, by contradiction, that $(x_p,x_q)$ is not a diametral pair of $G$, or equivalently $N_G(x_p) \cap N_G(x_q) \neq \emptyset$. 
Since the neighbour sets $N_G(x_j), \ j \in \{1,2,\ldots,i\} \cup \{p,q\}$ pairwise intersect, by the Helly property, there exists a vertex $w \in \bigcap\{ N_G(x_j) \mid j \in \{1,2,\ldots,i\} \cup \{p,q\} \}$.
But then, $(x_p,x_q)$ is not a diametral pair of $G_i$ (as $diam(G_i)\ge diam(G)=3)$, a contradiction.
As a result, our above algorithm for computing a diametral pair of $G$ is correct.

We still have to explain how to execute this algorithm in linear time.
For that, we maintain a partition of the clique, initialized to ${\cal P}_0 := (C)$.
At step $i$ we refine the former partition ${\cal P}_{i-1} = (C_1^{i-1},C_2^{i-1},\ldots,C^{i-1}_{k_{i-1}})$ into a new partition ${\cal P}_i = (C_1^{i-1} \cap N_G(x_i),C_1^{i-1}\setminus N_G(x_i),C_2^{i-1}\cap N_G(x_i),C_2^{i-1}\setminus N_G(x_i),\ldots,C^{i-1}_{k_{i-1}}\cap N_G(x_i),C^{i-1}_{k_{i-1}}\setminus N_G(x_i))$.
This partition refinement can be done in time ${\cal O}(N_G(x_i))$ (up to some initial pre-processing in ${\cal O}(|C|)$ time)~\cite{HMPV00}.
Furthermore, an easy induction proves for any $i \geq 0$ that the first group of ${\cal P}_i$ is exactly $C_1^i = \bigcap\{ N_G(x_j) \mid 1 \leq j \leq i \}$ {\it i.e.}, the clique of $G_i$.
We finally explain how we use this partition in order to decide, at step $i$, whether $x_i$ has eccentricity equal to $3$ in $G_{i-1}$.
At the beginning of the algorithm we compute the degree of every vertex in $S$.
Then, at step $i$ we consider all the vertices in $C_2^i = \bigcap\{ N_G(x_j) \mid 1 \leq j < i \} \setminus N_G(x_i)$ sequentially (second group of the partition ${\cal P}_i$).
For every $w \in C_2^i$ we enumerate all its neighbours in $S$ and we decrease their respective degrees by one.
In particular, if during step $i$ the degree of some vertex $x_p \in S$ falls to $0$, then $x_p$ has no common neighbour with $x_i$ in $G_{i-1}$.
Equivalently, $(x_i,x_p)$ is a diametral pair of $G_{i-1}$ and the eccentricity of $x_i,x_p$ in this subgraph is $3$.
We observe that the sets $C_2^i$ on which we iterate are pairwise disjoint.
As a result, the total complexity of the algorithm is linear.
\end{proof}

This above Lemma~\ref{lem:split} achieves proving Theorem~\ref{thm:compute-diam} in the case when $e(u)$ is even.

\paragraph{Case $e(u)$ is odd.} By Lemma~\ref{lm:main}, $e(u) \geq diam(G) - 2$.
Therefore, by Lemma~\ref{lem:unimodal},  $e(u) \in \{2r-3,2r-1\}$.
In the first subcase, we deduce from Lemma~\ref{lem:unimodal} that $diam(G) = 2r-1$.
Furthermore, we can compute a diametral pair as follows.
Let $w \in F(u)$, and assume $e(w) = e(u) = dist(u,w) = 2r-3$ (otherwise, either $e(w) = 2r-2$ is even and we are back to the former case, or $e(w) = 2r-1$ and then we are done since $w$ is an end of a diametral path).

\begin{lemma}\label{lem:projection-odd}
Let $u,w$ be mutually far apart vertices in a $C_4$-free Helly graph $G$ such that $dist(u,w) = diam(G) - 2 = 2r-3$ is odd, and let $A = L(w,r-2,u)$. Then, $(x,y)$ is a diametral pair of $G$ if and only if $dist(x,A) =dist(y,A) = r-1$, $Pr(x,A) \cap Pr(y,A) = \emptyset$, and in addition $dist(y,u)=dist(y,w)=dist(x,u)=dist(x,w)=2r-3$.
\end{lemma}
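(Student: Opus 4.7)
The plan is to mirror the template used in Lemma~\ref{lem:projection} (the even-diameter case), adapted for the asymmetric set $A = L(w,r-2,u)$. The crucial preliminary fact is that every vertex $v \in V$ satisfies $dist(v,A) \le r-1$, and that $A$ is a clique. For the first claim, I would apply the Helly property to the three balls $N^{r-1}[u]$, $N^{r-2}[w]$, and $N^{r-1}[v]$: they pairwise intersect because $dist(u,w) = 2r-3 = (r-1)+(r-2)$ and because $dist(u,v), dist(w,v) \le 2r-3$ by the eccentricities of $u$ and $w$; the common vertex $c$ is forced to lie on a shortest $u$-$w$ path with $dist(c,u) = r-1$ and $dist(c,w) = r-2$, i.e., $c \in A$, so $dist(v,A) \le r-1$. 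The clique property is exactly Lemma~\ref{lm:clique-slice} applied to $L(w,r-2,u)$.

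For the forward direction, assume $(x,y)$ is a diametral pair, so $dist(x,y) = 2r-1$. Choosing $a \in Pr(x,A)$ and $b \in Pr(y,A)$ and using that $A$ is a clique, $dist(x,y) \le (r-1) + dist(a,b) + (r-1) \le 2r-1$, so equality is tight and forces $dist(x,A) = dist(y,A) = r-1$ and $a \ne b$ with $ab \in E$. If some $a$ lay in $Pr(x,A) \cap Pr(y,A)$, we would obtain $dist(x,y) \le 2(r-1) < 2r-1$, contradicting diametrality; hence projections are disjoint. For the extra equalities, from $e(u) = e(w) = 2r-3$ we immediately have $dist(x,u), dist(x,w), dist(y,u), dist(y,w) \le 2r-3$, and I would argue equality via a refined $C_4$-free/Helly argument (e.g., analyzing shortest $x$-$y$ paths that pass through a gate $a \in Pr(x,A) \subseteq A$ together with the neighboring slices of the interval $I(u,w)$).

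For the backward direction, assume the three listed conditions. The upper bound $dist(x,y) \le 2r-1$ is immediate as above. Suppose for contradiction that $dist(x,y) \le 2r-2$. Then the four balls $N^{r-1}[u]$, $N^{r-2}[w]$, $N^{r-1}[x]$, $N^{r-1}[y]$ pairwise intersect: the three pairs containing only $u$, $w$, or one of $x,y$ are covered by $dist(u,w) = 2r-3$ and by the (extra) equalities $dist(x,u) = dist(x,w) = dist(y,u) = dist(y,w) = 2r-3$, while the pair $\{x,y\}$ uses the contradiction hypothesis. By the Helly property there is a common vertex $c$; as in the preliminary step, $c \in A$, and from $dist(c,x), dist(c,y) \le r-1 = dist(x,A) = dist(y,A)$ we get $c \in Pr(x,A) \cap Pr(y,A)$, contradicting the disjointness condition. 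Hence $dist(x,y) = 2r-1$, i.e., $(x,y)$ is a diametral pair.

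The main obstacle I anticipate is the forward-direction claim $dist(x,u) = dist(x,w) = 2r-3$ (and the analogous equalities for $y$), which is strictly stronger than the eccentricity bound $\le 2r-3$ and does not follow immediately from a two- or three-ball Helly argument. I expect this to require exploiting $C_4$-freeness together with the structure of the interval $I(u,w)$ — specifically, ruling out a diametral endpoint whose shortest path to $u$ bypasses the slice $A$ would use the interaction between adjacent slices $L(w,r-2,u)$ and $L(w,r-1,u)$ and the convexity of balls given by Lemma~\ref{lm:ball-convex}.
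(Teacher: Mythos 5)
Your overall architecture is exactly the paper's: the same three-ball Helly argument showing $dist(v,A)\le r-1$ for every vertex $v$ (with $c\in N^{r-1}[u]\cap N^{r-2}[w]$ forced into $A$ because $dist(u,w)=2r-3$), the same clique-plus-triangle-inequality argument for the forward direction, and the same four-ball Helly contradiction for the converse. All of those steps are correct as written.

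The one genuine gap is the step you yourself flag as the main obstacle: proving $dist(x,u)=dist(x,w)=dist(y,u)=dist(y,w)=2r-3$ in the forward direction. You leave this as an unproven claim with a vague plan (``a refined $C_4$-free/Helly argument'' using adjacent slices and ball convexity), and that plan is not obviously executable. The paper disposes of it in one line by citing Lemma~\ref{lm:main}: since $u$ and $w$ are mutually far apart, $u$ is a vertex most distant from $w$ (and vice versa), and $e(u)=e(w)=dist(u,w)=d-2$; the ``Furthermore'' clause of Lemma~\ref{lm:main} then states precisely that when a vertex $v$ most distant from some $s$ has $e(v)=d-2$, every diametral endpoint is at distance exactly $2r-3$ from $v$. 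Applying this once with $s=w$, $v=u$ and once with $s=u$, $v=w$ yields all four equalities. So the missing step is not a new geometric argument but a direct invocation of a lemma already established earlier in the section; without it (or an equivalent replacement), your forward direction is incomplete, and consequently so is the converse insofar as it consumes those equalities as part of the characterization. Note also that for the pairwise-intersection checks in your converse you only need the inequalities $dist(x,u),dist(x,w)\le 2r-3$ (which follow from $e(u)=e(w)=2r-3$) together with the assumed equalities, so that part stands once the forward direction is repaired.
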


\begin{proof}
We first prove that for every vertex $x$ we have $dist(x,A) \leq r-1$.
Indeed, since $w,u$ are mutually far apart, we have $dist(x,u)\le dist(w,u)=2r-3$ and $dist(x,w)\le dist(w,u)=2r-3$. Hence, balls $N^{r-1}[u]$,  $N^{r-2}[w]$ and  $N^{r-1}[x]$ pairwise intersect. By the Helly property, there is a vertex $z\in A$ with $dist(x,z)\le r-1$.

For any pair $(x,y)$ we have $dist(x,y) \leq dist(x,A) + dist(Pr(x,A),Pr(y,A)) + dist(y,A) \leq 2r-2 + dist(Pr(x,A),Pr(y,A))$.
By Lemma~\ref{lm:clique-slice}, $A$ is a clique, which implies $dist(Pr(x,A),Pr(y,A)) \leq 1$.
As a result, if $(x,y)$ is diametral, we get $dist(x,A) = dist(y,A) = r-1$ and $Pr(x,A) \cap Pr(y,A) = \emptyset$.  
We also get $dist(y,u)=dist(y,w)=dist(x,u)=dist(x,w)=2r-3$ by Lemma~\ref{lm:main}.
Conversely let $(x,y)$ be any pair that satisfies all these above properties, and suppose for the sake of contradiction that we have $dist(x,y) \leq 2r-2$.
Consider balls $N^{r-1}[x], N^{r-1}[y], N^{r-1}[u]$ and $N^{r-2}[w]$. By distance requirements, these balls pairwise intersect. The Helly property implies that a vertex $z\in A$ exists such that
$dist(x,z)= r-1$ and $dist(y,z)= r-1$. However, the latter contradicts with $Pr(x,A)\cap Pr(y,A)=\emptyset$. 
\end{proof}

We observe that Lemma~\ref{lem:projection-odd} is quite the same as Lemma~\ref{lem:projection} and that the same techniques can be used in order to compute a diametral pair in this subcase.

\smallskip
The most difficult subcase is when $e(u) = 2r-1$. 
By Lemma~\ref{lem:unimodal}, either $diam(G) = 2r-1$ or $diam(G) = 2r$.
We explain below how, assuming $diam(G) = 2r$, we can compute in linear time {\em all} central vertices. 
Then, if $diam(G) = 2r$, by Lemma~\ref{lem:intersect-center}, a pair $(x,y)$ is diametral if and only if both $x$ and $y$ are at a distance exactly $r$ from all central vertices. 
In particular, we can pick any such a pair $(x,y)$ and check whether we have $dist(x,y) = 2r$ (otherwise, $diam(G) = 2r-1$ and $u$ is an end of a diametral path).

Assume w.l.o.g. $r \geq 2$ (otherwise, computing a diametral pair is trivial).

\begin{lemma}\label{lem:clique-center}
If $G$ is a $C_4$-free Helly graph of radius $r \geq 2$ and diameter $2r$ then $C(G)$ is a clique. 
\end{lemma}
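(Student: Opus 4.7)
The plan is to show that in this regime the entire center $C(G)$ is forced to sit inside a single metric slice $L(x,r,y)$, at which point Lemma~\ref{lm:clique-slice} immediately gives the conclusion.

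First I would pick any diametral pair $(x,y)$, which exists because $diam(G)=2r$, so $dist(x,y)=2r$. Then for an arbitrary $c \in C(G)$, the defining property $e(c)=r$ gives $dist(x,c)\le r$ and $dist(c,y)\le r$. The triangle inequality now pins things down:
\[
2r = dist(x,y) \le dist(x,c) + dist(c,y) \le 2r,
\]
so both inequalities are tight. This forces $dist(x,c)=dist(c,y)=r$ and, because the two distances sum to $dist(x,y)$, also $c \in I(x,y)$. In other words, $c \in L(x,r,y)$. Since $c$ was arbitrary, we obtain $C(G) \subseteq L(x,r,y)$.

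To finish, I would apply Lemma~\ref{lm:clique-slice} with $k=r\le dist(x,y)$: in a $C_4$-free Helly graph every slice $L(x,r,y)$ is a clique. Hence $C(G)$, being a subset of a clique, is itself a clique.

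The only place where the hypotheses are really used are \emph{(i)} the even-diameter assumption $diam(G)=2r$, which is what makes the triangle inequality argument tight (an odd diameter would leave a slack of $1$ and only yield that central vertices lie in two consecutive slices), and \emph{(ii)} the $C_4$-freeness, needed to guarantee the clique property of slices. I do not anticipate any real obstacle here: the argument is a short two-step calculation and does not seem to require additional structural lemmas beyond those already established. The bound $r \ge 2$ is used elsewhere in the paper but is not essential for the statement itself (the case $r=1$ collapses to the trivial observation that universal vertices are pairwise adjacent).
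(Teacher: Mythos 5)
Your proof is correct and is essentially the paper's own argument, just written out in more detail: both show that $e(c)=r$ together with $dist(x,y)=2r$ forces every central vertex into the slice $L(x,r,y)$, and then invoke Lemma~\ref{lm:clique-slice}. Nothing to change.
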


\begin{proof}
Every central vertex is at a distance exactly $r$ from both ends $x,y$ of any diametral path.
In particular, $C(G) \subseteq L(x,r,y)$, which is a clique by Lemma~\ref{lm:clique-slice}.
\end{proof}

Therefore, if $diam(G) = 2r$, by Lemma~\ref{lem:clique-center}, for any central vertex $c$, we have $C(G) \subseteq N[c] = S$.
Note that, by Theorem~\ref{thm:compute-rad}, we can compute such a central vertex $c$ in linear time.
Furthermore, $diam(S) \leq 2$.
For every $x \notin S$, we compute $dist(x,S)$ and a corresponding gate $g(x)$, which exists by Lemma~\ref{lem:gated}.
Note that, according to Remark~\ref{rk:compute-gate}, it takes linear time.

By construction, $\max_x dist(x,S) \leq dist(x,c) \leq r$.
Furthermore, every vertex at a distance $\leq r - 2$ from $S$ is at a distance $\leq r$ from every vertex of $S$.
As a result, we only need to consider the vertices at a distance $\geq r - 1$ from $S$.
In fact, and as already observed in the proof of Theorem~\ref{thm:compute-rad}, for a vertex of $S$ to be central it needs to be adjacent to the gates of all the vertices at a distance exactly $r$ from $S$. All the vertices which satisfy this necessary condition can be computed in linear time.
Hence, we can restrict ourselves to the vertices that are at distance exactly $r-1$ from $S$.

\begin{lemma}
Let $G$ be a $C_4$-free Helly graph and let $S$ be such that $diam(S) \leq 2$ and $\max_x dist(x,S) = r$.
If $dist(x,S) = r -1$, then there exists a vertex $pg_S(x) \in N^{r-1}[x]$ such that $S \cap N^{r}[x] \subseteq N[pg_S(x)]$.
Moreover, $pg_S(x)$ is in the closed neighbourhood of some gate of $x$.
\end{lemma}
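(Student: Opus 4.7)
The plan is to invoke the Helly property twice: first to produce $pg_S(x)$ itself, and then to locate a gate of $x$ whose closed neighbourhood contains it.

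For the first application, I will take the family $\{N^{r-1}[x]\} \cup \{N[s] \mid s \in S \cap N^r[x]\}$. Pairwise intersection is straightforward. Any $s \in S \cap N^r[x]$ is at distance $\le r$ from $x$, so if $dist(x,s) = r$ then a neighbour of $s$ on a shortest $(x,s)$-path lies in $N^{r-1}[x] \cap N[s]$, while if $dist(x,s) < r$ then $s$ itself belongs to the intersection. For any $s, s' \in S$, the hypothesis $diam(S) \le 2$ places a vertex in $N[s] \cap N[s']$. The Helly property then yields a vertex $pg_S(x) \in N^{r-1}[x]$ that lies in $N[s]$ for every $s \in S \cap N^r[x]$, which is exactly the first assertion $S \cap N^r[x] \subseteq N[pg_S(x)]$.

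For the second assertion, I first observe that $Pr(x,S)$ consists of vertices of $S$ at distance exactly $r-1$ from $x$, so $Pr(x,S) \subseteq S \cap N^r[x]$ and in particular $pg_S(x)$ is adjacent to (or equal to) every $s \in Pr(x,S)$. I will then apply Helly a second time, now to the family $\{N^{r-2}[x],\, N[pg_S(x)]\} \cup \{N[s] \mid s \in Pr(x,S)\}$. Pairwise intersection is checked by the same routine arguments as above, together with the newly established adjacency between $pg_S(x)$ and each $s \in Pr(x,S)$. The Helly property then supplies a vertex $g \in N^{r-2}[x] \cap N[pg_S(x)] \cap \bigcap\{N[s] \mid s \in Pr(x,S)\}$. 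Being at distance at most $r-2 = dist(x,S)-1$ from $x$ and adjacent to every vertex of $Pr(x,S)$, this $g$ satisfies exactly the defining property of a gate from Lemma~\ref{lem:gated}, so we may take $g_S(x) := g$. Since $g \in N[pg_S(x)]$ is the same as $pg_S(x) \in N[g] = N[g_S(x)]$, the proof is complete.

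I do not foresee a serious obstacle: the whole argument boils down to choosing the right two families of balls for the Helly property, after which verifying pairwise intersections reduces to routine distance bookkeeping. The only subtlety is the slight asymmetry in the two families — the radius-$(r-1)$ ball of $x$ in the first application, and the radius-$(r-2)$ ball paired with the singleton-style ball $N[pg_S(x)]$ in the second — which must be chosen so that the witness produced is both close enough to $x$ to qualify as a gate and close enough to $pg_S(x)$ to dominate it.
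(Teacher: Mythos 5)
Your proof is correct, and the first half (producing $pg_S(x)$ via the Helly property applied to $N^{r-1}[x]$ and the balls $N[s]$, $s \in S \cap N^r[x]$) is exactly the paper's argument. Where you diverge is in the second half. The paper starts from an \emph{arbitrary} gate $g(x)$ and pseudo-gate $pg(x)$, assumes they are distinct and non-adjacent, applies the Helly property to just the three balls $N^{r-2}[x]$, $N[g(x)]$, $N[pg(x)]$ to obtain a common vertex $x^*$, and then must invoke $C_4$-freeness: for each $s \in Pr(x,S)$ the cycle $(s,pg(x),x^*,g(x),s)$ forces the chord $sx^*$, which is what certifies that $x^*$ is itself a gate. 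You instead put all the balls $N[s]$, $s \in Pr(x,S)$, directly into the second Helly family alongside $N^{r-2}[x]$ and $N[pg_S(x)]$; the pairwise intersections are secured by $diam(S)\le 2$ and by the adjacency of $pg_S(x)$ to $Pr(x,S)$ established in the first step, and the resulting common vertex is a gate by definition, with no appeal to $C_4$-freeness. Your route is slightly cleaner and in fact shows that the statement holds in any Helly graph with $diam(S)\le 2$; the paper's formulation has the mild advantage of applying to whatever gate/pseudo-gate pair an algorithm happens to have computed, which is the form it exploits in Remark~\ref{rk:compute-pseudo-gate}, whereas you exhibit one particular gate that works. (Both arguments implicitly use $r \ge 2$, which the paper assumes just before the lemma.)
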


We call such a vertex a {\em pseudo-gate} of $x$.

\begin{proof}
The existence of such a pseudo-gate follows from the fact that the balls $N^{r-1}[x]$ and $N[s], \ s \in S \cap N^{r}[x]$ pairwise intersect, and from the Helly property.
Now, let $pg(x)$ and $g(x)$ be arbitrary pseudo-gate and gate of $x$, respectively, and assume $pg(x) \neq g(x)$ and $pg(x)g(x) \notin E$ (else, we are done).
In particular, $pg(x) \notin Pr(x,S)$.
Note that, since we have $Pr(x,S) \subseteq N(g(x)) \cap N(pg(x))$, $dist(g(x),pg(x)) = 2$.
Then, the balls $N^{r-2}[x], N[g(x)] \ \text{and} \ N[pg(x)]$ pairwise intersect.
By the Helly property, there exists a vertex $x^*$ in their common intersection.
We claim that $x^*$ is a gate of $x$, that will prove the lemma.
Indeed, for every $s \in Pr(x,S)$ we get a cycle $(s,pg(x),x^*,g(x),s)$.
Since $G$ is $C_4$-free, this implies $sx^* \in E$.
\end{proof}

\begin{remark}\label{rk:compute-pseudo-gate}
For every $x \in N(S)$, we can choose as its pseudo-gate any vertex of $N[x]$ that maximizes the intersection of $S$ with its closed neighbourhood (possibly, $x$ itself).
Then, when we compute a gate for every vertex, we break ties by choosing one such a gate whose pseudo-gate maximizes its intersection with $S$.
In doing so, we can compute a pseudo-gate for every vertex at distance $r-1$ from $S$, in total linear time.
\end{remark}

Altogether combined if $diam(G) = 2r$ then the central vertices of $S$ are exactly those that are adjacent to the gates of all the vertices at distance exactly $r$ from $S$ {\em and} either equal or adjacent to the pseudo-gates of all the vertices at distance exactly $r-1$ from $S$.

\subsection{Computing all eccentricities}\label{sec:all-ecc}

We are now ready to present the main result of this section, namely:

\begin{theorem}\label{thm:all-ecc}
If $G$ is a $C_4$-free Helly graph then we can compute the eccentricity of all vertices in linear time.
\end{theorem}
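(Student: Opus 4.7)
The plan is to reduce the computation of all eccentricities to the computation of the center $C(G)$, since by Lemma~\ref{lem:ecc-formula} we have $e(v) = dist(v, C(G)) + rad(G)$ for every vertex $v$ of a Helly graph; once $C(G)$ is known, a single multi-source BFS yields all eccentricities in linear time. As a first step I would invoke Theorem~\ref{thm:compute-rad} to obtain $r := rad(G)$ together with a central vertex $c_0$, and Theorem~\ref{thm:compute-diam} to obtain $d := diam(G)$, both in linear time. It then remains to determine $C(G)$ itself.

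Because the center of a $C_4$-free Helly graph has radius at most $2$, we have $C(G) \subseteq S_0 := N^2[c_0]$, and by Lemma~\ref{lem:unimodal} the only two possibilities are $d = 2r$ and $d = 2r-1$. In the even case $d = 2r$, Lemma~\ref{lem:clique-center} guarantees that $C(G)$ is a clique contained in $N[c_0]$, and I would simply replay the argument from the end of Section~\ref{sec:diametral-pairs}: using Remarks~\ref{rk:compute-gate} and~\ref{rk:compute-pseudo-gate}, compute in linear time the gates of all vertices at distance exactly $r$ from $N[c_0]$ and the pseudo-gates of those at distance $r-1$, and then declare a vertex $s \in N[c_0]$ central iff it is adjacent to every such gate and equal or adjacent to every such pseudo-gate.

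The harder subcase is $d = 2r-1$, where $C(G)$ need not be a clique and may have diameter up to $3$, so Lemma~\ref{lem:gated} no longer applies out of the box. Here I would rely on the \emph{extended gate} observation in Remark~\ref{rk:compute-gate}: even for arbitrary $S$ of weak diameter greater than $2$, the BFS-based procedure still assigns in linear time, to each vertex $v$, a distinguished vertex $v^* \in N(S) \cap I(v, S)$ maximizing $|N(v^*) \cap S|$. Applied to $S_0 = N^2[c_0]$ (a convex ball by Lemma~\ref{lm:ball-convex}) and, refining progressively, to candidate sub-sets of central vertices, together with a pseudo-gate analysis adapted to this larger-diameter setting and the Helly property, this should let me express the centrality of $s \in S_0$ as a local condition on its adjacency to these generalized gates and pseudo-gates, checkable in aggregate linear time --- essentially an extension of the characterization used in the $d = 2r$ case.

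Once $C(G)$ is in hand, a multi-source BFS from $C(G)$ computes $dist(v, C(G))$ for every $v$ in linear time, and Lemma~\ref{lem:ecc-formula} gives $e(v) = dist(v, C(G)) + r$. The main obstacle in this plan is clearly the odd subcase $d = 2r-1$: since $C(G)$ can have weak diameter larger than $2$, the clean gate calculus of Lemma~\ref{lem:gated} breaks, and one must carefully invoke the extended-gate procedure (which the paper itself flags as \emph{crucial} for this theorem) together with pseudo-gate information to preserve linear running time while still correctly identifying every central vertex.
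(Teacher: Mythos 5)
Your overall plan (compute $C(G)$, then one multi-source BFS plus Lemma~\ref{lem:ecc-formula}) is exactly the paper's, and your treatment of the even case $d=2r$ matches the paper's. But the odd case $d=2r-1$, which you yourself flag as the main obstacle, is where the proof actually lives, and your sketch for it is missing the ideas that make it work. The paper does not work with $S_0=N^2[c_0]$ and "progressively refined" extended gates; instead it localizes the center via a diametral pair $(x,y)$: since $dist(x,y)=2r-1$, one gets $C(G)\subseteq N^r[x]\cap N^r[y]=L(x,r-1,y)\cup L(y,r-1,x)\cup Z$ with $Z=\{z\mid dist(z,x)=dist(z,y)=r\}$. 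The two slices are cliques (Lemma~\ref{lm:clique-slice}) and are handled by the distance-one gate test from Theorem~\ref{thm:compute-rad}; the set $Z$ is the genuinely new difficulty, and the paper controls it with a pseudo-modularity argument (Lemma~\ref{centroids}) showing every $z\in Z$ has adjacent neighbours in both slices, hence is automatically within distance $r$ of everything close to the slices, so only vertices $w$ with $dist(w,L(x,r-1,y)\cup L(y,r-1,x))\in\{r-1,r\}$ matter.

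The decisive ingredient you omit entirely is Theorem~\ref{thm:all-ecc-param}. For $z\in Z$ the centrality test reduces, via the Helly property, to a condition of the form $A\subseteq N^2[z]$ where $A$ is a set of gates of the far vertices $w$. This is an instance of {\sc Small Eccentricities} with $k=2$, which the paper notes is in general as hard as {\sc Hitting Set}; checking $A\subseteq N^2[z]$ naively for all candidates $z$ is quadratic. Only the partition-refinement algorithm of Theorem~\ref{thm:all-ecc-param}, which exploits the Helly property, makes this check run in linear time for all $z$ simultaneously. Your claim that the centrality of each candidate is "a local condition \dots checkable in aggregate linear time" asserts precisely the conclusion that needs proving; without the reduction to {\sc Small Eccentricities} (or an equivalent mechanism) the odd case does not go through, so the proposal has a genuine gap there.
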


The remainder of this subsection is devoted to the proof of Theorem~\ref{thm:all-ecc}, or equivalently, by Lemma~\ref{lem:ecc-formula}, how to compute $C(G)$ in linear time for a $C_4$-free Helly graph $G$.
 Our main tool for that is our parameterized linear-time algorithm for {\sc Small Eccentricities}, for $k=2$ (Theorem~\ref{thm:all-ecc-param}). In particular, by Corollary~\ref{cor:all-ecc}, we may assume that $rad(G) = r \geq 3$.
Let $(x,y)$ be a diametral pair.
By Theorem~\ref{thm:compute-diam}, it can be computed in linear time.
There are two cases, depending on the parity of $dist(x,y) = d = diam(G)$.

If $d = 2r$ is even then, by Lemma~\ref{lem:clique-center}, $C(G) \subseteq L(x,r,y)$ is a clique.
We can reuse the same idea as for Theorem~\ref{thm:compute-rad} in order to extract all the central vertices from $L(x,r,y)$ in linear time (see also Section~\ref{sec:diametral-pairs}, Subcase $e(u) = 2r-1$, for a more complicated method which does not need the pre-computation of a diametral pair).

From now on we assume $d = 2r-1$ is odd.
Obviously, $C(G) \subseteq N^r[x] \cap N^r[y]$.
Since $dist(x,y) = 2r-1$ we get $N^r[x] \cap N^r[y] = L(x,r-1,y) \cup L(y,r-1,x) \cup Z$, where $Z := \{ z \in V \mid dist(z,x) = dist(z,y) = r \}$.
Furthermore, for a $C_4$-free Helly graph, by Lemma~\ref{lm:clique-slice} the slices $L(x,r-1,y)$ and $L(y,r-1,x)$ are cliques; again we can reuse the same idea as for Theorem~\ref{thm:compute-rad} in order to extract all the central vertices in these two disjoint sets in linear time.
Note that, by Lemma~\ref{lem:intersect-center}, there must be at least one such central vertex in both cliques. 

From now on we focus on $Z$.
\begin{myclaim}\label{claim:two-meighbours}
Every vertex $z \in Z$ has two adjacent neighbours in $L(x,r-1,y)$ and $L(y,r-1,x)$, respectively.
\end{myclaim}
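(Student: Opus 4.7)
The plan is to invoke the pseudo-modularity property (Lemma~\ref{centroids}) on the triple $(x,y,z)$. Since $G$ is Helly it is pseudo-modular, so exactly one of the two cases of that lemma applies. First I would rule out case (1) on parity grounds: in case (1) we have $d(x,y)+d(x,z)+d(y,z) = 2(p+q+k)$, which is even, whereas our sum is $(2r-1)+r+r = 4r-1$, odd. Hence case (2) must hold, producing shortest paths $P(x,y), P(x,z), P(y,z)$ together with a triangle $\triangle(x',y',z')$ whose three edges lie one on each of these paths.

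Next I would solve the small arithmetic system given by case (2): $p+q+1 = 2r-1$, $p+k+1 = r$ and $q+k+1 = r$. Subtracting the last two equations forces $p = q$, and then $p = q = r-1$ and $k = 0$. The crucial outcome is $k=0$: since $z'$ is the vertex of the triangle on $P(x,z) \cap P(y,z)$ at distance $k$ from $z$, this collapses to $z' = z$. Consequently the triangle is in fact $\triangle(x',y',z)$, meaning $x'$ and $y'$ are both neighbours of $z$ and are themselves adjacent.

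Finally I would verify that these two neighbours live in the desired slices. The edge $x'y'$ lies on the shortest path $P(x,y)$, and case (2) places $x'$ at distance $p = r-1$ from $x$ and $y'$ at distance $q = r-1$ from $y$ along that path (consistent with $p + 1 + q = 2r - 1 = dist(x,y)$). Thus $x' \in I(x,y)$ with $dist(x,x') = r-1$, giving $x' \in L(x,r-1,y)$, and symmetrically $y' \in L(y,r-1,x)$, which is exactly the configuration asserted by the claim.

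I do not anticipate a real obstacle. The only subtlety is the initial parity check that routes us into case (2) of pseudo-modularity; once we are there, the defining system has a unique solution whose degeneracy ($k=0$) hands us precisely the triangle we need, with $z$ itself as one of its vertices.
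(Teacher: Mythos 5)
Your proof is correct and follows essentially the same route as the paper: both invoke the pseudo-modularity Lemma~\ref{centroids} with $p=q=r-1$, $k=0$, so that $z'=z$ and the triangle edge $x'y'$ on a shortest $xy$-path yields the two adjacent neighbours in $L(x,r-1,y)$ and $L(y,r-1,x)$. The only difference is cosmetic: you first rule out case (1) by parity and then solve for $p,q,k$, whereas the paper simply exhibits these values directly.
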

\begin{proofclaim}
Let $p = q = r-1$ and $k = 0$. We have $dist(x,y) = p+q+1, \ dist(x,z) = p+k+1, \ dist(y,z) = q + k+1$.
Since $G$ is pseudo-modular, by Lemma~\ref{centroids}, $z$ is adjacent to the two ends of an edge in the middle of some shortest $xy$-path.
\end{proofclaim}
This above claim has many important consequences.
The first one is that, since both $L(x,r-1,y)$ and $L(y,r-1,x)$  are cliques, $Z$ has weak diameter at most $3 \leq r$. 
Furthermore, still by Claim~\ref{claim:two-meighbours}, every vertex of $Z$ is at a distance $\leq 2 < r$ from all the vertices of $L(x,r-1,y) \cup L(y,r-1,x)$. 
As a result, in order to decide whether a vertex of $Z$ is central, we only need to consider its distances to the vertices {\em outside} of $N^r[x] \cap N^r[y]$.
Note that in particular, every vertex at a distance $\leq r-2$ from $L(x,r-1,y) \cup L(y,r-1,x)$ is at a distance $\leq r$ from every vertex of $Z$.
Therefore, we further restrict our study to the vertices $w$ such that $dist(w,L(x,r-1,y) \cup L(y,r-1,x)) \geq r-1$. 
Recall that, by Lemma~\ref{lem:intersect-center}, we have $L(x,r-1,y) \cap C(G) \neq \emptyset$ and similarly $L(y,r-1,x) \cap C(G) \neq \emptyset$. 
Hence, $dist(w,L(x,r-1,y) \cup L(y,r-1,x)) \in \{r-1,r\}$.

\paragraph{Subcase $dist(w,L(x,r-1,y) \cup L(y,r-1,x)) = r-1$.}
Let $C := L(x,r-1,y)$ and consider the set $W$ of all vertices $w$ such that $dist(w,L(x,r-1,y) \cup L(y,r-1,x)) = dist(w,C) = r-1$ (the subcase when $dist(w,L(y,r-1,x)) = r-1$ is symmetric to this one). Let $A$ contain a gate $g_C(w)$ for every $w \in W$, which exists by Lemma~\ref{lem:gated}. Recall that the set $A$ can be computed in linear time (see Remark~\ref{rk:compute-gate}).
\begin{myclaim}\label{claim:reduce-hs}
 For every $z \in Z$, we have $\max_{w \in W} dist(z,w) \leq r$ if and only if $A \subseteq N^2[z]$.
\end{myclaim}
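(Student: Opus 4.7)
The plan is to handle the two implications separately.

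The reverse direction is a direct triangle-inequality computation. For each $w \in W$, the chosen gate $a = g_C(w) \in A$ satisfies $dist(a,w) = r-2$ (one less than $dist(w,C) = r-1$, since $a$ is adjacent to every vertex of $Pr(w,C) \subseteq C$). Assuming $A \subseteq N^2[z]$, I immediately get $dist(z,w) \leq dist(z,a) + dist(a,w) \leq 2 + (r-2) = r$, uniformly over $w \in W$.

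The forward direction is the substantive content. Fix $w \in W$ with $dist(z,w) \leq r$, set $a = g_C(w)$, and aim to show $dist(z,a) \leq 2$. My strategy is to produce a single vertex $c \in Pr(z,C) \cap Pr(w,C)$: by the gate property $a$ is adjacent to every vertex of $Pr(w,C)$, and every vertex of $Pr(z,C) = N(z) \cap C$ is a neighbor of $z$, so such a $c$ yields a length-$2$ path $z\,c\,a$.

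To extract such a $c$, I would apply the Helly property to the four balls $N^{r-1}[w]$, $N[z]$, $N^{r-1}[x]$, $N^{r}[y]$. Verifying the six pairwise intersections is routine: $N^{r-1}[x] \cap N^{r}[y] \neq \emptyset$ because $dist(x,y) = 2r-1$; the pairs $(N[z],N^{r-1}[x])$ and $(N[z],N^{r}[y])$ follow from the defining property $dist(z,x) = dist(z,y) = r$ of $z \in Z$; the pairs $(N^{r-1}[w],N^{r-1}[x])$ and $(N^{r-1}[w],N^{r}[y])$ follow from $dist(w,C) = r-1$, which gives $dist(w,x) \leq 2r-2$ and $dist(w,y) \leq 2r-1$; the last pair $(N[z],N^{r-1}[w])$ is precisely the standing hypothesis $dist(z,w) \leq r$. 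Any vertex $c$ in the common intersection then satisfies $dist(c,x)+dist(c,y) \geq dist(x,y) = 2r-1$ with $dist(c,x) \leq r-1$ and $dist(c,y) \leq r$, forcing both bounds to be tight and placing $c \in L(x,r-1,y) = C$. Combined with $dist(c,w) \leq r-1 = dist(w,C)$ this gives $c \in Pr(w,C)$; and from $dist(c,z) \leq 1$ together with $c \neq z$ (the sets $C$ and $Z$ are disjoint since their defining distances to $x$ differ) I get $c \in N(z) \cap C = Pr(z,C)$, as required.

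The main obstacle is controlling the particular gate kept in $A$, as opposed to exhibiting merely some gate of $w$ close to $z$. A naive Helly argument on $N^{r-2}[w]$, the balls $N[c']$ for $c' \in Pr(w,C)$, and $N^2[z]$ only yields some gate within distance $2$ of $z$, with no control over which one; and two distinct gates of $w$ need not be adjacent in a $C_4$-free graph (any would-be $C_4$ they form together with two vertices of $Pr(w,C)$ is already chorded by the clique edge inside $C$). Replacing that argument by the four-ball Helly invocation above bypasses this difficulty, because the extracted vertex $c$ lies in the common projection $Pr(w,C) \cap Pr(z,C)$ and therefore witnesses $dist(z,a) \leq 2$ simultaneously for \emph{every} admissible choice of gate $a = g_C(w)$.
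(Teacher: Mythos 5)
Your proof is correct and follows essentially the same route as the paper: the reverse direction is the same triangle inequality through the gate, and the forward direction invokes the Helly property on exactly the same four balls $N^{r-1}[x]$, $N^{r-1}[w]$, $N^{r}[y]$, $N[z]$ to extract a vertex of $Pr(w,C)\cap N[z]$, which is adjacent to the gate $g_C(w)$. You merely spell out the pairwise-intersection checks and the localization of $c$ in $C$ that the paper leaves implicit.
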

\begin{proofclaim}
First assume that $\forall a \in A, \ N[a] \cap N[z] \neq \emptyset$. 
Let $w \in W$ be arbitrary and let $a := g_C(w) \in A$.
Then, $dist(z,w) \leq dist(z,a) + dist(a,w) \leq 2 + (r-2) = r$.
Conversely, let us assume that we have $\max_{w \in W} dist(z,w) \leq r$, and let $w \in W$ be arbitrary.
The balls $N^{r-1}[x], N^{r-1}[w], N^r[y], N[z]$ pairwise intersect. 
Therefore, by the Helly property, $N[z] \cap Pr(w, C) \neq \emptyset$.
For any gate $g_C(w) \in A$, it implies $N[z] \cap N[g_C(w)] \supseteq 
N[z] \cap Pr(w, C) \neq \emptyset$. 
\end{proofclaim}
 Overall with this above claim we are reduced to {\sc Small Eccenttricities}, with $k=2$, which by Theorem~\ref{thm:all-ecc-param} can be solved in linear time.

\paragraph{Subcase $dist(w,L(x,r-1,y) \cup L(y,r-1,x)) = r$.}
Let $W$ contain all vertices $w$ such that $dist(w,L(x,r-1,y) \cup L(y,r-1,x)) = r$.
We will need the following properties of this subset $W$.
\begin{myclaim}
For every $w \in W$ we have $dist(x,w) = dist(y,w) = 2r-1$ and $dist(w,Z) = r-1$.
Moreover, $Pr(w,Z) = L(w,r-1,x) \cap L(w,r-1,y)$ is a clique.
\end{myclaim}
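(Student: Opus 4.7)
The plan is to reduce each of the three assertions to an application of the Helly property to three carefully chosen balls centered at $x$, $y$, and $w$. The engine throughout is the following tightness observation: since $dist(x,y)=2r-1$, any vertex $v$ with $dist(v,x)\leq r-1$ and $dist(v,y)\leq r$ is forced by the triangle inequality to satisfy $dist(v,x)=r-1$ and $dist(v,y)=r$, hence to lie in $L(x,r-1,y)$; the symmetric statement holds for $L(y,r-1,x)$. Combined with the defining property of $w\in W$, namely that $dist(w,L(x,r-1,y)\cup L(y,r-1,x))=r$, this observation will rule out every ``close'' case we consider.

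First I would prove $dist(x,w)=2r-1$ by contradiction. Assuming $dist(x,w)\leq 2r-2$, the three balls $N^{r-1}[x]$, $N^{r}[y]$, and $N^{r-1}[w]$ pairwise intersect: the first two because $dist(x,y)\leq (r-1)+r$; the first and third because $dist(x,w)\leq (r-1)+(r-1)$ by our assumption; the last pair because $dist(y,w)\leq diam(G)=2r-1=r+(r-1)$. By the Helly property one obtains a common vertex $v$, which by the tightness observation lies in $L(x,r-1,y)$ while satisfying $dist(v,w)\leq r-1<r$, contradicting $w\in W$. Swapping the roles of $x$ and $y$ gives $dist(y,w)=2r-1$.

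Next, $dist(w,Z)\geq r-1$ follows from $Z\subseteq N^{r}[x]$ together with $dist(x,w)=2r-1$. For the matching upper bound, I would apply the Helly property to $N^{r}[x]$, $N^{r}[y]$, and $N^{r-1}[w]$, which pairwise intersect thanks to the first part. The resulting common vertex $v$ belongs to $N^{r}[x]\cap N^{r}[y]=L(x,r-1,y)\cup L(y,r-1,x)\cup Z$, and $dist(v,w)\leq r-1<r$ rules out the first two subsets, so $v\in Z$, giving $dist(w,Z)\leq r-1$. The identity $Pr(w,Z)=L(w,r-1,x)\cap L(w,r-1,y)$ is then immediate by unpacking definitions: both sides describe the vertices at distance exactly $r-1$ from $w$ and distance exactly $r$ from each of $x$ and $y$. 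Finally, this common intersection is a clique because $L(w,r-1,x)$ already is, by Lemma \ref{lm:clique-slice}.

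The only mildly subtle step is identifying the right radii for each triple of balls and checking the three pairwise intersections; once Helly produces a common vertex, the tight budget $dist(x,y)=2r-1$ pins that vertex into a specific slice, which drives every contradiction. I do not foresee a genuinely hard obstacle in this claim.
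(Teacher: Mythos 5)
Your proof is correct and follows essentially the same route as the paper's: the same two applications of the Helly property to the triples $N^{r-1}[x],N^{r}[y],N^{r-1}[w]$ and $N^{r}[x],N^{r}[y],N^{r-1}[w]$, with the tight budget $dist(x,y)=2r-1$ forcing the common vertex into $L(x,r-1,y)$ (respectively, into $Z$). You merely spell out two steps the paper leaves implicit (the lower bound $dist(w,Z)\geq r-1$ and the identification of $Pr(w,Z)$ with the intersection of slices), which is fine.
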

\begin{proofclaim}
We first prove that $dist(x,w) = dist(y,w) = 2r-1$. Indeed, suppose for the sake of contradiction that $dist(x,w) \leq 2r-2$. Since the balls $N^{r-1}[x], N^{r-1}[w], N^r[y]$ pairwise intersect, by the Helly property, we get $dist(w,L(x,r-1,y)) \leq r-1$, a contradiction.
Hence, the balls $N^{r-1}[w], N^r[x], N^r[y]$ pairwise intersect, which implies, by the Helly property, $dist(w,Z) = r-1$.
In this situation, we have $Pr(w,Z) = L(w,r-1,x) \cap L(w,r-1,y)$.
By Lemma~\ref{lm:clique-slice}, $Pr(w,Z)$ is a clique.
\end{proofclaim}
The combination of this above claim with Lemma~\ref{lem:gated} implies the existence of a gate $g_Z(w) \in N^{r-2}[w] \cap \bigcap \{ N(v) \mid v \in Pr(w,Z) \}$ for every vertex $w \in W$.
Note that we can compute the gate of all such vertices $w$ in linear time, by using our classical BFS method in order to compute all the gates (see Remark~\ref{rk:compute-gate}).
-- Incidentally, this algorithm will also associate a vertex $g_Z(v)$ to every $v \notin W$, but the latter may not be a gate of $v$. -- 
So, let $A$ contain a gate $g_Z(w)$ for every vertex $w \in W$.
We prove as before:
\begin{myclaim}
 For every $z \in Z$, we have $\max_{w \in W} dist(z,w) \leq r$ if and only if $A \subseteq N^2[z]$.
\end{myclaim}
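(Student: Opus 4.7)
The structure of the claim mirrors the previous one (Subcase $dist(w,L(x,r-1,y) \cup L(y,r-1,x))=r-1$), so the plan is to follow essentially the same template, but with the key technical twist that now the gates $g_Z(w)$ land in $N^{r-2}[w]$ rather than $N^{r-1}[w]$, and they are adjacent to all of $Pr(w,Z)$ rather than $Pr(w,C)$. This asymmetric structure is what makes the triangle-inequality argument tight.

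For the easy direction, suppose $A \subseteq N^2[z]$, and let $w \in W$ be arbitrary. Since $g_Z(w) \in A$, we have $dist(z, g_Z(w)) \leq 2$. Combined with $g_Z(w) \in N^{r-2}[w]$ (the defining property of the gate), the triangle inequality gives $dist(z,w) \leq 2 + (r-2) = r$, exactly as needed.

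For the converse, assume $\max_{w\in W} dist(z,w) \leq r$ and fix an arbitrary $w \in W$. The plan is to apply the Helly property to the four balls $N^r[x]$, $N^r[y]$, $N^{r-1}[w]$, $N[z]$. Pairwise intersection is routine: $dist(x,y) = 2r-1$, and by the previous claim on $W$ we have $dist(x,w)=dist(y,w)=2r-1$, while $z \in Z$ gives $dist(z,x)=dist(z,y)=r$, and the hypothesis $dist(z,w)\leq r$ takes care of the last pair. This yields a common vertex $v$ with $dist(v,x),dist(v,y) \leq r$, $dist(v,w) \leq r-1$, $dist(v,z) \leq 1$. The crucial observation is that $v$ must lie in $Z$: indeed, $v \in N^r[x]\cap N^r[y]$ with $dist(x,y)=2r-1$ forces $v$ into $L(x,r-1,y) \cup L(y,r-1,x) \cup Z$, but the first two possibilities are excluded because $v \in N^{r-1}[w]$ would then contradict $dist(w, L(x,r-1,y) \cup L(y,r-1,x)) = r$. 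Therefore $v \in Z$, and since $dist(w,Z)=r-1$, actually $v \in Pr(w,Z)$. By the defining property of $g_Z(w)$, it is adjacent to $v$, so $dist(z, g_Z(w)) \leq dist(z,v) + dist(v, g_Z(w)) \leq 1 + 1 = 2$, giving $g_Z(w) \in N^2[z]$ as required.

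The main subtlety, and the reason this requires a separate argument from the previous claim rather than a direct invocation, is the step that confines the Helly witness $v$ to $Z$. One must leverage the specific definition of $W$ (distance exactly $r$ from both slices) to rule out $L(x,r-1,y) \cup L(y,r-1,x)$; without this, $v$ could land in a slice and the gate $g_Z(w)$ would have no reason to see $v$. Once $v \in Z = Pr(w,Z)$ is established, the argument closes cleanly via the gate's adjacency property.
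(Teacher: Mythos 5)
Your proof is correct and follows essentially the same route as the paper's: the same Helly application to the four balls $N[z]$, $N^{r-1}[w]$, $N^r[x]$, $N^r[y]$, leading to a common vertex in $N[z]\cap Pr(w,Z)$ and then to $g_Z(w)\in N^2[z]$ via the gate's adjacency to $Pr(w,Z)$. The only (cosmetic) difference is how you place the Helly witness in $Pr(w,Z)$ — via the decomposition of $N^r[x]\cap N^r[y]$ and the definition of $W$, rather than the paper's direct distance arithmetic showing $v\in L(w,r-1,x)\cap L(w,r-1,y)$ — and both are valid.
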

\begin{proofclaim}
We can prove this above condition to be sufficient for having $\max_{w \in W} dist(z,w) \leq r$ in the exact same way as we did for Claim~\ref{claim:reduce-hs}. 
Conversely, let us assume that we have $\max_{w \in W} dist(z,w) \leq r$, and let $w \in W$ be arbitrary.
The balls $N[z], N^{r-1}[w], N^r[x], N^r[y]$ pairwise intersect, and so, by the Helly property, $N[z] \cap Pr(w,Z) = N[z] \cap (L(w,r-1,x) \cap L(w,r-1,y)) \neq \emptyset$.
It implies that $N[z] \cap N[g_Z(w)] 
\supseteq
N[z] \cap Pr(w,Z) \neq \emptyset$.
\end{proofclaim}
 We are done by reducing a final time to {\sc Small Eccentricities}, with $k=2$, which by Theorem~\ref{thm:all-ecc-param} can be solved in linear time.

\section{More reductions to split graphs}\label{sec:diam-chordal}

We conclude by considering diameter computation within another class  than Helly graphs, namely chordal graphs.
This is motivated by our results in Section~\ref{sec:diametral-pairs} where we reduced the problem of computing a diametral pair on chordal Helly graphs to the same problem on split Helly graphs.
We prove next that there exists a (randomized) reduction from diameter computation on {\em general} chordal graphs to the same problem on split graphs.

The sparse representation of a split graph is the list of the closed neighbourhoods of vertices in its stable set~\cite{DHV19+b}.
The {\sc Disjoint Set} problem consists in computing the diameter of a split graph given by its sparse representation.
For a split graph $H$ with stable set $U$ we define $\ell(H) := \sum_{u \in U} deg_H(u)$, {\it a.k.a.} the size of its sparse representation.

\begin{theorem}\label{thm:chordal-reduction}
For any chordal graph $G$, we can compute in linear time the sparse representations of a family $(H_i)$ of split graphs such that:
\begin{itemize}
\item If for every $i$ we can compute $diam(H_i)$ in time ${\cal O}(|E(H_i)|^a \cdot |V(H_i)|)$, then we can compute $diam(G)$ in time $\tilde{\cal O}(m^{1+a})$;
\item If for every $i$ we can compute $diam(H_i)$ in time ${\cal O}(\ell(H_i) \cdot |V(H_i)|^b)$, then we can compute $diam(G)$ in time $\tilde{\cal O}(mn^b)$.
\end{itemize}
\end{theorem}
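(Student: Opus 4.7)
The plan is to extend to arbitrary chordal graphs the reduction introduced in Section~\ref{sec:diametral-pairs} for $C_4$-free Helly graphs. There, a single clique $L(u,r-1,w)$ played the role of a ``centroidal'' clique separator; here I would instead work with the family of all minimal clique separators arising from a clique tree of $G$. First, compute $r:=rad(G)$ in linear time with the algorithm of~\cite{ChD94}, and recall that $diam(G)\in\{2r-1,2r\}$ for every chordal graph (a classical consequence of LexBFS, see Lemma~\ref{lem:additive-approx} and Lemma~\ref{lem:odd-diam}). So it suffices to decide whether $diam(G)=2r$. Next, compute a clique tree $T$ of $G$ in linear time. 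For every tree edge $e=\{K,K'\}$, the separator $S_e:=K\cap K'$ is a clique of $G$; deleting $S_e$ splits $V\setminus S_e$ into two sides $A_e,B_e$ (the unions of components on the two sides of $e$ in $T$).

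Then, I would build a split graph $H_e$ whose clique is $S_e$ and whose stable set contains, on each side, a carefully chosen set of ``representatives'' from $N_G(S_e)$, one per metric-projection pattern of the vertices $x$ with $dist_G(x,S_e)=r-1$ lying on that side. The pruning that produces the stable set is the same as for $S_{red}$ in the proof of Lemma~\ref{lem:split}: one keeps only pairwise incomparable projections, which is valid since chordal graphs are $C_4$-free so that Lemma~\ref{lem:inclusion-proj} applies. Because $H_e$ is then a subgraph of $G$, we automatically have $|E(H_e)|\leq m$.

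For correctness, the key observation is that for $x\in A_e,\, y\in B_e$, every shortest $xy$-path crosses $S_e$, and since $S_e$ is a clique,
\[
dist_G(x,y)=dist_G(x,S_e)+dist_G(y,S_e)+\bigl[Pr_G(x,S_e)\cap Pr_G(y,S_e)=\emptyset\bigr].
\]
Arguing as in Lemma~\ref{lem:projection}, this would show that $diam(G)=2r$ iff some $H_e$ admits two stable-set representatives on opposite sides of $S_e$ with disjoint neighbourhoods in $S_e$; equivalently, iff $diam(H_e)=3$ for some $e$. Thus the family $(H_e)$ indeed encodes $diam(G)$, and the decision reduces to a \textsc{Disjoint Set} instance on each $H_e$.

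The remaining step is the size/complexity analysis. Since $\sum_{K}|K|={\cal O}(n+m)$ over the maximal cliques of $T$ and every vertex of $G$ induces a subtree of $T$, one obtains $\sum_e|S_e|={\cal O}(m)$, which controls the clique part. For the stable sets, a careful (and possibly randomized) sampling of representatives is needed to ensure $\sum_i|V(H_i)|=\tilde{\cal O}(m)$ and $\sum_i\ell(H_i)=\tilde{\cal O}(m)$ w.h.p. Plugging these into the per-instance running times $|E(H_i)|^a|V(H_i)|\leq m^a|V(H_i)|$ and $\ell(H_i)|V(H_i)|^b\leq n^b\ell(H_i)$ then yields the two claimed overall runtimes. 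The main obstacle will precisely be this size/charging argument: without the incomparability of projections and without a randomized pruning of redundant representatives, the stable sets summed across all tree edges could blow up to $\Theta(nm)$, which would ruin both bounds; the whole point is that the incomparability of projections, combined with a Chernoff-bound argument analogous to the one used in Lemma~\ref{lem:set-cover}, lets one reduce this to only $\tilde{\cal O}(m)$ useful representatives overall.
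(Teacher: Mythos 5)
There are genuine gaps here, and the most serious one is right at the start: the claim that $diam(G)\in\{2r-1,2r\}$ for every chordal graph is false. That relation (Lemma~\ref{lem:unimodal}) is a property of \emph{Helly} graphs; for general chordal graphs one only has $diam(G)\geq 2\,rad(G)-2$, and the bound is tight (the $3$-sun has radius $2$ and diameter $2$). Neither Lemma~\ref{lem:additive-approx} nor Lemma~\ref{lem:odd-diam} rescues this, since they relate $e(u)$ to $diam(G)$, not $rad(G)$ to $diam(G)$. This false premise then propagates: you restrict the stable set of each $H_e$ to representatives of vertices at distance exactly $r-1$ from $S_e$, but a diametral pair $(x,y)$ separated by $S_e$ can sit at arbitrary, unequal distances from $S_e$ (think of a long path and a separator near one end), so your split graphs simply do not see most diametral pairs. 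The quantity that matters per separator is $d_1+d_2$ (the two largest values of $\max_{v\in V_i}dist(v,S_e)$ over the sides/components), not the radius; the paper reduces the whole question to deciding whether $d_S=d_1+d_2$ or $d_1+d_2+1$, using only gates of vertices realizing $d_1$ and $d_2$, plus two auxiliary clique vertices $a,b$ and a randomized bipartition of the tied components repeated ${\cal O}(\log n)$ times.

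The second gap is the one you yourself flag as ``the main obstacle,'' and it is not a detail: with one split graph per tree edge you have $\Theta(n)$ instances, each potentially with $\Theta(n)$ stable-set vertices and $\Theta(n)$-size clique, so $\sum_i|V(H_i)|$ and $\sum_i\ell(H_i)$ can reach $\Theta(n^2)$ or worse; Lemma~\ref{lem:inclusion-proj}-style pruning only removes adjacent representatives with comparable projections and gives no global bound, and no Chernoff-style sampling argument is exhibited that would. The paper's actual mechanism is structurally different: it recurses on a \emph{weighted centroid} maximal clique of the clique-tree, so there are only ${\cal O}(\log n)$ recursion levels, the clique-trees of the subgraphs at any level are pairwise disjoint subtrees of $T$, at most one gate is kept per maximal clique, and hence each level contributes ${\cal O}(w(T))={\cal O}(m)$ to all the size parameters (Lemmas~\ref{lem:gate-with-tree} and~\ref{lem:split-size}). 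Without replacing your ``one split graph per tree edge'' scheme by such a balanced decomposition (or supplying the missing charging argument), neither of the two claimed running-time bounds follows.
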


 An interesting byproduct of our reduction, proved in Section~\ref{sec:ecc-chordal}, is that we can approximate in quasi linear time the eccentricity of all vertices in a chordal graph with a one-sided additive error of at most one. This answers an open question from~\cite{Dra19}. Finally, in Section~\ref{sec:vc-dim} we give another application of Theorem~\ref{thm:chordal-reduction} to chordal graphs of constant VC-dimension.

\subsection{Preliminaries}
We shall use the following metric properties of chordal graphs.
We stress that these are quite similar to some metric properties of $C_4$-free Helly graphs that we proved in Section~\ref{sec:helly-c4-free}.

\begin{lemma}[\cite{ChD94}]\label{lem:chordal-gate}
For any clique $C$ in a chordal graph $G$, we can compute in linear time the distance $dist(x,C)$ and a gate $g(x) \in \bigcap \{ I(x,w) \mid w \in Pr(x,C) \}$ adjacent to all vertices from $Pr(x,C)$, for all vertices $x$.
\end{lemma}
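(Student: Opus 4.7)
The plan is to process vertices in order of increasing distance from $C$, following the BFS-with-father procedure of Remark~\ref{rk:compute-gate}. First, run a multi-source BFS from $C$ to compute $dist(x, C)$ for every vertex $x$ in linear time. For each $x$ with $dist(x, C) \leq 1$, set $g(x) := x$ and $p(x) := |N(x) \cap C|$, the counts computed once and for all in linear time by scanning the adjacency lists of the level-$1$ vertices. For each $x$ with $k := dist(x, C) \geq 2$, among the neighbours $u \in N(x)$ with $dist(u, C) = k-1$, pick one maximising $p(u)$, and set $g(x) := g(u)$ and $p(x) := p(u)$. The whole procedure runs in $\mathcal{O}(n+m)$ time.

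Correctness reduces to the following key combinatorial claim, which is where chordality is used: for every $x \notin C$ with $k := dist(x, C) \geq 2$, any two neighbours $u_1, u_2 \in N(x)$ at distance $k-1$ from $C$ have \emph{comparable} projections, {\it i.e.}, $Pr(u_1, C) \subseteq Pr(u_2, C)$ or $Pr(u_2, C) \subseteq Pr(u_1, C)$. Assuming the claim, the greedy choice above selects a parent $u^*$ whose projection dominates those of all other parents. Since any $w \in Pr(x, C)$ lies on a shortest $xw$-path whose first edge goes to some parent $u'$, we have $w \in Pr(u', C) \subseteq Pr(u^*, C)$, so $Pr(x, C) \subseteq Pr(u^*, C)$. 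By the inductive hypothesis, $g(u^*)$ is adjacent to every vertex of $Pr(u^*, C) \supseteq Pr(x, C)$ and lies in $I(u^*, w)$ for every $w \in Pr(u^*, C)$; using that $u^* \in I(x, w)$ whenever $w \in Pr(x, C)$ (which holds since $dist(x, u^*) + dist(u^*, w) = 1 + (k-1) = k = dist(x, w)$), we conclude that $g(x) := g(u^*) \in I(x, w)$ for every $w \in Pr(x, C)$, as required.

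The hard part will be proving the comparability claim. Assume, for contradiction, that $w_1 \in Pr(u_1, C) \setminus Pr(u_2, C)$ and $w_2 \in Pr(u_2, C) \setminus Pr(u_1, C)$. Triangle-inequality bounds force $w_1, w_2 \in Pr(x, C)$ and $dist(u_1, w_2) \geq k$, $dist(u_2, w_1) \geq k$; since $C$ is a clique, $w_1 w_2 \in E$. Take shortest $u_i w_i$-paths $P_i$ of length $k-1$, and consider the closed walk $W := u_1 \, x \, u_2 \, P_2 \, w_2 \, w_1 \, P_1^{-1}$ of length $2k+1 \geq 5$. The plan is to extract a simple sub-cycle $\Gamma$ of length $\geq 4$ out of $W$ that still contains the edge $w_1 w_2$, to invoke chordality of $G$ to produce a chord of $\Gamma$, and then to check by a case analysis on the chord's endpoints that each possible chord type contradicts either the fact that $P_1, P_2$ are shortest paths, or one of the distance bounds $dist(u_i, w_{3-i}) \geq k$. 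This case analysis---which splits according to whether the chord connects two interior vertices of $P_1$ and $P_2$, connects $x$ to an interior vertex, or connects $w_{3-i}$ to an interior vertex of $P_i$---is technical but routine, in the same spirit as classical chordal graph arguments (see~\cite{ChD94}).
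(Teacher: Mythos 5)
The paper never proves this lemma: it is imported verbatim from \cite{ChD94}, and the algorithm you describe is exactly the BFS-with-fathers procedure the paper itself records in Remark~\ref{rk:compute-gate}. So your route is the intended one, and your inductive bookkeeping is sound: for the chosen father $u^*$ one indeed gets $Pr(x,C)=Pr(u^*,C)$ (the reverse inclusion $Pr(u^*,C)\subseteq Pr(x,C)$ being automatic), hence $p(x)=|Pr(x,C)|$ is maintained and $g(u^*)\in I(x,w)$ for all $w\in Pr(x,C)$. Two remarks on the part you defer. First, the comparability you need is for two fathers $u_1,u_2$ of $x$ that need \emph{not} be adjacent; this is strictly stronger than the statement the paper quotes as Lemma~\ref{lem:chordal-inclusion-proj} (adjacent vertices), so it genuinely has to be proved, as you set out to do. Second, the case analysis does not close the way you announce it: it is \emph{not} true that every possible chord of your cycle yields a contradiction. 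Writing $P_i=(u_i=p_i^0,p_i^1,\dots,p_i^{k-2},w_i)$, the distance bounds eliminate every chord except the same-level ones $p_1^jp_2^j$ (including $u_1u_2$), which are perfectly compatible with $dist(u_1,w_2)=dist(u_2,w_1)=k$ and with $P_1,P_2$ being shortest. The correct finish is: $P_1$ and $P_2$ are automatically vertex-disjoint (a common vertex would force $dist(u_1,w_2)\le k-1$), so $W$ is already a simple cycle of length $2k+1\ge 5$; if it has no same-level chord it is chordless, and otherwise the same-level chord $p_1^{j^*}p_2^{j^*}$ with $j^*$ maximum, together with the two path tails and the edge $w_1w_2$, bounds a cycle of length $2(k-1-j^*)+2\ge 4$ whose every potential chord is excluded. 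Either way chordality is violated. With that adjustment your proof is complete and matches the standard argument.
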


\begin{lemma}[\cite{ChD94}]\label{lem:chordal-inclusion-proj}
In a chordal graph $G$, for any clique $C$ and adjacent vertices $s,t \notin C$ the metric projections $Pr(s,C)$ and $Pr(t,C)$ are comparable, {\it i.e.}, either $Pr(s,C) \subseteq Pr(t,C)$ or $Pr(t,C) \subseteq Pr(s,C)$.
\end{lemma}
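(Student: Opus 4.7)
The plan is to argue by contradiction, using chordality to forbid a long induced cycle built from shortest paths to the two alleged incomparable projections. Suppose there exist $s^{*} \in Pr(s,C)\setminus Pr(t,C)$ and $t^{*} \in Pr(t,C)\setminus Pr(s,C)$. First I would extract distance equalities. Since $st\in E$ and $s^{*}\in C$, the triangle inequality gives $dist(t,s^{*})\le 1+dist(s,C)$; combined with $dist(t,s^{*})>dist(t,C)$ this yields $dist(t,C)\le dist(s,C)$, and by symmetry $dist(s,C)=dist(t,C)=:d$. Plugging back, $dist(t,s^{*})=dist(s,t^{*})=d+1$, and of course $s^{*}t^{*}\in E$ because $C$ is a clique.

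Next I would fix shortest paths $P:s=u_{0},u_{1},\dots,u_{d}=s^{*}$ and $Q:t=v_{0},v_{1},\dots,v_{d}=t^{*}$ and form the closed walk $\Gamma=(u_{0},u_{1},\dots,u_{d},v_{d},v_{d-1},\dots,v_{0},u_{0})$ of length $2d+2$. Its vertices are pairwise distinct, for any collision would either shorten $P$, shorten $Q$, or contradict $dist(s,t^{*})=dist(t,s^{*})=d+1$, so $\Gamma$ is a genuine cycle. I would then enumerate which chords are permitted: chords inside $P$ or inside $Q$ would break shortest-path minimality; a chord $sv_{j}$ with $j\ge 1$ would give $dist(s,t^{*})\le 1+(d-j)\le d$, contradicting $dist(s,t^{*})=d+1$, and symmetrically $tu_{i}$ with $i\ge 1$ is ruled out; chords $s^{*}v_{j}$ with $j\le d-1$ would give $dist(t,s^{*})\le j+1\le d$, again a contradiction, and symmetrically for $u_{i}t^{*}$. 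Hence every chord of $\Gamma$ must be of the form $u_{i}v_{j}$ with $1\le i\le d-1$ and $1\le j\le d-1$. In particular, if $d=1$ no chord is possible and $\Gamma$ is an induced $C_{4}$, directly contradicting chordality of $G$.

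For $d\ge 2$ I would pick a chord $u_{i}v_{j}$ minimizing $i+j$; this chord cuts $\Gamma$ into two shorter cycles, and I would focus on the inner one $\Gamma_{1}=(u_{0},u_{1},\dots,u_{i},v_{j},v_{j-1},\dots,v_{0},u_{0})$, of length $i+j+2\ge 4$. I claim $\Gamma_{1}$ is induced. Indeed, any chord of $\Gamma_{1}$ inside the $P$- or $Q$-segment would contradict shortest-path minimality, as before; and any chord of the form $u_{i'}v_{j'}$ with $i'\le i$, $j'\le j$ must, by the minimality of $i+j$, satisfy $(i',j')=(i,j)$, which is already an edge of $\Gamma_{1}$ and not a chord. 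Chords of $\Gamma_{1}$ involving $s=u_{0}$ or $t=v_{0}$ are forbidden by the same argument as for $\Gamma$. Thus $\Gamma_{1}$ is an induced cycle of length at least $4$ in the chordal graph $G$, a contradiction.

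The main subtlety—and the one piece worth verifying carefully—is the chord enumeration for $\Gamma_{1}$ together with the minimality of $i+j$; everything else is routine triangle-inequality bookkeeping. No new auxiliary results are needed beyond chordality and the defining property of the projections.
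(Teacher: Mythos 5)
Your proof is correct. Note that the paper does not actually prove this lemma -- it is imported from \cite{ChD94} -- so there is no in-paper argument to match against; the closest analogue is the one-line proof of Lemma~\ref{lem:inclusion-proj} for $C_4$-free Helly graphs, where $s,t\in N(C)$ and the incomparability immediately yields the induced cycle $(s,t,t^*,s^*,s)$. Your argument is exactly the right generalization of that to arbitrary $s,t\notin C$: the distance bookkeeping forcing $dist(s,C)=dist(t,C)=d$ and $dist(s,t^*)=dist(t,s^*)=d+1$ is sound, the verification that $\Gamma$ is a genuine cycle is complete (a collision $u_i=v_j$ would force both $i\ge j+1$ and $j\ge i+1$), the chord classification is exhaustive, and the standard ``innermost chord'' trick with $i+j$ minimal correctly produces an induced cycle of length $i+j+2\ge 4$. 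The $d=1$ base case of your proof is literally the paper's proof of Lemma~\ref{lem:inclusion-proj}. One cosmetic remark: for $d\ge 2$ you should say a word about why a chord of $\Gamma$ exists at all -- either because chordality itself guarantees one on a cycle of length $2d+2\ge 6$, or because otherwise $\Gamma$ is already an induced long cycle and you are done; either phrasing closes the case with no extra work.
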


\subsection{The reduction}\label{sec:chordal-reduction}
Our reduction is one-to-many.
We recall that a clique-tree of a graph $G$ is a tree $T$ of which the nodes are the maximal cliques of $G$, and such that for every vertex $v$ the set of all the maximal cliques that contain $v$ induces a connected subtree.
It is known~\cite{BlP93} that $G$ is chordal if and only if it has a clique-tree and, furthermore, a clique-tree can be computed in linear time.

We may see a clique-tree $T$ as a node-weighted tree where, for any maximal clique $C$, $w(C) = |C|$.
Then, let $w(T) := \sum_C w(C)$.
For a chordal graph, $w(T) = {\cal O}(n+m)$~\cite{BlP93}.
We will use a standard result on weighted centroids in trees, namely:

\begin{lemma}[\cite{Gol71}]\label{lem:weighted-centroid}
Every node-weighted tree $T$ has at least one {\it weighted centroid}, that is, a node $v$ whose removal leaves components of maximum weight $\leq w(T)/2$. Moreover, a weighted centroid can be computed in linear time.  
\end{lemma}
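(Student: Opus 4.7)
The plan is to prove existence by a greedy descent along the tree, then implement the search in linear time via a single DFS.

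For existence, I would start at an arbitrary node $v_0$ and repeat the following step. At the current node $v_i$, look at the components of $T - v_i$. Because their weights sum to $w(T) - w(v_i) \leq w(T)$, at most one such component can have weight strictly greater than $w(T)/2$ (two such components would already exceed $w(T)$). If no heavy component exists, then $v_i$ is a weighted centroid and we halt. Otherwise, let $C$ be the unique heavy component and move to the neighbour $v_{i+1}$ of $v_i$ lying in $C$.

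The crux is to show that this walk never revisits a node, from which termination in at most $n$ steps is immediate. When we delete $v_{i+1}$, the component that contains $v_i$ is precisely $T \setminus C$ together with $v_i$-side structure, whose total weight equals $w(T) - w(C) < w(T)/2$ by the choice of $C$. Hence this ``backward'' component is not heavy at step $i+1$, so any heavy component at $v_{i+1}$ (if one exists) must lie strictly on the opposite side of $v_{i+1}$ from $v_i$. The walk therefore traces a simple path in $T$ and must stop at a weighted centroid. I expect this non-backtracking/monotonicity argument to be the main conceptual obstacle; once in hand, existence follows.

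For the linear-time algorithm, I would root $T$ at an arbitrary node and compute the subtree weights $s(v) := \sum_{u \in T_v} w(u)$ by one post-order DFS in $O(n)$ time. The components of $T - v$ are then exactly the subtrees rooted at the children of $v$, each of weight $s(c)$, together with one ``upward'' component of weight $w(T) - s(v)$. Consequently, $v$ is a weighted centroid if and only if $s(c) \leq w(T)/2$ for every child $c$ of $v$ and also $w(T) - s(v) \leq w(T)/2$. Testing this condition at every node costs time proportional to the sum of degrees, hence $O(n)$ overall. Alternatively, one may realize the greedy walk directly: start at the root, and as long as some child $c$ satisfies $s(c) > w(T)/2$ move to that child; by the monotonicity argument above this descent is strictly downward and also runs in $O(n)$.
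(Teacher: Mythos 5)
Your proof is correct. Note, however, that the paper does not prove this lemma at all: it is imported as a known black-box result with a citation to Goldman (1971), so there is no in-paper argument to compare against. Your two-part argument --- the non-backtracking greedy walk for existence (using that at most one component of $T-v$ can exceed $w(T)/2$, and that the ``backward'' component after a move has weight $w(T)-w(C)<w(T)/2$) and the single post-order DFS computing subtree weights for the $O(n)$ implementation --- is the standard proof of this classical fact, and both halves are sound as written.
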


Let $T$ be a fixed clique-tree of $G$.
If $T$ is reduced to a single node, or to exactly two nodes, respectively (equivalently, either $G$ is a complete graph or it is the union of two crossing complete subgraphs), then we output $diam(G) = 1$, or $diam(G) = 2$, respectively (base case of our reduction).
Otherwise, let $S \in V(T)$ be a weighted centroid of $T$. 
By Lemma~\ref{lem:weighted-centroid}, the clique $S$ can be computed in time ${\cal O}(n)$ if we are given $T$ in advance. 
Furthermore, let $T_1,T_2,\ldots,T_{\ell}$ be the components of $T \setminus \{S\}$, and for every $1 \leq i \leq \ell$ let $V_i := \left(\bigcup V(T_i)\right) \setminus S$.
It is known~\cite{BlP93} that the sets $V_1,V_2,\ldots,V_{\ell}$ are exactly the connected components of $G \setminus S$.
Since $\forall i, N_G(V_i) \subseteq S$ is a clique, the closed neighbourhoods $N_G[V_i], 1 \leq i \leq \ell$ induce distance-preserving subgraphs of $G$, which we denote by $G_1,G_2,\ldots, G_{\ell}$.
We apply our reduction recursively on each of these subgraphs $G_i$.
Then, let $d_S := \max_{v_i \in V_i, v_j \in V_j \mid i \neq j} dist(v_i,v_j)$.
We have: $$diam(G) = \max \{d_S, \max\{ diam(G_i) \mid 1 \leq i \leq \ell\}\}.$$

We are left with computing $d_S$.
For that, we define $\forall i, d_i := \max_{v_i \in V_i} dist(v_i,S)$.
We order the sets $V_i$ by non-increasing value of $d_i$.
Since $S$ is a clique, we get $d_1 + d_2 \leq d_S \leq d_1 + d_2 + 1$.
In order to decide in which case we are, we proceed as follows:
\begin{itemize}
\item We discard all sets $V_i$ such that  $d_i < d_2$. Doing so we are left with sets $V_1,V_2,\ldots,V_k, \ k \leq \ell$.
\item Then, for every $1 \leq i \leq k$ and $v \in V_i$, if $dist(v,S) = d_i$ then we compute a gate for vertex $v$, which exists by Lemma~\ref{lem:chordal-gate}.
Furthermore, if two such gate vertices are adjacent then they must be in the same connected component of $G \setminus S$, and by Lemma~\ref{lem:chordal-inclusion-proj} their respective metric projections on $S$ are comparable.
It implies that we can remove any of these two vertices with largest metric projection on $S$ (see Section~\ref{sec:diametral-pairs} for a similar idea on $C_4$-free Helly graphs).
Thus, from now on, we assume all selected gate vertices to be pairwise non-adjacent.
\item Finally, let $a,b \notin V$ be fresh new vertices which we make  adjacent to each other and to all vertices of $S$. There are two subcases:
\begin{itemize}
\item Case $d_1 \neq d_2$. We make vertex $a$ adjacent to all gates of the vertices in $V_1$, while we make vertex $b$ adjacent to all gates of the vertices in $\bigcup_{i=2}^k V_i$.
\item Case $d_1 = d_2$. For every $1 \leq i \leq k$, with probability $1/2$ we make vertex $a$ adjacent to all gates of the vertices in $V_i$ (otherwise, we do so with vertex $b$).
\end{itemize}
\end{itemize}
Doing as above we get a split graph $H$ whose clique and stable set are $S \cup \{a,b\}$ and the selected gate vertices, respectively. 

\begin{myclaim}
If $diam(H) = 3$ then $d_S = d_1 + d_2 + 1$.
Conversely, if $d_S = d_1 + d_2 + 1$ then $diam(H) = 3$ with probability $\geq 1/2$.
\end{myclaim}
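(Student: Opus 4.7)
The approach is to translate the condition $diam(H)=3$ into a statement about metric projections on $S$ of vertices in different components of $G\setminus S$. Since $H$ is a connected split graph with clique $S\cup\{a,b\}$, we have $diam(H)\le 3$, with equality iff there exist two selected gates $g_1,g_2$ sharing no neighbour in $S\cup\{a,b\}$. A first preparatory step is to check that for every computed gate $g=g(v)$ with $v\in V_i$ and $dist(v,S)=d_i$ we have $N_G(g)\cap S=Pr(v,S)$ \emph{exactly}: inclusion $\supseteq$ is given by Lemma~\ref{lem:chordal-gate}, while for the reverse inclusion, if $g$ were adjacent to some $w\in S$, then $dist(v,w)\le dist(v,g)+1=d_i$ would force $w\in Pr(v,S)$. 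Writing $a_g\in\{a,b\}$ for the token to which $g$ is attached by construction, we get $N_H(g)\cap(S\cup\{a,b\})=Pr(v,S)\cup\{a_g\}$, so $diam(H)=3$ iff there exist selected gates $g_1=g(v_1),g_2=g(v_2)$ with $Pr(v_1,S)\cap Pr(v_2,S)=\emptyset$ and $a_{g_1}\neq a_{g_2}$. By construction of the $a,b$-edges -- deterministic when $d_1\neq d_2$, per-component random when $d_1=d_2$ -- the condition $a_{g_1}\neq a_{g_2}$ further forces $v_1,v_2$ to lie in distinct components $V_{i_1},V_{i_2}$ of $G\setminus S$.

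For the forward direction I take such a pair $(g_1,g_2)$. Since $v_1,v_2$ are vertices realising $dist(v_j,S)=d_{i_j}$, their projections are disjoint, and any path between distinct components crosses $S$, we get $dist_G(v_1,v_2)=d_{i_1}+d_{i_2}+1$. A short case analysis on the label assignment pins down $d_{i_1}+d_{i_2}=d_1+d_2$: when $d_1>d_2$ the label $a$ is reserved for $V_1$, so one of $v_1,v_2$ lies in $V_1$ (contributing $d_1$), while the other lies in some $V_j$ with $j\ge 2$ which the pre-filtering guarantees satisfies $d_2\le d_j\le d_1$, hence $d_j=d_2$; when $d_1=d_2$ every surviving component already satisfies $d_i=d_1$. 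Thus $d_S\ge d_1+d_2+1$, and together with the standing bound $d_S\le d_1+d_2+1$ this yields $d_S=d_1+d_2+1$.

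For the converse, suppose $d_S=d_1+d_2+1$. Collapsing the chain $dist_G(v,w)\le dist(v,S)+1+dist(w,S)\le d_i+d_j+1\le d_1+d_2+1$ to equalities extracts witnesses $v^*\in V_i,w^*\in V_j$, $i\neq j$, with $dist(v^*,S)=d_i,dist(w^*,S)=d_j,d_i+d_j=d_1+d_2$ and $Pr(v^*,S)\cap Pr(w^*,S)=\emptyset$, so both $g(v^*)$ and $g(w^*)$ belong to the initial gate pool. The delicate step -- and the main obstacle -- is to show that the subsequent pruning preserves a certifying pair. I will use a chain-descent argument: if a gate $g$ is removed because some adjacent gate $g'$ in the reduced set satisfies $|Pr(g')|\le|Pr(g)|$, then Lemma~\ref{lem:chordal-inclusion-proj} forces $Pr(g')\subseteq Pr(g)$, and since gates lie in $V\setminus S$ adjacency of $g,g'$ keeps them in the same component. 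Repeating the argument whenever $g'$ is itself removed, and using that each step strictly decreases the size of the reduced set, the chain terminates at a surviving gate in the same component whose projection is contained in that of $g$. Applied to $v^*$ and $w^*$, this yields surviving gates $g_1^*\in V_i,g_2^*\in V_j$ with still-disjoint projections. Finally, $g_1^*$ and $g_2^*$ inherit the labels of their respective components: when $d_1\neq d_2$ they differ deterministically, whereas when $d_1=d_2$ the two distinct components flip independent fair coins and so receive different labels with probability exactly $1/2$. On that event $g_1^*,g_2^*$ certify $diam(H)=3$, completing the proof.
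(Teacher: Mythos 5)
Your proof is correct and follows essentially the same route as the paper's: you translate $diam(H)=3$ into the existence of two selected gates lying in distinct components of $G\setminus S$ with disjoint metric projections on $S$, and then combine this with the standing bounds $d_1+d_2\le d_S\le d_1+d_2+1$ (your only slip is harmless: for $j\ge 2$ the non-increasing ordering gives $d_j\le d_2$, not merely $d_j\le d_1$, and that is what actually forces $d_j=d_2$). You are in fact more careful than the paper in one spot -- where it simply asserts that the gates $x^*,y^*$ of the witnesses ``may be assumed w.l.o.g.\ to be in the stable set of $H$,'' your chain-descent argument via Lemma~\ref{lem:chordal-inclusion-proj} genuinely verifies that the pruning of adjacent gates preserves a certifying pair with disjoint projections in the same two components.
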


\begin{proofclaim}
On one direction, let $x^* \ \text{and} \ y^*$ be two gate vertices such that $dist_H(x^*,y^*) = 3$. By construction of $H$, we have that $x^*,y^*$ are the respective gates of two vertices $x,y$ such that $\min \{dist_G(x,S),dist_G(y,S\} \geq d_2$. Furthermore, since $\{a,b\} \cap (N_H(x^*) \cap N_H(y^*)) = \emptyset$, we get that $x,y$ are in different components of $G \setminus S$, and $\{dist_G(x,S),dist_G(y,S\} = \{d_1,d_2\}$. Hence, $d_S \geq dist(x,y) \geq d_1 + d_2 + 1$.
On the other direction, let us assume the existence of a pair $(x,y)$ such that: $x$ and $y$ are in different connected components of $G \setminus S$, and $dist(x,y) = d_1 + d_2 + 1$. Without loss of generality, let $dist(x,S) = d_1$ and $dist(y,S) = d_2$. Let also $x^*,y^*$ be the two gates computed for $x$ and $y$, respectively (we may assume, without loss of generality, that $x^*,y^*$ are indeed in the stable set of $H$). We must have $N_H(x^*) \cap N_H(y^*) \subseteq \{a,b\}$. In particular,  if $d_1 \neq d_2$ then $N_H(x^*) \cap N_H(y^*) = \emptyset$, else $Prob[  N_H(x^*) \cap N_H(y^*) = \emptyset ] \geq 1/2$.
\end{proofclaim}

Overall, we may repeat the construction of this above split graph $H$ up to ${\cal O}(\log{n})$ times in order to compute $d_S$ with high probability.

\subsection{Analysis}\label{sec:chordal-analysis}

Since at every step of our reduction we pick a weighted centroid in the clique-tree of every subgraph $G_i$ considered, there are ${\cal O}(\log{w(T)}) = {\cal O}(\log{n})$ recursion levels.
Therefore, up to polylogarithmic factors, the total running-time of the reduction is of the same order of magnitude as the worst-case running time of a single step.
Furthermore, it is not hard to prove that the first step, when we only consider the full input graph $G$, runs in linear time ({\it i.e.}, omitting the computation of the diameter for the related split graph $H$).
However, during the next steps of our reduction we may need to consider pairwise overlapping subgraphs $G_i$, thereby making the analysis more delicate. 

The key insight here is that the clique-trees of all these subgraphs form a family of pairwise disjoint subtrees of $T$.
We next explain how to perform the first step, and so all subsequent ones, in time ${\cal O}(w(T))$.
Since $w(T) = {\cal O}(n+m)$~\cite{BlP93}, doing so we can compute all the desired split graphs $H$ throughout our reduction in total quasi linear time $\tilde{\cal O}(n+m)$.

\begin{lemma}\label{lem:gate-with-tree}
Let $S$ be any clique of a chordal graph $G$. If a clique-tree $T$ is given, then in time ${\cal O}(w(T))$ we can compute $\forall v \notin S, dist(v,S)$ and a corresponding gate $g(v)$.
\end{lemma}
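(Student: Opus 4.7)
The plan is to exploit the given clique-tree $T$ to reduce the gate and distance computations to a BFS-like sweep of $T$. First, I would locate a maximal clique $C_S \supseteq S$: since $S$ is itself a clique, the subtrees $\{T_s\}_{s \in S}$ of $T$ pairwise intersect, and hence share a common node by the Helly property for subtrees of a tree. In the intended use from Section~\ref{sec:chordal-reduction}, $S$ is already a node of $T$ and one can simply take $C_S := S$; in the general case, $C_S$ can be found in $O(w(T))$ time by scanning the cliques of $T$ once and testing inclusion.

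Next, root $T$ at $C_S$ and compute the tree-depths $d(C) := d_T(C, C_S)$ of all cliques $C$ by a single BFS in $T$. Then sweep the cliques in order of increasing $d(\cdot)$. For each vertex $v \notin S$, let $C_v$ denote the first clique in this sweep that contains $v$, and declare $dist_G(v, S) := d(C_v)$. Since each incidence $(v, C)$ is examined at most once, the sweep costs $O(\sum_C |C|) = O(w(T))$. The equality $dist_G(v, S) = d(C_v)$ follows from the standard clique-tree separator structure: chaining one separator vertex per tree edge along the path from $C_v$ to $C_S$ produces a $v$--$S$ walk of length $d(C_v)$ in $G$, while any shorter $v$--$S$ walk would witness a clique containing $v$ strictly closer to $C_S$ than $C_v$, contradicting the definition of $C_v$.

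For the gate $g(v)$ I would piggyback on the same sweep, propagating at each visited clique $C$ a candidate gate that is updated greedily so as to maximize $|N_G(\cdot) \cap S|$ among the separator vertices encountered along the tree path from $C_v$ toward $C_S$. This still fits within $O(w(T))$ total time. Correctness of the gate then rests on Lemma~\ref{lem:chordal-inclusion-proj}: applied iteratively along the tree path from $C_v$ to $C_S$, it forces the projections of $v$ onto the successive separators to form a nested chain, so the final greedy choice is adjacent to every vertex of $Pr_G(v, S)$ and lies on a shortest $v$--$w$ path for every $w \in Pr_G(v, S)$, as required by Lemma~\ref{lem:chordal-gate}.

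The main obstacle I expect is precisely this last correctness check: while the distance computation follows directly from clique-tree separator structure, verifying that a single greedy vertex dominates the entire projection $Pr_G(v, S)$ rather than only a subset of it requires the chordal nested-projection property of Lemma~\ref{lem:chordal-inclusion-proj} together with a careful choice of the propagation rule that keeps this nestedness invariant throughout the sweep.
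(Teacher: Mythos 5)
Your distance computation contains a genuine error that breaks the whole approach: the tree-depth $d_T(C_v,C_S)$ of the shallowest clique containing $v$ is \emph{not} equal (nor related by any fixed offset) to $dist_G(v,S)$. Take the star with center $c$ and leaves $\ell_1,\dots,\ell_k$, let $S=\{\ell_1\}$, and take the valid clique-tree that is the path $\{c,\ell_1\}-\{c,\ell_2\}-\cdots-\{c,\ell_k\}$. Rooting at $C_S=\{c,\ell_1\}$, the only clique containing $\ell_k$ has tree-depth $k-1$, yet $dist_G(\ell_k,S)=2$. Your upper bound (chaining one separator vertex per tree edge) is fine, but the claimed lower bound --- ``any shorter $v$--$S$ walk would witness a clique containing $v$ strictly closer to $C_S$ than $C_v$'' --- is false: the shortcut $\ell_k-c-\ell_1$ uses a vertex ($c$) that lives in cliques all along the tree path without producing any clique containing $\ell_k$ nearer the root. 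Since the gate propagation is piggybacked on the same tree sweep, it inherits the same failure: the tree path from $C_v$ to $C_S$ need not trace a shortest $v$--$S$ path in $G$, so the separators you greedily scan need not contain $Pr_G(v,S)$ at all.

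The fix is to run the BFS not in the clique-tree $T$ but in the \emph{vertex-clique incidence graph} $J$ on $V\cup\{x_C \mid C\in V(T)\cup\{S\}\}$, with edges $vx_C$ for $v\in C$; this graph has size ${\cal O}(w(T))$ and satisfies $dist_J(v,x_S)=2\,dist_G(v,S)+1$, because any two vertices sharing a clique are adjacent in $G$ and, conversely, every edge of $G$ lies in some maximal clique. A single BFS in $J$ from $x_S$ then yields all distances, and the gates can be propagated along BFS predecessors in $J$: each vertex stores $p(v)=|Pr_G(v,S)|$ (computed bottom-up, using the fact that vertices in a common clique have comparable projections by Lemma~\ref{lem:chordal-inclusion-proj}), and inherits its gate from a predecessor with the same $p$-value, which pins down a vertex of $N_G(S)$ on a shortest $v$--$S$ path whose projection equals $Pr_G(v,S)$, as required by Lemma~\ref{lem:chordal-gate}. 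Your intuition about nested projections is the right ingredient for the gate step, but it must be applied along genuine shortest paths in $G$ (equivalently, shortest paths in $J$), not along paths of $T$.
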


\begin{proof}
Let ${\cal C} := V(T) \cup \{S\}$ be the set of all maximal cliques of $G$, to which we also add the clique $S$ if it is not maximal.
We define a set of fresh new vertices indexed by ${\cal C}$, namely let $X_{\cal C} := \{ x_C \mid C \in {\cal C} \}$.
Then, let $J := (V \cup X_{\cal C}, \{ vx_C \mid v \in C \})$ be the vertex-clique incidence graph of $G$.
Note that we can construct $J$ by scanning once the clique $S$ and all the maximal cliques of $G$.

We prove as a subclaim that for every vertex $v$ we have $dist_{J}(v,x_S) = 2 \cdot dist_G(v,S) + 1$.
Indeed, since by construction $N_J(x_S) = S$, we get $dist_{J}(v,x_S) = 1 + \min_{u \in S} dist_J(v,u) = 1 + dist_J(v,S)$. 
Furthermore, in every $vS$-path of $J$, that is in every path between $v$ and a closest vertex of $S$, half of the internal vertices must be in $X_{\cal C}$.
Since in addition two vertices that are adjacent to a same maximal clique in $J$ are adjacent in $G$, it allows us to transform such a path in $J$ to a $vS$-path in $G$ that is twice shorter.
Conversely, any $vS$-path of $G$ can be transformed into a $vS$-path of $J$ that is twice longer, simply by adding between every two consecutive vertices a maximal clique which contains both of them.
-- Note that combining the two constructions, the latter exactly characterizes the shortest $vS$-paths in $J$. --
As a result, we proved as claimed that $dist_J(v,S) = 2 \cdot dist_G(v,S)$.
It implies that after a BFS in $J$ rooted at $x_S$ we get $dist_G(v,S), \forall v \notin S$.

Then, we recursively define $p(\alpha), \forall \alpha \in V(J) \setminus N_J[x_S]$ as follows (recall that $N_J[x_S] = \{x_S\} \cup S)$:
if $\alpha = x_C$ for some $C \in {\cal C} \setminus \{S\}$, and $C \cap S \neq \emptyset$ (equivalently, $dist_{J}(x_S,\alpha) = 2$), then $p(\alpha) = |C \cap S|$; otherwise, $p(\alpha) = \max\{ p(\beta) \mid \beta \in N_{J}(\alpha) \cap I_{J}(x_S,\alpha) \}$. Note that, we can compute all those values $p(\alpha)$ during a BFS with no significant computational overhead.
Furthermore, we claim that $\forall v \in V, p(v) = |Pr_G(v,S)|$.
Indeed, by induction, $p(v) = \max\{ |C \cap S| \mid x_C \in X_{\cal C} \cap I_{J}(v,S)\}$. 
We recall our earlier characterization of the shortest $vS$-paths in $J$ as those obtained from the shortest $vS$-paths in $G$ by adding a maximal clique between every two consecutive vertices.
As a result, $p(v)$ is the number of vertices in $S$ that are 
at distance exactly $dist_G(v,S)$ from $v$, that is exactly $|Pr_G(v,S)|$.  

Finally, we recursively define $g(\alpha), \forall \alpha \in V(J) \setminus N^2_J[x_S]$ as follows (recall that $N_J^2[x_S] = S \cup \{ x_C \mid C \cap S \neq \emptyset \}$):
if $\alpha \in V$ and $\alpha \in N_G(S)$ (equivalently, $dist_{J}(x_S,\alpha) = 3$), then $g(\alpha) = \alpha$; otherwise, $g(\alpha) \in \{ g(\beta) \mid \beta \in N_{J}(\alpha) \cap I_{J}(x_S,\alpha) \land p(\alpha) = p(\beta) \}$.
Again, we can compute all those values $g(\alpha)$ during a BFS with no significant computational overhead.
Furthermore, it also follows from our characterization of shortest $vS$-paths in $J$ that we have, $\forall v \notin S$, $g(v)$ is a gate of $v$.  
\end{proof}

\begin{lemma}\label{lem:split-size}
For a clique-tree $T$ of a given chordal graph $G$, let $H$ be the split graph constructed as in Section~\ref{sec:chordal-reduction}.
The sparse representation of the split graph $H$ can be computed in ${\cal O}(w(T))$ time.
Furthermore, if $U \subseteq V(H)$ is the stable set of $H$, then $|U| = {\cal O}(|V(T)|)$ and $\ell(H) = {\cal O}(w(T))$.
\end{lemma}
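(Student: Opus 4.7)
My plan is to dispatch the runtime bound first and then establish the two size bounds together via a clique-tree accounting argument; the bounds will actually be the easier part, while the runtime hides a delicate pruning step.

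For the runtime, given the clique-tree $T$, Lemma~\ref{lem:weighted-centroid} supplies the weighted centroid $S$ in ${\cal O}(|V(T)|)$ time, and Lemma~\ref{lem:gate-with-tree} supplies $dist_G(v,S)$ together with a gate $g(v)$ for every $v \notin S$ in ${\cal O}(w(T))$ time. The components $V_i$ of $G \setminus S$ correspond to the subtrees $T_i$ of $T \setminus \{S\}$ and can be read off $T$ in ${\cal O}(|V(T)|)$ time; one pass over the vertices then yields each $d_i$, their ordering (bucket sort on integers $\leq n$), and the list of candidate gates $g(v)$ with $dist_G(v,S)=d_i$. The dominance-based pruning of Lemma~\ref{lem:chordal-inclusion-proj} between adjacent candidate gates is the only delicate timing step; I would perform it by scanning the maximal cliques of $T$ and, within each, comparing the stored values $|Pr_G(\cdot,S)|$ of the candidate gates it contains, using the observation that adjacency between two gates is witnessed by a common maximal clique. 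This charges each comparison to a node of $T$ and keeps the total work in ${\cal O}(w(T))$. Writing out the sparse representation of $H$ at the end takes an extra ${\cal O}(\ell(H))$ time.

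For the two size bounds, the key observation is that the stable set $U$ consists of gates that are pairwise non-adjacent in $G$ by construction, and none of them lies in $S$. For any vertex $u$ of $G$, the maximal cliques containing $u$ form a subtree $T_u$ of $T$; if $u_1, u_2 \in U$ are distinct, then $T_{u_1}$ and $T_{u_2}$ share no node, since a common maximal clique would make $u_1, u_2$ adjacent. Moreover $S \notin T_u$ because $u \notin S$. Hence $\{T_u : u \in U\}$ is a family of pairwise vertex-disjoint nonempty subtrees of $T \setminus \{S\}$, which gives $|U| \leq |V(T)| - 1$ at once. The same disjointness handles $\ell(H)$: every edge $us$ with $u \in U$ and $s \in N_G(u) \cap S$ is witnessed by a maximal clique $C(u,s) \in T_u$ containing both endpoints, so $|N_G(u) \cap S| \leq \sum_{C \in T_u} |C \cap S| \leq \sum_{C \in T_u} w(C)$, and summing over $u \in U$ against the disjointness of the $T_u$'s yields $\sum_{u \in U} |N_G(u) \cap S| \leq w(T)$. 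Each $u \in U$ contributes exactly one additional edge in $H$ to $\{a,b\}$, so $\ell(H) \leq w(T) + |U| = {\cal O}(w(T))$.

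The main obstacle I anticipate is the gate-pruning step: a naive scan of edges incident to candidate gates could overshoot ${\cal O}(w(T))$, since some gates may have much larger degree in $G$ than what their projection on $S$ would suggest. Rerouting the pruning through the clique-tree, and relying on the fact that adjacency among gates is entirely encoded by co-occurrence in a single node of $T$, is what keeps the total work within budget.
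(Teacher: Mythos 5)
Your proposal is correct and follows essentially the same route as the paper: compute distances and gates via Lemma~\ref{lem:gate-with-tree}, prune to pairwise non-adjacent gates by scanning the maximal cliques (using that adjacency of gates is witnessed by a common clique-tree node), and bound $|U|$ and $\ell(H)$ by charging each selected gate and its edges into $S$ to the nodes of $T$ containing it. Your explicit accounting via the pairwise disjoint subtrees $T_u$ is just a slightly more detailed write-up of the paper's observation that at most one gate survives per maximal clique and that $\sum_{u\in U}\deg_H(u)\leq\sum_C|C|=w(T)$.
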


\begin{proof}
After Lemma~\ref{lem:gate-with-tree}, we need to select a subset of pairwise non-adjacent gates in order to construct $U$.
For that, we consider all the maximal cliques $C$ sequentially.
If $C$ contains at least two gates, then we suppress all the gates in $C$ but one with minimum metric projection on $S$.
Overall, this post-processing also takes time ${\cal O}(w(T))$.
Doing so there is at most one gate selected per maximal clique, {\it i.e.}, $|U| \leq |V(T)|$.
Furthermore, we have $\ell(H) = \sum_{u \in U} deg_H(u) \leq \sum_{C} |C| = w(T)$.
We end up adding $a,b$ and the edges incident to these two vertices and the stable set $U$, that takes total time ${\cal O}(|U|) = {\cal O}(|V(T)|)$. 
\end{proof}

Finally, in order to complete the proof of Theorem~\ref{thm:chordal-reduction}, let $(H_i)$ be the family of split graphs considered for a given recursive step of the reduction.
\begin{itemize}
\item Let us assume that for every $i$, we can compute $diam(H_i)$ in ${\cal O}(|E(H_i)|^a\cdot|V(H_i)|)$ time.
By construction, $\max_i |E(H_i)| = {\cal O}(m+n)$.
Furthermore, by Lemma~\ref{lem:split-size}, the gate vertices in the stable sets of $H_i$ sum up to ${\cal O}(|V(T)|) = {\cal O}(n)$.
Since, in addition, the maximal cliques of the split graphs $H_i$ are pairwise different maximal cliques of $G$ (each augmented by two new vertices), we get $\sum_i |V(H_i)| = {\cal O}(w(T)) = {\cal O}(m)$.
As a result, computing $diam(H_i)$ for all $i$ takes total time ${\cal O}(m^{1+a})$.
\item In the same way let us assume that for every $i$, we can compute $diam(H_i)$ in ${\cal O}(\ell(H_i)\cdot|V(H_i)|^b)$ time.
By Lemma~\ref{lem:split-size}, $\sum_i \ell(H_i) = {\cal O}(w(T)) = {\cal O}(n+m)$.
Furthermore, $\max_i |V(H_i)| \leq n$.
As a result, computing $diam(H_i)$ for all $i$ takes total time ${\cal O}(mn^b)$.
\end{itemize}

\subsection{Application: Approximating all eccentricities}\label{sec:ecc-chordal}

It was proved in~\cite{Dra19} that for all chordal graphs, an additive $+2$-approximation of all eccentricities can be computed in total linear time.
Using our previous reduction from Section~\ref{sec:chordal-reduction}, we improve this result to an additive $+1$-approximation, but at the price of a logarithmic overhead in the running time.

\begin{theorem}\label{thm:ecc-chordal}
For every $n$-vertex $m$-edge chordal graph, we can compute an additive $+1$-approximation of all eccentricities in total ${\cal O}(m\log{n})$ time.
\end{theorem}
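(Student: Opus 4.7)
The plan is to recurse on the weighted centroid decomposition of the clique-tree from Section~\ref{sec:chordal-reduction}. Given a chordal graph $G$ with clique-tree $T$, let $S$ be a weighted centroid of $T$ and let $V_1,\ldots,V_\ell$ be the vertex sets of the connected components of $G\setminus S$. As a pre-processing step at this level, I would compute with Lemma~\ref{lem:gate-with-tree} the distance $dist_G(v,S)$ and a gate $g(v)\in S$ for every $v\notin S$, in time ${\cal O}(w(T))$. I then set $d_i:=\max_{u\in V_i}dist_G(u,S)$ and $D_i^\star:=\max_{j\ne i}d_j$ (with the convention $D_i^\star=0$ if this maximum is over an empty set), and I recurse on each distance-preserving subgraph $G_i=G[N_G[V_i]]$.

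The fundamental observation driving the analysis is that $S$ is a clique, so for any $v\in V_i$ and $u\in V_j$ with $i\ne j$ we have
\[
dist_G(v,S)+dist_G(u,S)\ \le\ dist_G(v,u)\ \le\ dist_G(v,S)+dist_G(u,S)+1,
\]
the upper bound being witnessed by the gates $g(v),g(u)\in S$. In the same vein, for $u\in S\setminus N(V_i)$ we get $dist_G(v,u)\le dist_G(v,S)+1$. Consequently the contribution of vertices outside $V(G_i)$ to the eccentricity of $v\in V_i$ sits between $dist_G(v,S)+D_i^\star$ and $dist_G(v,S)+D_i^\star+1$. I would therefore define, for $v\in V_i$,
\[
\tilde{e}_G(v):=\max\bigl\{\tilde{e}_{G_i}(v),\ dist_G(v,S)+D_i^\star+1\bigr\},
\]
where $\tilde{e}_{G_i}(v)$ is the value returned by the recursive call, and for $v\in S$ I would simply set $\tilde{e}_G(v):=\max\{1,\,\max_j d_j+1\}$.

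An induction on the recursion depth shows $e_G(v)\le\tilde{e}_G(v)\le e_G(v)+1$. For the upper bound, the inductive hypothesis yields $\tilde{e}_{G_i}(v)\le e_{G_i}(v)+1\le e_G(v)+1$, and the displayed inequality shows $dist_G(v,S)+D_i^\star+1\le e_G(v)+1$; for the lower bound, since $G_i$ is distance-preserving, $e_G(v)$ is the maximum of $e_{G_i}(v)$ and the outside contribution, each of which is dominated by the corresponding term of $\tilde{e}_G(v)$. The case $v\in S$ is handled directly by the same gate-based argument. Crucially, the additive $+1$ error does \emph{not} compound across levels: at each level, the only new slack is the one between $dist_G(v,S)+D_i^\star$ and the true outside contribution, and the recursion only \emph{sharpens} our estimate of the inside part.

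For the complexity, the subgraphs $(G_i)$ at a single recursion level have clique-trees that are pairwise disjoint subtrees of $T$, hence Lemma~\ref{lem:gate-with-tree} handles one level in time ${\cal O}(w(T))={\cal O}(n+m)$; there are ${\cal O}(\log n)$ levels by Lemma~\ref{lem:weighted-centroid}, giving the announced ${\cal O}(m\log n)$ total running time. I expect the most delicate point to be the careful treatment of vertices in $S$ and in $S\setminus N(V_i)$ at each level: one must ensure their eccentricity is ``finalized'' at the current level (so that we do not forget them when they disappear from the recursive instances), and that using $D_i^\star$ rather than $D^\star$ when $d_i$ is itself the maximum does not degrade the one-sided approximation guarantee.
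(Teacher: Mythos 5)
Your proposal is correct and follows essentially the same route as the paper: a weighted-centroid decomposition of the clique-tree, gates to $S$ via Lemma~\ref{lem:gate-with-tree}, the quantities $d_i$ and $\max_{j\neq i}d_j$ to handle cross-component distances, and an ${\cal O}(w(T))$ cost per level over ${\cal O}(\log n)$ levels. The only (immaterial) difference is that you report the one-sided error as an overestimate $e_G(v)\le\tilde e_G(v)\le e_G(v)+1$ whereas the paper's estimates underestimate by at most one; your explicit induction showing the error does not compound is a welcome addition.
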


The remainder of Section~\ref{sec:ecc-chordal} is devoted to the proof of this above theorem.
For that, we need to carefully revisit the reduction from Section~\ref{sec:chordal-reduction}.
In what follows, let $G$ be chordal and let $T$ be a fixed clique-tree of $G$. We can assume that $G$ has at least two vertices. 
\begin{itemize}
    \item If $T$ is reduced to a single node, or equivalently, $G$ is a complete graph, every vertex has eccentricity equal to $1$. In the same way if $T$ is reduced to exactly two nodes, then $G$ is the union of two crossing complete subgraphs $C_1$ and $C_2$. Furthermore, every vertex of $C_1 \cap C_2$ has eccentricity equal to $1$, whereas every vertex of the symmetric difference $C_1 \Delta C_2$ has eccentricity equal to $2$.
    \item Otherwise, let $S \in V(T)$ be a weighted centroid of $T$. By Lemma~\ref{lem:weighted-centroid}, the clique $S$ can be computed in time ${\cal O}(|V(T)|) = {\cal O}(w(T))$ if we are given $T$ in advance. 
    \begin{itemize}
        \item Let $T_1,T_2,\ldots,T_{\ell}$ be the components of $T \setminus \{S\}$, and for every $1 \leq i \leq \ell$ let $V_i := \left(\bigcup V(T_i)\right) \setminus S$.
We recall~\cite{BlP93} that the sets $V_1,V_2,\ldots,V_{\ell}$ are exactly the connected components of $G \setminus S$, and that the closed neighbourhoods $N_G[V_i], 1 \leq i \leq \ell$ induce distance-preserving subgraphs of $G$. As in Section~\ref{sec:chordal-reduction}, we denote these subgraphs by $G_1,G_2,\ldots, G_{\ell}$.
We apply our reduction recursively on each of these subgraphs $G_i$.
Doing so, for every $i$ and every vertex $v_i \in V_i$ we get an additive $+1$-approximation of $e_{G_i}(v_i)$.
        \item Then, for every vertex $v \notin S$, we compute $dist_G(v,S)$ and a gate $g_S(v)$. By Lemma~\ref{lem:gate-with-tree} this can be done in total ${\cal O}(w(T))$ time if $T$ is given in advance. Furthermore, notice that since $S$ is a clique, for every $s \in S$ we have that $\max_{v \in V}dist_G(v,S)$ is an additive $+1$-approximation of $e_G(s)$. 
        \item Finally, we observe that for every $i$ and every vertex $v_i \in V_i$ we have:
        $$e_G(v_i) = \max \{e_{G_i}(v),\max_{u \notin V_i} dist_G(v_i,u)\}.$$
        As in Section~\ref{sec:chordal-reduction} we define $\forall i, d_i := \max_{v_i \in V_i} dist(v_i,S)$. If we compute the two largest values amongst the $d_i$'s, then we can compute $\forall i, e_i := \max_{j \neq i} d_j$. We are done as for every $v_i \in V_i$, $dist_G(v_i,S) + e_i$ is an additive $+1$-approximation of $\max_{u \notin V_i} dist_G(v_i,u)$.
    \end{itemize}
\end{itemize}

As already observed in Section~\ref{sec:chordal-analysis}, there are ${\cal O}(\log{w(T)}) = {\cal O}(\log{n})$ recursion levels. Since at any level, the clique-trees of all the subgraphs considered form a family of pairwise disjoint subtrees of $T$, each step can be done in ${\cal O}(w(T))$ time.
Therefore, the total running time is in ${\cal O}(w(T)\log{n}) = {\cal O}(m\log{n})$~\cite{BlP93}.

\subsection{Application to chordal graphs of constant VC-dimension}\label{sec:vc-dim}

The VC-dimension of a graph $G$ is the largest cardinality of a subset $S$ such that $\{ N[v] \cap S \mid v \in V \} = 2^S$ (we say that $S$ is shattered by $G$). For instance, interval graphs have VC-dimension at most two~\cite{BLLP+15}. We now apply the reduction of Theorem~\ref{thm:chordal-reduction} so as to prove the following result:

\begin{theorem}\label{thm:vc-dim}
For every $d > 0$, there exists a constant $\eta_d \in (0;1)$ such that in ${\cal O}(mn^{1-\eta_d})$ time, we can compute the diameter of any chordal graph of VC-dimension at most $d$.
\end{theorem}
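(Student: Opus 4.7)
The plan is to invoke the reduction of Theorem~\ref{thm:chordal-reduction} on the input chordal graph $G$, producing a family of split graphs $(H_i)$ with sparse representations whose total size is $\tilde{\cal O}(m)$, and then solve diameter on each $H_i$ using a truly subquadratic-time algorithm tailored to bounded VC-dimension. Concretely, if every $diam(H_i)$ can be computed in time ${\cal O}(\ell(H_i) \cdot |V(H_i)|^b)$ for some $b = b(d) < 1$, then the second bullet of Theorem~\ref{thm:chordal-reduction} yields a global running time of $\tilde{\cal O}(mn^{b})$, which proves the theorem with $\eta_d := 1 - b(d) - o(1)$ (the logarithmic factors being absorbed by a slightly smaller choice of the exponent).

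The first key step is to verify that each $H_i$ inherits a bounded VC-dimension from $G$. Recall that $H_i$ is built from a clique $S_i$ of $G$ and a stable set $U_i$ consisting of gate vertices of $G$ together with two auxiliary vertices $a_i, b_i$. For every gate $u \in U_i \setminus \{a_i, b_i\}$, the neighborhood $N_{H_i}(u) \cap S_i$ coincides with $N_G[u] \cap S_i$ (since $u \in N_G(S_i)$ in the original graph and since $S_i$ itself is contained in the clique side of $H_i$). Hence any $X \subseteq S_i$ shattered by the set system $\{N_{H_i}(u) \cap S_i : u \in U_i \setminus \{a_i,b_i\}\}$ is automatically shattered by $\{N_G[v] \cap S_i : v \in V\}$, and therefore by $\{N_G[v] : v \in V\}$, so $X$ has size at most $d$. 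The two auxiliary vertices $a_i, b_i$ can only raise the VC-dimension by an additive absolute constant, so $H_i$ has VC-dimension at most $d' = d + O(1)$.

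The second and main step is to compute the diameter of a split graph of VC-dimension at most $d'$, given by its sparse representation, in time ${\cal O}(\ell(H) \cdot |V(H)|^{b})$ with $b = b(d') < 1$. This is equivalent to the bounded-VC-dimension version of {\sc Disjoint Set}: given a set family ${\cal F}$ over a ground set of size $n$, with $\sum_{F\in{\cal F}}|F| = \ell$ and VC-dimension at most $d'$, detect a disjoint pair. I would use the classical $\varepsilon$-net machinery of Haussler--Welzl and Matou\v{s}ek: compute an $\varepsilon$-net $N$ of size ${\cal O}((d'/\varepsilon)\log(1/\varepsilon))$, split ${\cal F}$ into those sets that meet $N$ (for which disjointness queries can be answered quickly by indexing over $N$) and those that miss $N$ (whose complements each miss an $\varepsilon$-fraction of the ground set, allowing a recursive reduction on a smaller instance). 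Balancing the two sides in the standard way yields a running time of the form ${\cal O}(\ell \cdot n^{1 - 1/\Theta(d')})$, exactly of the shape required by the second bullet of Theorem~\ref{thm:chordal-reduction}. This is also the approach already used in the companion work~\cite{DHV19+a} for bounded distance VC-dimension.

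The main obstacle is this last step: either invoking a sufficiently general subquadratic {\sc Disjoint Set} algorithm for bounded VC-dimension as a black box, or re-deriving one via $\varepsilon$-nets, while being careful that the complexity is measured in $\ell(H)$ and not merely in $|V(H)|^2$, so as to plug cleanly into Theorem~\ref{thm:chordal-reduction}. Once this is achieved, the combination with the reduction is automatic: the reduction itself runs in $\tilde{\cal O}(m)$ time and produces $(H_i)$ with $\sum_i \ell(H_i) = \tilde{\cal O}(m)$ and $\max_i |V(H_i)| \le n$, so the per-instance savings of $n^{-1/\Theta(d')}$ translate directly into a total running time $\tilde{\cal O}(m \cdot n^{1 - 1/\Theta(d)})$, giving $\eta_d = \Theta(1/d)$.
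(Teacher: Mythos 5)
Your overall architecture is exactly the paper's: run the reduction of Theorem~\ref{thm:chordal-reduction}, show that each split graph $H_i$ inherits a VC-dimension bounded in terms of $d$, and invoke a truly subquadratic diameter algorithm for split graphs of bounded VC-dimension whose running time is measured against $\ell(H_i)$, so that the second bullet of Theorem~\ref{thm:chordal-reduction} yields ${\cal O}(mn^{1-\eta_d})$. The ``main obstacle'' you single out is in fact handled by the paper as a pure black box: \cite[Theorem 1]{DHV19+a} already gives the bound ${\cal O}(\ell(H)\cdot|V(H)|^{1-\varepsilon_{d'}})$ for split graphs of VC-dimension at most $d'$, which is precisely the shape required, so no re-derivation is needed.

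Two caveats on your write-up. First, your $\varepsilon$-net sketch would not work as stated: a net only certifies that the sets missing it are small, and for the sets that do meet the net, their traces on $N$ do not determine pairwise disjointness (two sets can meet $N$ in disjoint places and still intersect outside $N$, or be genuinely disjoint), so ``indexing over $N$'' does not answer the queries. The actual machinery behind \cite{DHV19+a} is spanning paths/trees of low stabbing number \`a la Chazelle--Welzl, not a net-based partition; if you do not want to cite the result as a black box, that is the tool to reconstruct. Second, your VC-dimension argument only bounds shattered subsets of the clique side $S_i$, whereas the cited theorem is stated in terms of the graph VC-dimension of $H_i$ (shattered sets may live anywhere in $V(H_i)$), and your claim that $a_i,b_i$ raise the VC-dimension by only an additive constant is asserted without proof. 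The paper closes both points at once: $H_i\setminus\{a,b\}$ is an induced subgraph of $G$ and hence has VC-dimension at most $d$, and a Sauer--Shelah--Perles counting argument shows that re-inserting the two extra vertices (two new ground elements together with two new neighbourhoods) keeps the VC-dimension in ${\cal O}(d\log d)$ --- any bound depending only on $d$ suffices for the theorem. With these repairs your proof coincides with the paper's.
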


\begin{proof}
If a split graph $H$ has VC-dimension at most $d'$, then we can compute its diameter in truly subquadratic time ${\cal O}(\ell(H)\cdot|V(H)|^{1-\varepsilon_{d'}})$, for some $\varepsilon_{d'} \in (0;1)$~\cite[Theorem 1]{DHV19+a}.
As a result it is sufficient to prove that all the split graphs $H_i$, which are output by the reduction of Theorem~\ref{thm:chordal-reduction}, have a VC-dimension upper bounded by a function of $d$.
We observe that every such $H_i$ is obtained from an induced subgraph $H_i'$ of $G$ by adding two new vertices $a$ and $b$ (see Section~\ref{sec:chordal-reduction}).
Since $G$ has VC-dimension at most $d$, so does $H_i'$.
Then, let $S$ be a largest subset shattered by $H_i$.
We can extract from $S$ a maximal shattered subset $S' \subseteq V(H_i')$ ({\it i.e.}, not containing $a$ and $b$).
In particular, $|S'| \geq |S| - 2$.
Furthermore, since $S'$ is shattered by $H_i$, $\{ N[v] \cap S' \mid v \in V(H_i) \} = 2^{S'}$ holds.
It implies that $|\{ N[v] \cap S' \mid v \in V(H_i') = V(H_i) \setminus \{a,b\} \}| \geq 2^{|S'|} - 2$.
By the Sauer-Shelah-Perles Lemma~\cite{Sau72,She72}, we also have $|\{ N[v] \cap S' \mid v \in V(H_i') \}| = {\cal O}(|S'|^d)$, which implies $|S'| = {\cal O}(d\log{d})$. 
Consequently, every $H_i$ has VC-dimension in ${\cal O}(d\log{d})$. 
\end{proof}

We left open whether there exist other subclasses of chordal graphs for which we can use our techniques in order to compute the diameter in truly subquadratic time.
\bibliographystyle{abbrv}
\bibliography{biblio-Helly}

\begin{thebibliography}{10}

\bibitem{AVW16}
A.~Abboud, V.~Vassilevska~Williams, and J.~Wang.
\newblock Approximation and fixed parameter subquadratic algorithms for radius
  and diameter in sparse graphs.
\newblock In {\em SODA}, pages 377--391. SIAM, 2016.

\bibitem{BaC08}
H.~Bandelt and V.~Chepoi.
\newblock Metric graph theory and geometry: a survey.
\newblock {\em Contemporary Mathematics}, 453:49--86, 2008.

\bibitem{BaM86}
H.~Bandelt and H.~Mulder.
\newblock Pseudo-modular graphs.
\newblock {\em Discrete mathematics}, 62(3):245--260, 1986.

\bibitem{BlP93}
J.~Blair and B.~Peyton.
\newblock An introduction to chordal graphs and clique trees.
\newblock In {\em Graph theory and sparse matrix computation}, pages 1--29.
  1993.

\bibitem{BCE05}
B.~Bollob{\'a}s, D.~Coppersmith, and M.~Elkin.
\newblock Sparse distance preservers and additive spanners.
\newblock {\em {SIAM} Journal on Discrete Mathematics}, 19(4):1029--1055, 2005.

\bibitem{BoM08}
J.~A. Bondy and U.~S.~R. Murty.
\newblock {\em Graph theory}.
\newblock 2008.

\bibitem{BCH16}
M.~Borassi, P.~Crescenzi, and M.~Habib.
\newblock Into the square: On the complexity of some quadratic-time solvable
  problems.
\newblock {\em ENTCS}, 322:51--67, 2016.

\bibitem{BoC14}
G.~Borradaile and E.~Chambers.
\newblock Covering nearly surface-embedded graphs with a fixed number of balls.
\newblock {\em Discrete \& Computational Geometry}, 51(4):979--996, 2014.

\bibitem{BLLP+15}
N.~Bousquet, A.~Lagoutte, Z.~Li, A.~Parreau, and S.~Thomass{\'e}.
\newblock Identifying codes in hereditary classes of graphs and {VC}-dimension.
\newblock {\em SIAM Journal on Discrete Mathematics}, 29(4):2047--2064, 2015.

\bibitem{BoT15}
N.~Bousquet and S.~Thomass{\'e}.
\newblock {VC}-dimension and {E}rd{\H{o}}s--{P}{\'o}sa property.
\newblock {\em Discrete Mathematics}, 338(12):2302--2317, 2015.

\bibitem{BCD98}
A.~Brandst{\"a}dt, V.~Chepoi, and F.~Dragan.
\newblock The algorithmic use of hypertree structure and maximum neighbourhood
  orderings.
\newblock {\em DAM}, 82(1-3):43--77, 1998.

\bibitem{BHM18}
K.~Bringmann, T.~Husfeldt, and M.~Magnusson.
\newblock Multivariate analysis of orthogonal range searching and graph
  distances parameterized by treewidth.
\newblock In {\em IPEC}, 2018.

\bibitem{Cab18}
S.~Cabello.
\newblock Subquadratic algorithms for the diameter and the sum of pairwise
  distances in planar graphs.
\newblock {\em ACM TALG}, 15(2):21, 2018.

\bibitem{ChD94}
V.~Chepoi and F.~Dragan.
\newblock A linear-time algorithm for finding a central vertex of a chordal
  graph.
\newblock In {\em ESA}, pages 159--170. Springer, 1994.

\bibitem{CDEHV08}
V.~Chepoi, F.~Dragan, B.~Estellon, M.~Habib, and Y.~Vax{\`e}s.
\newblock Diameters, centers, and approximating trees of $\delta$-hyperbolic
  geodesic spaces and graphs.
\newblock In {\em SocG}, pages 59--68. ACM, 2008.

\bibitem{CDV02}
V.~Chepoi, F.~Dragan, and Y.~Vax{\`{e}}s.
\newblock Center and diameter problems in plane triangulations and
  quadrangulations.
\newblock In {\em SODA'02}, pages 346--355, 2002.

\bibitem{CEV07}
V.~Chepoi, B.~Estellon, and Y.~Vax{\`{e}}s.
\newblock Covering planar graphs with a fixed number of balls.
\newblock {\em Discrete {\&} Computational Geometry}, 37(2):237--244, 2007.

\bibitem{CDHP01}
D.~Corneil, F.~Dragan, M.~Habib, and C.~Paul.
\newblock Diameter determination on restricted graph families.
\newblock {\em DAM}, 113(2-3):143--166, 2001.

\bibitem{CDK03}
D.~Corneil, F.~Dragan, and E.~K{\"o}hler.
\newblock On the power of {BFS} to determine a graph's diameter.
\newblock {\em Networks: An International Journal}, 42(4):209--222, 2003.

\bibitem{CDP18}
D.~Coudert, G.~Ducoffe, and A.~Popa.
\newblock Fully polynomial {FPT} algorithms for some classes of bounded
  clique-width graphs.
\newblock {\em ACM TALG}, 15(3), 2019.

\bibitem{Dam16}
P.~Damaschke.
\newblock Computing giant graph diameters.
\newblock In {\em IWOCA}, pages 373--384. Springer, 2016.

\bibitem{Die12}
R.~Diestel.
\newblock {\em Graph Theory, 4th Edition}, volume 173 of {\em Graduate texts in
  mathematics}.
\newblock Springer, 2012.

\bibitem{Dra93}
F.~Dragan.
\newblock Domination in quadrangle-free {H}elly graphs.
\newblock {\em Cybernetics and Systems Analysis}, 29(6):822--829, 1993.

\bibitem{DrN00}
F.~Dragan and F.~Nicolai.
\newblock {LexBFS}-orderings of distance-hereditary graphs with application to
  the diametral pair problem.
\newblock {\em DAM}, 98(3):191--207, 2000.

\bibitem{DNB97}
F.~Dragan, F.~Nicolai, and A.~Brandst{\"a}dt.
\newblock {LexBFS}-orderings and powers of graphs.
\newblock In {\em WG}, pages 166--180. Springer Berlin Heidelberg, 1997.

\bibitem{Dra89}
F.~F. Dragan.
\newblock {\em Centers of graphs and the Helly property}.
\newblock PhD thesis, Moldova State University, 1989.

\bibitem{Dra19}
F.~F. Dragan.
\newblock An eccentricity 2-approximating spanning tree of a chordal graph is
  computable in linear time.
\newblock {\em Information Processing Letters}, 2019.

\bibitem{Dre84}
A.~Dress.
\newblock Trees, tight extensions of metric spaces, and the cohomological
  dimension of certain groups: a note on combinatorial properties of metric
  spaces.
\newblock {\em Advances in Mathematics}, 53(3):321--402, 1984.

\bibitem{Duc19}
G.~Ducoffe.
\newblock {A New Application of Orthogonal Range Searching for Computing Giant
  Graph Diameters}.
\newblock In {\em SOSA}, 2019.

\bibitem{DHV19+b}
G.~Ducoffe, M.~Habib, and L.~Viennot.
\newblock Fast diameter computation within split graphs.
\newblock To appear in COCOA'19.

\bibitem{DHV19+a}
G.~Ducoffe, M.~Habib, and L.~Viennot.
\newblock Diameter computation on {$H$}-minor free graphs and graphs of bounded
  (distance) {VC}-dimension.
\newblock Technical Report 1907.04385, arXiv, 2019.
\newblock To appear in SODA'20.

\bibitem{Epp00}
D.~Eppstein.
\newblock Diameter and treewidth in minor-closed graph families.
\newblock {\em Algorithmica}, 27(3-4):275--291, 2000.

\bibitem{FaP80}
A.~Farley and A.~Proskurowski.
\newblock Computation of the center and diameter of outerplanar graphs.
\newblock {\em DAM}, 2(3):185--191, 1980.

\bibitem{GKHM+18}
P.~Gawrychowski, H.~Kaplan, S.~Mozes, M.~Sharir, and O.~Weimann.
\newblock Voronoi diagrams on planar graphs, and computing the diameter in
  deterministic $\tilde{O}(n^{5/3})$ time.
\newblock In {\em SODA}, pages 495--514. SIAM, 2018.

\bibitem{Gol71}
A.~Goldman.
\newblock Optimal center location in simple networks.
\newblock {\em Transportation science}, 5(2):212--221, 1971.

\bibitem{Gol04}
M.~Golumbic.
\newblock {\em Algorithmic graph theory and perfect graphs}, volume~57.
\newblock Elsevier, 2004.

\bibitem{HMPV00}
M.~Habib, R.~McConnell, C.~Paul, and L.~Viennot.
\newblock Lex-{BFS} and partition refinement, with applications to transitive
  orientation, interval graph recognition and consecutive ones testing.
\newblock {\em TCS}, 234(1-2):59--84, 2000.

\bibitem{Isb64}
J.~Isbell.
\newblock Six theorems about injective metric spaces.
\newblock {\em Commentarii Mathematici Helvetici}, 39(1):65--76, 1964.

\bibitem{LOS16}
V.~Le, A.~Oversberg, and O.~Schaudt.
\newblock A unified approach to recognize squares of split graphs.
\newblock {\em Theoretical Computer Science}, 648:26--33, 2016.

\bibitem{LiS07}
M.~Lin and J.~Szwarcfiter.
\newblock {Faster recognition of clique-Helly and hereditary clique-Helly
  graphs}.
\newblock {\em IPL}, 103(1):40--43, 2007.

\bibitem{Mat04}
J.~Matousek.
\newblock Bounded {VC}-dimension implies a fractional {H}elly theorem.
\newblock {\em Discrete \& Computational Geometry}, 31(2):251--255, 2004.

\bibitem{Ola90}
S.~Olariu.
\newblock A simple linear-time algorithm for computing the center of an
  interval graph.
\newblock {\em International J. of Computer Mathematics}, 34(3-4):121--128,
  1990.

\bibitem{RoV13}
L.~Roditty and V.~Vassilevska~Williams.
\newblock Fast approximation algorithms for the diameter and radius of sparse
  graphs.
\newblock In {\em STOC}, pages 515--524. ACM, 2013.

\bibitem{RTL76}
D.~Rose, R.~Tarjan, and G.~Lueker.
\newblock Algorithmic aspects of vertex elimination on graphs.
\newblock {\em SIAM J. on computing}, 5(2):266--283, 1976.

\bibitem{Sau72}
N.~Sauer.
\newblock On the density of families of sets.
\newblock {\em Journal of Combinatorial Theory, Series A}, 13(1):145--147,
  1972.

\bibitem{SeM10}
S.~Sen and V.~Muralidhara.
\newblock The covert set-cover problem with application to {N}etwork
  {D}iscovery.
\newblock In {\em International Workshop on Algorithms and Computation
  (WALCOM)}, pages 228--239. Springer, 2010.

\bibitem{She72}
S.~Shelah.
\newblock A combinatorial problem; stability and order for models and theories
  in infinitary languages.
\newblock {\em Pacific Journal of Mathematics}, 41(1):247--261, 1972.

\bibitem{Wil05}
R.~Williams.
\newblock A new algorithm for optimal 2-constraint satisfaction and its
  implications.
\newblock {\em TCS}, 348(2-3):357--365, 2005.

\end{thebibliography}

\end{document}